\newif\ifexternalizetikz
\newif\ifignore 
\newcommand{\auxproof}[1]{
\ifignore\mbox{}\newline
\textbf{PROOF:} \dotfill\newline
{\it #1}\mbox{}\newline
\textbf{ENDPROOF}\dotfill
\fi}
\newcommand*{\fatten}[1][.4pt]{%
  \textpdfrender{
    TextRenderingMode=FillStroke,
    LineWidth={\dimexpr(#1)\relax},
  }%
}
  \DeclareMathAlphabet{\mathsl}{\encodingdefault}{\rmdefault}{\mddefault}{\sldefault}
  \SetMathAlphabet{\mathsl}{bold}{\encodingdefault}{\rmdefault}{\bfdefault}{\sldefault}
\newcommand{\mathoverlap}[2]{\mathpalette\mathoverlap@{{#1}{#2}}}
\newcommand{\mathoverlap@}[2]{\mathoverlap@@{#1}#2}
\newcommand{\mathoverlap@@}[3]{\ooalign{$\m@th#1#2$\crcr\hidewidth$\m@th#1#3$\hidewidth}}
\newcommand{\NNO}{\mathbb{N}}
\newcommand{\setin}[3]{\{#1\in#2\;|\;#3\}}
\newcommand{\andthen}{\mathrel{\&}}
\newcommand{\ket}[1]{\ensuremath{|{\kern.1em}#1{\kern.1em}\rangle}}
\newcommand{\bigket}[1]{\ensuremath{\big|{\kern.1em}#1{\kern.1em}\big\rangle}}
\newcommand{\bra}[1]{\ensuremath{\langle{\kern.1em}#1{\kern.1em}|}}
\newcommand{\upsum}[1]{\ensuremath{#1{\kern-.6ex}\uparrow}\xspace}
\newcommand{\downsum}[1]{\ensuremath{#1{\kern-.6ex}\downarrow}\xspace}
\newcommand{\drawdelete}{\ensuremath{\mathsl{DD}}}
\newcommand{\Scottint}[1]{[{\kern-.3ex}[\,#1\,]{\kern-.3ex}]}
\newcommand{\Lone}[1]{\ensuremath{\|{\kern.3ex}#1{\kern.3ex}\|}_{1}}
\newcommand{\bigScottint}[1]{\big[{\kern-.3ex}\big[\,#1\,\big]{\kern-.3ex}\big]}
\newcommand{\BigScottint}[1]{\Big[{\kern-.3ex}\Big[\,#1\,\Big]{\kern-.3ex}\Big]}
\newcommand{\likely}{\ensuremath{\mathop{|{\kern-1ex}\equiv}}}
\newcommand{\unlikely}{\ensuremath{\mathbin{\smash{|{\kern-0.4ex}{\underset{\raisebox{0.6ex}{\smash{$\scriptstyle u$}}}{\equiv}}}}}}
\newcommand{\nlikely}{\ensuremath{\mathbin{\smash{|{\kern-0.4ex}{\underset{\raisebox{0.6ex}{\smash{$\scriptstyle n$}}}{\equiv}}}}}}
\newcommand{\concat}{\ensuremath{\mathbin{+{\kern-.5ex}+}}}
\newcommand{\supp}{\ensuremath{\mathsl{supp}}}
\newcommand{\facto}[1]{\ensuremath{#1{\kern-2.5pt}\raisebox{-2.5pt}{\includegraphics[width=0.9em]{Pictures/exclamation}}}}
\newcommand{\sfacto}[1]{\ensuremath{#1{\kern-1.5pt}\raisebox{-1.5pt}{\includegraphics[width=0.6em]{Pictures/exclamation}}}}
\newcommand{\coefm}[1]{\ensuremath{\fatten[0.6pt]{(}#1\fatten[0.6pt]{)}}}
\newcommand{\acc}{\ensuremath{\mathsl{acc}}}
\newcommand{\arr}{\ensuremath{\mathsl{arr}}}
\newcommand{\flrn}{\ensuremath{\mathsl{Flrn}}}
\newcommand{\probproj}{\ensuremath{\mathsl{ppr}}}
\newcommand{\iid}{\ensuremath{\mathsl{iid}}}
\newcommand{\zip}{\ensuremath{\mathsl{zip}}}
\newcommand{\mzip}{\ensuremath{\mathsl{mzip}}}
\newcommand{\multinomial}[1][]{\ensuremath{\mathsl{mn}[#1]}}
\newcommand{\pml}{\ensuremath{\mathsl{pml}}}
\newcommand{\hypergeometric}[1][]{\ensuremath{\mathsl{hg}[#1]}}
\newcommand{\intd}{{\kern.2em}\mathrm{d}{\kern.03em}}
\newcommand{\after}{\mathrel{\circ}}
\newcommand{\klafter}{\mathbin{\mathoverlap{\circ}{\cdot}}}
\newcommand{\idmap}[1][]{\ensuremath{\mathrm{id}_{#1}}}
\newcommand{\Category}[1]{\ensuremath{\mathbf{#1}}\xspace}
\newcommand{\Sets}{\Category{Sets}}
\newcommand{\CMon}{\Category{CMon}}
\newcommand{\Kl}{\mathcal{K}{\kern-.4ex}\ell}
\newcommand{\EM}{\mathcal{E}{\kern-.4ex}\mathcal{M}}
\newcommand{\Pow}{\ensuremath{\mathcal{P}}}
\newcommand{\Mlt}{\ensuremath{\mathcal{M}}}
\newcommand{\natMlt}{\ensuremath{\mathcal{M}}}
\newcommand{\neMlt}{\ensuremath{\Mlt_{*}}}
\newcommand{\Dst}{\ensuremath{\mathcal{D}}}
\DeclareSymbolFont{T1op}{T1}{cmr}{m}{n}
\DeclareMathSymbol{\mathguilsinglleft}{\mathopen}{T1op}{'016}
\DeclareMathSymbol{\mathguilsinglright}{\mathclose}{T1op}{'017}
\newcommand{\klin}[1]{\mathguilsinglleft#1\mathguilsinglright}
\newcommand{\congrightarrow}{\mathrel{\smash{\stackrel{
           \raisebox{.5ex}{$\scriptstyle\cong$}}{
           \raisebox{0ex}[0ex][0ex]{$\rightarrow$}}}}}
\newtheorem{theorem}{\textbf{Theorem}}
\newtheorem{lemma}[theorem]{\textbf{Lemma}}
\newtheorem{proposition}[theorem]{\textbf{Proposition}}
\newtheorem{corollary}[theorem]{\textbf{Corollary}}
\newtheorem{example}[theorem]{Example}
\newtheorem{remark}[theorem]{Remark}
\newenvironment{proof}[1][Proof]%
   { \begin{trivlist}%
     \item[\hskip \labelsep {\bfseries #1}]%
   }%
   { \end{trivlist}%
   }
\newcommand{\QEDbox}{\square}
\newcommand{\QED}{\hspace*{\fill}$\QEDbox$}
\newsavebox\sbpto
\savebox\sbpto{\begin{tikzpicture}[baseline=-2.5pt]
            \filldraw[fill=white,draw=white] circle (1.4pt);
            \filldraw[fill=white,draw=black,line width=0.2pt]circle(2pt);
                \end{tikzpicture}}
\newcommand\chanto{\mathrel{\ooalign{$\to$\cr
      \hfil\!$\usebox\sbpto$\hfil\cr}}}
\begin{document}

\title{From Multisets over Distributions \\ to Distributions over 
   Multisets
}

\author{\IEEEauthorblockN{Bart Jacobs}
\IEEEauthorblockA{Institute for Computing and Information Sciences,
Radboud University Nijmegen, The Netherlands. \\
Web address: \texttt{www.cs.ru.nl/B.Jacobs}  \\
Email: \texttt{bart@cs.ru.nl}
}}

\maketitle

\IEEEpubidadjcol

\begin{abstract}
A well-known challenge in the semantics of programming languages is
how to combine non-determinism and probability. At a technical level,
the problem arises from the fact that there is a no distributive law
between the powerset monad and the distribution monad --- as noticed
some twenty years ago by Plotkin. More recently, it has become clear
that there is a distributive law of the multiset monad over the
distribution monad. This article elaborates the details of this
distributivity and shows that there is a rich underlying theory
relating multisets and probability distributions. It is shown that the
new distributive law, called \emph{parallel multinomial law}, can be
defined in (at least) four equivalent ways. It involves putting
multinomial distributions in parallel and commutes with hypergeometric
distributions. Further, it is shown that this distributive law
commutes with a new form of zipping for multisets. Abstractly, this
can be described in terms of monoidal structure for a fixed-size
multiset functor, when lifted to the Kleisli category of the
distribution monad. Concretely, an application of the theory to
sampling semantics is included.
\end{abstract}


\section{Introduction}\label{IntroSec}

Monads are used in the semantics of programming languages to classify
different notions of computation, such non-deterministic,
probabilistic, effectful, with exceptions, etc.,
see~\cite{Moggi91a}. An obvious question is if such notions of
computation can be combined via composition of monads. In general,
monads do not compose, but they do compose in presence of a
distributive law. This involves classic work in category theory, due
to Beck~\cite{Beck69}. This topic continues to receive attention
today, especially because finding such distributive laws is a subtle
and error-prone matter, see for instance~\cite{KlinS18} where it is
shown that, somewhat surprisingly, the powerset monad does not
distribute over itself. See also~\cite{Zwart20,ZwartM19} for wider
investigations.

Sometimes it is also interesting whether a functor distributes over a
monad, for instance in operational semantics, see the pioneering work
of~\cite{TuriP97}, which led to much follow-up work, especially in the
theory of coalgebras~\cite{Jacobs16g}. There, distributive laws are
also used for enhanced coinduction
principles~\cite{Bartels03,BonchiPPR17}.

Here we concentrate on the combination of non-deterministic
computation. Some twenty years ago Gordon Plotkin noticed that the
powerset monad $\Pow$ does not distribute over the probability
distributions monad $\Dst$. He never published this important no-go
result himself. Instead, it appeared in~\cite{Varacca03,VaraccaW06},
with full credits, where variations have been investigated (see
also~\cite{GoyP20}).  The lack of a distributive law between $\Pow$
and $\Dst$ means that there is no semantically clean way to combine
non-deterministic and probabilistic computation.

In contrast, multisets do distribute over probability distributions.
This seems to have been folklore knowledge for some time. It is
therefore hard to give proper credit to this
observation. In~\cite[p.82]{KeimelP17} it is described how the set of
distributions $\Dst(M)$ on a commutative monoid $M$, can itself be
turned into a commutative monoid. This is a somewhat isolated
observation in~\cite{KeimelP17}, but by pushing this approach through
one finds that it means that the distribution monad $\Dst$ on the
category of sets lifts to the category of commutative monoids. Since
the latter category is the category of Eilenberg-Moore algebras of the
multiset monad $\Mlt$, some abstract categorical result tells that
this is equivalent to the existence of a distributive law of monads
$\Mlt\Dst \Rightarrow \Dst\Mlt$. Hence one can say that this law is
implicit in~\cite{KeimelP17}, but without an explicit description. The
existence of this law is also discussed in~\cite{DahlqvistPS18}, but
also without an explicit definition. A further source
is~\cite{DashS20}, where the existence of this law is taken for
granted, since the resulting composite $\Dst\Mlt$ is used as a
monad. We simplify things, since~\cite{DashS20} uses continuous
instead of discrete distributions, but that is not essential. Also
there, no explicit description of the distributive law is given.

This absence of an explicit description of the distributive law
$\Mlt\Dst \Rightarrow \Dst\Mlt$ is understandable, since it is quite
complex. The main contribution of this paper is that it describes the
distributive law in full detail --- actually in four different ways
--- and that it develops a rich theory surrounding this law.

The distributive law $\Mlt\Dst \Rightarrow \Dst\Mlt$ turns a multiset
over distributions into a distribution over multisets. At the heart of
this theory that we develop is the interaction of multisets and
distributions. Recall that a multiset is like a set, except that
elements may occur multiple times. A prime example of a multiset is an
urn, containing multiple coloured balls. If the urn contains three
red, five blue and two green balls, then we shall write it as a
multiset $3\ket{R} + 5\ket{B} + 2\ket{G}$ over the set $\{R,B,G\}$ of
colours. This multiset clearly has size 10. A basic property of the
distributive law is that it preserves size: it turns a $K$-sized
multiset of distributions into a distribution over multisets of size
$K\in\NNO$.

When we draw a handful of balls from an urn, the draw itself may also
be represented as a multiset. The classical multinomial and
hypergeometric distributions assign probabilities to such draws ---
with and without replacement, respectively.  We shall describe them as
distributions on $K$-sized multisets. In fact, we shall describe the
distributive law of this paper as a parallel composition of
multinomial distributions. Therefore we propose to name this law the
\emph{parallel multinomial law}, written as $\pml$.

We shall systematically describe multinomial and hypergeometric
distributions via ``channels'', that is, via Kleisli maps in the
Kleisli category $\Kl(\Dst)$ of the distribution monad $\Dst$. It is
within this category that we formulate many of the key properties of
multinomial and hypergeometric channels, using both sequential and
parallel composition.

For the parallel behaviour of these channels and of the distributive
law we introduce a new ``zip'' operation on multisets, called
multizip. It turns out to be mathematically well-behaved, in its
interaction with frequentist learning, with multinomial and
hypergeometric channels, and also with the new parallel multinomial
law. Technically, multizip makes the fixed-sized multiset functor,
lifted to $\Kl(\Dst)$ via the distributive law $\pml$, a monoidal
functor. This multizip is a separate contribution of the paper.

The multinomial and hypergeometric distributions are classics in
probability theory, but they are traditionally not studied as channels
(Kleisli maps), and so their behaviour under sequential and parallel
composition is typically ignored in the literature. This is a pity,
since this behaviour involves beautiful structure, as will be
demonstrated in this paper.

The paper is organised as follows. It first describes the relevant
basics of multisets and probability distributions in
Sections~\ref{MltSec} and~\ref{DstSec}, including the `frequentist
learning' map between them.  Section~\ref{AccArrSec} introduces the
operations of accumulation and arrangement for going back and forth
between sequences and multisets. These two operations play a
fundamental role, later on in the paper, in the description of
multinomial channels, the parallel multinomial role, and also of
multizip. Subsequent Sections~\ref{MulnomSec} and~\ref{HyperGeomSec}
describe multinomial and hypergeometric distributions as channels
(Kleisli maps), together with a number of basic results, expressed via
commuting diagrams, in the Kleisli category of the distribution monad
$\Dst$. This is much like in~\cite{JacobsS20,Jacobs21a}.  The real
work starts in Section~\ref{ParMulnomLawDefSec} where the parallel
multinomial law $\pml$ is introduced in four different, but
equivalent, ways. Basic (sequential, via Kleisli composition)
properties of this law are given in Section~\ref{ParMulnomLawPropSec},
including commutation with multinomial and geometric channels, and
with frequentist learning.

At this point the paper turns to putting things in parallel, again, in
the Kleisli category of the distribution monad $\Dst$. For this
purpose, Section~\ref{ZipSec} introduces a new probabilistic ``zip''
for multisets. It combines two multisets of size $K$, on sets $X,Y$
respectively, into a distribution on $K$-sized multisets on $X\times
Y$. This multiset-zip (or simply: mulitzip, written as $\mzip$) is
thus a channel, and as such interacts well, not only with multinomial
and geometric channels, but also with the parallel multinomial law
$\pml$, see Section~\ref{MzipProbSec}.

In the end, the newly developed theory of multisets and distributions
is applied to sampling semantics. Section~\ref{SampleSec} illustrates
the correctness of the standard sampling approach for parallel and
sequential composition (with a distribution obtained via Kleisli
extension, \textit{i.e.}~state transformation) and for updating
(conditioning of states). These correctness results rely on the
commutation of frequentist learning and multizip with (parallel)
multinomials and on the commutation of multinomials with updating.

\section{Multisets}\label{MltSec}

A multiset is a finite `set' in which elements may occur multiple
times.  We shall write $3\ket{a} + 2\ket{b}$ for a multiset in which
the element $a$ occurs three times and the element $b$ two times. The
ket notation $\ket{-}$ is meaningless syntax that is used to keep
elements (here: $a,b$) and their multiplicities (the numbers: $3,2$)
apart. We shall write $\Mlt(X)$ for the set of finite multisets
$n_{1}\ket{x_1} + \cdots + n_{\ell}\ket{x_{\ell}}$ over the set $X$,
where $x_{i}\in X$ and $n_{i}\in\NNO$. Such a formal sum can also be
written in functional form, as function $\varphi\colon X
\rightarrow\NNO$, whose support $\supp(\varphi) =
\setin{x}{X}{\varphi(x) \neq 0}$ is finite.  The \emph{size} of a
multiset is its total number of elements, written as $\|\varphi\| =
\sum_{x}\varphi(x)$. Thus, $\big\|\,3\ket{a} + 2\ket{b}\,\big\| = 5$.

For multisets $\varphi\in\Mlt(X)$ and $\psi\in\Mlt(Y)$ we can form
their parallel product $\varphi\otimes\psi\in\Mlt(X\times Y)$ via
the product of multiplicities:
\[ \begin{array}{rcl}
\big(\varphi\otimes\psi\big)(x,y)
& \coloneqq &
\varphi(x)\cdot\psi(y).
\end{array} \]

\noindent For instance, $\big(3\ket{a} + 2\ket{b} + 1\ket{c}\big)
\otimes \big(2\ket{0} + 4\ket{1}\big) = 6\ket{a,0} + 12\ket{a,1} +
4\ket{b,0} + 8\ket{b,1} + 2\ket{c,0} + 4\ket{c,1}$.

Taking multisets is functorial: for a function $f\colon X \rightarrow
Y$ one defines $\Mlt(f) \colon \Mlt(X) \rightarrow \Mlt(Y)$ as
$\Mlt(f)(\sum_{i}n_{i}\ket{x_i}) \coloneqq
\sum_{i}n_{i}\ket{f(x_{i})}$.  It is easy to see that
$\|\Mlt(f)(\varphi)\| = \|\varphi\|$.

The functor $\Mlt\colon\Sets\rightarrow\Sets$ is actually a monad,
with unit $\eta\colon X \rightarrow \Mlt(X)$ given by $\eta(x) =
1\ket{x}$. The multiplication $\mu\colon \Mlt^{2}(X) \rightarrow
\Mlt(X)$ is $\mu(\sum_{i}n_{i}\ket{\varphi_{i}})(x) = \sum_{i}
n_{i}\cdot \varphi_{i}(x)$. The Eilenberg-Moore algebras of this monad
are precisely the commutative monoids: for a commutative monoid
$(M,0,+)$ the corresponding algebra $\Mlt(M) \rightarrow M$ sends
formal sums to actual sums: $\sum_{i}n_{i}\ket{m_{i}} \mapsto \sum_{i}
n_{i}\cdot m_{i}$, for $m_{i}\in M$.

We often use multisets of a fixed size $K\in\NNO$, and so we write
$\Mlt[K](X) \subseteq \Mlt(X)$ for the subset of $\varphi\in\Mlt(X)$
with $\|\varphi\| = K$. This $\Mlt[K]$ is also a functor, but not a
monad. As an aside, when the set $X$ has $N$ elements, then
$\Mlt[K](X)$ has $\big(\binom{N}{K}\big) \coloneqq \binom{N+K-1}{K}$
elements, where $\big(\binom{N}{K}\big)$ is called the multiset number
(also known as multichoose).

\section{Probability distributions}\label{DstSec}

In this paper, `distribution' means `finite discrete probability
distribution'. Such a distribution is a formal convex finite sum
$\sum_{i} r_{i}\ket{x_i}$ of elements $x_{i}$ from some set $X$, and
probabilities $r_{i}\in [0,1]$ with $\sum_{i}r_{i} = 1$. We write
$\Dst(X)$ for the set of such distributions over $X$. The mapping $X
\mapsto \Dst(X)$ forms a monad on $\Sets$, just like $\Mlt$. We shall
write $\Kl(\Dst)$ for the Kleisli category of the monad $\Dst$.

A Kleisli map $f\colon X \rightarrow \Dst(Y)$ will also be written as
$f\colon X \chanto Y$ using a special arrow $\chanto$. For a
distribution $\omega\in\Dst(X)$ we write $f \gg \omega \in \Dst(Y)$
for the distribution obtained by Kleisli extension (state
transformation), where:
\[ \begin{array}{rcccl}
\big(f \gg \omega\big)(y)
& = &
\mu\Big(\Dst(f)(\omega)\Big)(y)
& = &
\displaystyle\sum_{x\in X}\, \omega(x)\cdot f(x)(y).
\end{array} \]

\noindent Composition in $\Kl(\Dst)$ will be written as $\klafter$ and
can be described as $(g\klafter f)(x) = g \gg f(x)$. An arbitrary
\emph{function} $h\colon X \rightarrow Y$ can be promoted to a
\emph{channel} $\klin{h} \coloneqq \eta \after h \colon X \chanto
Y$. We frequently promote implicitly, when needed, and omit the
brackets $\klin{-}$.

There is a natural transformation $\flrn \colon \neMlt \Rightarrow
\Dst$ which we call frequentist learning, since it involves learning
by counting. It amounts to normalisation:
\[ \begin{array}{rclcrcl}
\flrn\!\left({\displaystyle\sum}_{i}\,n_{i}\ket{x_i}\right)
& \coloneqq &
{\displaystyle\sum}_{i}\, \displaystyle\frac{n_i}{n}\ket{x_{i}}
& \quad\mbox{where}\quad &
n 
& = & 
\sum_{i}n_{i}.
\end{array} \]

\noindent This $\flrn$ is defined on non-empty multisets, given by
$\neMlt$. It is \emph{not} a map of monads.

For two distributions $\omega\in\Dst(X)$ and $\rho\in\Dst(Y)$ we write
$\omega\otimes\rho \in \Dst(X\times Y)$ for the product distribution.
It is given by $(\omega\otimes\rho)(x,y) = \omega(x)\cdot\rho(y)$,
like for multisets. This tensor $\otimes$ can be extended to channels
as $(f \otimes g)(x,y) = f(x) \otimes g(y)$.  In this way the Kleisli
category $\Kl(\Dst)$ becomes symmetric monoidal. The tensor comes with
projections, since the monad $\Dst$ is affine: $\Dst(1) = 1$, see
\textit{e.g.}~\cite{Jacobs18c} for details.

For a fixed number $K\in\NNO$ we will use a `big' tensor $\bigotimes
\colon \Dst(X)^{K} \rightarrow \Dst(X^{K})$ given by
$\bigotimes(\omega_{1}, \ldots,\omega_{K}) =
\omega_{1}\otimes\cdots\otimes\omega_{K}$. This $\bigotimes$ is
natural in $X$ and commutes appropriately with $\eta$ and $\mu$,
making it a distributive law of the functor $(-)^{K}$ over the monad
$\Dst$. We shall write $\iid \colon \Dst(X) \chanto X^{K}$ for
$\iid(\omega) = \bigotimes(\omega,\ldots,\omega) =
\omega\otimes\cdots\otimes\omega = \omega^{K}$. This $\iid$ gives the
independent and identical distribution.

\section{Accumulation and arrangement}\label{AccArrSec}

There is a simple way to turn a list/sequence of elements into a
multiset, simply by counting occurrences. We call this
\emph{accumulation} and write it as $\acc$. Thus, for instance
$\acc(a,a,b,a) = 3\ket{a} + 1\ket{b}$. More generally we can
simply define:
\[ \begin{array}{rcl}
\acc(x_{1}, \ldots, x_{n})
& \coloneqq &
1\ket{x_1} + \cdots + 1\ket{x_n}.
\end{array} \]

\noindent This accumulation is mathematically well-behaved. It forms
for instance a map of monad, from lists to multisets.

A natural question that arises is: how many lists accumulate to a
given multiset $\varphi$? The (standard) answer is given by what we
call the \emph{multiset coefficient} of the multiset $\varphi$ and
write as $\coefm{\varphi}$. It is:
\[ \begin{array}{rcccl}
\coefm{\varphi}
& \coloneqq &
\displaystyle\frac{\|\varphi\|!}{\prod_{x}\varphi(x)!}
& = &
\displaystyle\binom{\|\varphi\|}{\varphi(x_{1}) \,\cdots\, \varphi(x_{n})},
\end{array} \]

\noindent The latter $\big(-\big)$ notation is common for a
multinomial coefficient, where $\supp(\varphi) = \{x_{1}, \ldots,
x_{n}\}$. For instance, for $\varphi = 2\ket{a} + 3\ket{b}$ there are
$\coefm{\varphi} = \frac{5!}{2!\cdot 3!} = 10$ sequences with length 5
of $a$'s and $b$'s that accumulate to $\varphi$.

We shall concentrate on accumulation for a fixed size, and then write
it as $\acc[K] \colon X^{K} \rightarrow \natMlt[K](X)$. The parameter
$K\in\NNO$ is omitted when it is clear for the context. This
accumulation map is the coequaliser of all permutation maps $X^{K}
\rightarrow X^{K}$. We shall describe it more concretely.

\begin{lemma}
\label{AccCoeqLem}
For a number $K\in\NNO$, let $f\colon X^{K} \rightarrow Y$ be a
function which is \emph{stable under permutation}: for each
permutation $\pi \colon \{1, \ldots, K\} \congrightarrow \{1,\ldots,
K\}$ one has $f\big(x_{1}, \ldots, x_{K}\big) = f\big(x_{\pi(1)},
\ldots, x_{\pi(K)}\big)$, for all sequences $(x_{1}, \ldots, x_{K})\in
X^{K}$. Then there is a unique function $\overline{f} \colon \natMlt[K](X)
\rightarrow Y$ with $\overline{f} \after \acc = f$, as in:
\[ \xymatrix@R-0.8pc{
X^{K}\ar@{->>}[rr]^-{\acc}\ar@/_1ex/[drr]_-{f} & & 
   \natMlt[K](X)\ar@{..>}[d]^-{\overline{f}}
\\
& & Y
} \eqno{\raisebox{-3em}{$\QEDbox$}} \]
\end{lemma}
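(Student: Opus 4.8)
The plan is to recognise $\acc\colon X^{K}\to\natMlt[K](X)$ as the quotient map for the coordinate-permutation action of the symmetric group on $X^{K}$, and to read off the factorisation from two auxiliary facts. First, \emph{surjectivity}: every $\varphi\in\natMlt[K](X)$ lies in the image, since writing $\varphi=\sum_{i}n_{i}\ket{x_i}$ with $\sum_{i}n_{i}=K$, the sequence consisting of $n_{1}$ copies of $x_{1}$, then $n_{2}$ copies of $x_{2}$, and so on, has length $K$ and accumulates to $\varphi$. Second, the \emph{fibre characterisation}: for $\vec{x},\vec{y}\in X^{K}$ one has $\acc(\vec{x})=\acc(\vec{y})$ if and only if there is a permutation $\pi$ with $y_{i}=x_{\pi(i)}$ for all $i$.

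Granting these, I would define $\overline{f}$ as follows. For $\varphi\in\natMlt[K](X)$, pick any $\vec{x}$ with $\acc(\vec{x})=\varphi$ (possible by surjectivity) and set $\overline{f}(\varphi)\coloneqq f(\vec{x})$. This is well defined: if $\acc(\vec{x})=\acc(\vec{y})=\varphi$, then by the fibre characterisation $y_{i}=x_{\pi(i)}$ for some $\pi$, so stability of $f$ under permutation gives $f(\vec{y})=f(x_{\pi(1)},\ldots,x_{\pi(K)})=f(x_{1},\ldots,x_{K})=f(\vec{x})$, and the assigned value is independent of the chosen representative. By construction $\overline{f}(\acc(\vec{x}))=f(\vec{x})$, that is, $\overline{f}\after\acc=f$. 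Uniqueness follows from surjectivity: any $g$ with $g\after\acc=f$ satisfies $g(\varphi)=g(\acc(\vec{x}))=f(\vec{x})=\overline{f}(\varphi)$ for every $\varphi$, since each $\varphi$ equals $\acc(\vec{x})$ for some $\vec{x}$; hence $g=\overline{f}$.

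The main obstacle is the forward ($\Rightarrow$) direction of the fibre characterisation, namely that equal accumulations force a permutation relationship. The idea is that $\acc(\vec{x})=\acc(\vec{y})$ says precisely that for every value $x\in X$ the number of indices $i$ with $x_{i}=x$ equals the multiplicity $\varphi(x)$, which equals the number of indices $j$ with $y_{j}=x$. Thus for each $x$ the index sets $\{i\mid x_{i}=x\}$ and $\{j\mid y_{j}=x\}$ are finite of equal cardinality, so I may choose a bijection $\sigma_{x}$ from the second to the first; amalgamating the $\sigma_{x}$ over the finitely many values in the common support, via $\pi(j)\coloneqq\sigma_{y_{j}}(j)$, yields a single permutation $\pi$ of $\{1,\dots,K\}$ with $x_{\pi(j)}=y_{j}$ for all $j$. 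The reverse ($\Leftarrow$) direction is immediate, as $\acc$ records only multiplicities and is therefore invariant under reindexing by any $\pi$.
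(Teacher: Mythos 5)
Your proof is correct and is exactly the standard argument the paper leaves implicit: the lemma is stated with no proof (the paper merely remarks that $\acc$ is the coequaliser of the permutation maps on $X^{K}$), and your verification of surjectivity, the fibre characterisation via amalgamated bijections $\sigma_{x}$, well-definedness, and uniqueness is the expected elementary justification of that universal property. No gaps.
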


\smallskip

In the other direction, for a multiset $\varphi\in\Mlt[K](X)$ we
define $\arr[K](\varphi) \in \Dst(X^{K})$ as the (uniform)
distribution over all sequences that accumulate to $\varphi$. Thus:
\[ \begin{array}{rcl}
\arr[K](\varphi)
& \coloneqq &
\displaystyle\sum_{\vec{x}\in\acc^{-1}(\varphi)} 
   \frac{1}{\coefm{\varphi}}\bigket{\vec{x}}.
\end{array} \]

\noindent This is what we call arrangement. It may also be defined
using that $\acc$ is coequaliser.

\begin{lemma}
\label{AccArrLem}
Consider accumulation and arrangement.
\begin{enumerate}
\item \label{AccArrLemNat} They are natural transformations
  $\acc \colon (-)^{K} \Rightarrow \Mlt[K]$ and $\arr \colon
  \Mlt[K] \Rightarrow \Dst((-)^{K})$.

\item \label{AccArrLemId} $\acc \klafter \arr$ is the identity channel
  $\Mlt[K](X) \chanto \Mlt[K](X)$.

\item \label{AccArrLemTensor} The composite $\arr \klafter \acc \colon
  X^{K} \chanto X^{K}$ yields the uniform distribution of all $K!$
  permutations. It commutes with the big tensor $\bigotimes$, as in
  the following diagram of channels.
\[ \xymatrix@R-0.8pc{
\Dst(X)^{K}\ar[r]|-{\circ}^-{\acc}\ar[d]|-{\circ}_{\bigotimes} & 
   \natMlt[K]\big(\Dst(X)\big)\ar[r]|-{\circ}^-{\arr} &
   \Dst(X)^{K}\ar[d]|-{\circ}^-{\bigotimes}
\\
X^{K}\ar[r]|-{\circ}^-{\acc} & \natMlt[K](X)\ar[r]|-{\circ}^-{\arr} & X^{K}
} \eqno{\raisebox{-3em}{$\QEDbox$}} \]
\end{enumerate}
\end{lemma}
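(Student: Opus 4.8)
The plan is to reduce all three items to a single structural observation about the symmetric group action. For each $\varphi\in\Mlt[K](X)$ the fibre $\acc^{-1}(\varphi)\subseteq X^{K}$ is exactly one orbit of $S_{K}$ acting on $X^{K}$ by permuting coordinates, $\pi\cdot(x_{1},\ldots,x_{K}) = (x_{\pi(1)},\ldots,x_{\pi(K)})$: two sequences accumulate to the same multiset precisely when one is a permutation of the other, and the orbit has $\coefm{\varphi}$ elements. Hence $\arr(\varphi) = \sum_{\vec{x}\in\acc^{-1}(\varphi)}\frac{1}{\coefm{\varphi}}\ket{\vec{x}}$ is simply the \emph{uniform distribution on this single $S_{K}$-orbit}. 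I would record this reformulation first, since the three items then become statements about the equivariant geometry of $\acc$ and $\arr$.

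For item~\ref{AccArrLemNat}, naturality of $\acc$ is immediate: for $f\colon X\to Y$ both $\Mlt[K](f)\big(\acc(x_{1},\ldots,x_{K})\big)$ and $\acc\big(f(x_{1}),\ldots,f(x_{K})\big)$ equal $\sum_{i}1\ket{f(x_{i})}$. For naturality of $\arr$ I would use the orbit picture together with this first fact. The map $f^{K}\colon X^{K}\to Y^{K}$ is $S_{K}$-equivariant, and $\acc\after f^{K} = \Mlt[K](f)\after\acc$, so $f^{K}$ sends $\acc^{-1}(\varphi)$ into $\acc^{-1}\big(\Mlt[K](f)(\varphi)\big)$; as the image is a nonempty $S_{K}$-stable subset of the latter, and the latter is itself one orbit, $f^{K}$ restricts to a \emph{surjection of transitive $S_{K}$-sets}. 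Any equivariant surjection of transitive $G$-sets has constant fibre cardinality, so it pushes the uniform distribution to the uniform distribution, which is exactly $\Dst(f^{K})\big(\arr(\varphi)\big) = \arr\big(\Mlt[K](f)(\varphi)\big)$. This is the one place where the normalising factor $\coefm{\varphi}$ could misbehave for non-injective $f$, and the constant-fibre argument is precisely what tames it; I regard it as the main obstacle.

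Item~\ref{AccArrLemId} is a direct computation. Using $(\klin{\acc}\klafter\arr)(\varphi) = \klin{\acc}\gg\arr(\varphi)$ together with $\klin{\acc} = \eta\after\acc$, one obtains $\sum_{\vec{x}\in\acc^{-1}(\varphi)}\frac{1}{\coefm{\varphi}}\,\eta(\acc(\vec{x}))$; every $\vec{x}$ in the fibre satisfies $\acc(\vec{x})=\varphi$, and there are $\coefm{\varphi}$ of them, so the weights sum to $1$ on $\ket{\varphi}$, giving $\eta(\varphi)$, i.e.\ the identity channel.

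For item~\ref{AccArrLemTensor} I would first identify the composite $\arr\klafter\acc$. Unfolding as above, $(\arr\klafter\acc)(\vec{x}) = \arr\big(\acc(\vec{x})\big)$, which by the reformulation is uniform on the $S_{K}$-orbit of $\vec{x}$; grouping permutations by orbit--stabiliser shows this equals the symmetrisation $\frac{1}{K!}\sum_{\pi\in S_{K}}\ket{\pi\cdot\vec{x}}$, the announced uniform distribution over the $K!$ permutations (repetitions being absorbed into the multiplicities, exactly because the stabiliser has size $\prod_{x}\varphi(x)!$). The commutation with $\bigotimes$ then expresses that $\bigotimes$ is $S_{K}$-equivariant: permuting the factors $\omega_{1},\ldots,\omega_{K}$ matches permuting the coordinates of the product. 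Concretely, $\bigotimes\klafter(\arr\klafter\acc)$ sends $(\omega_{1},\ldots,\omega_{K})$ to $\frac{1}{K!}\sum_{\pi}\omega_{\pi(1)}\otimes\cdots\otimes\omega_{\pi(K)}$, and the same expression falls out of $(\arr\klafter\acc)\klafter\bigotimes$ after the substitution $\vec{y}=\pi\cdot\vec{x}$ in the permutation sum. I would present this reindexing explicitly, as it is the only step that is not purely formal, while noting that it is one more instance of the slogan ``symmetrisation commutes with any equivariant map''.
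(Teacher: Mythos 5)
Your proof is correct. The paper states this lemma without any proof (the $\QEDbox$ is attached to the statement itself, treating it as routine), so there is no official argument to compare against; your orbit-theoretic reading of $\acc^{-1}(\varphi)$ as a single transitive $S_{K}$-set of size $\coefm{\varphi}$ is a clean way to organise the verification. You correctly identify the one genuinely delicate point, namely naturality of $\arr$ for non-injective $f$, where the change of normalising constant from $\coefm{\varphi}$ to $\coefm{\Mlt[K](f)(\varphi)}$ has to come out right; the constant-fibre-cardinality property of an equivariant surjection between transitive $G$-sets handles this exactly, and the remaining items (the telescoping of weights in (ii), the orbit--stabiliser count giving $\frac{1}{K!}\sum_{\pi}\ket{\pi\cdot\vec{x}}$, and the reindexing $\vec{y}=\pi\cdot\vec{x}$ for the $\bigotimes$ square in (iii)) are all carried out correctly. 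An alternative, which the paper hints at by remarking that $\arr$ ``may also be defined using that $\acc$ is coequaliser,'' would be to derive (i) and (iii) from the universal property of Lemma~\ref{AccCoeqLem} rather than by direct computation, but your argument is self-contained and equally valid.
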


\section{Multinomial distributions}\label{MulnomSec}

We can think of a distribution $\omega\in\Dst(X)$ as an abstract urn,
with $X$ as set of colours for the balls in the urn. The number
$\omega(x)\in [0,1]$ gives the probability of drawing one ball of
colour $x$. We consider drawing with replacement, so that the urn does
not change. When we draw a handful of balls, we wish to know the
probability of the draw. Such a draw, say of size $K$, will be
represented as a multiset $\varphi\in\Mlt[K](X)$. Multinomial
distributions assign probabilities to fixed-size draws. They will be
organised as a channel, like in~\cite{JacobsS20,Jacobs21a}, of the
form:
\[ \xymatrix{
\Dst(X)\ar[rr]^-{\multinomial[K]} & & \Dst\big(\natMlt[K](X)\big).
} \]

\noindent For clarity, we write this channel here as a function, with
the distribution monad $\Dst$ explicitly present in its codomain.  We
shall see that writing it as channel $\Dst(X) \chanto \Mlt[K](X)$, and
composing it as such, allows us to smoothly express various properties
of multinomials. This demonstrates the power of (categorical)
abstraction.

On $\omega\in\Dst(X)$, as abstract urn, the multinomial channel
involves the convex sum over multisets, as draws:
\[ \begin{array}{rcl}
\multinomial[K](\omega)
& \,\coloneqq &
\displaystyle\sum_{\varphi\in\natMlt[K](X)} \coefm{\varphi} \cdot
   \textstyle {\displaystyle\prod}_{x}\, \omega(x)^{\varphi(x)}\,\bigket{\varphi}.
\end{array} \]

\noindent The multiset coefficient $\coefm{\varphi}$ of a
draw/multiset $\varphi$, from Section~\ref{AccArrSec} appears because
the order of drawn elements is irrelevant.

The next result captures some basic intuitions about
multinomials: they are suitably additive and the draws match the
original distribution $\omega$, so learning from them yields $\omega$
itself.

\begin{proposition}
\label{MultinomialSumFlrnProp}
Consider the above multinomial channel.
\begin{enumerate}
\item \label{MultinomialSumFlrnPropSum} Draws can be combined: for
  $K,L\in\NNO$,
\[ \hspace*{-1em}\xymatrix@R-0.8pc@C+4pc{
\Dst(X)\ar[d]_-{\Delta}\ar[r]|-{\circ}^-{\multinomial[K+L]} & 
   \natMlt[K\!+\!L](X)
\\
\Dst(X)\!\times\!\Dst(X)
   \ar[r]|-{\circ}^-{\multinomial[K]\otimes\multinomial[L]} &
   \natMlt[K](X)\!\times\!\natMlt[L](X)\ar[u]|-{\circ}_-{+}
} \]

\item \label{MultinomialSumFlrnPropFlrn} $\big(\flrn \klafter
  \multinomial[K]\big)(\omega) = \omega$. \QED
\end{enumerate}
\end{proposition}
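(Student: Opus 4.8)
The plan is to reduce both parts to a single factorisation of the multinomial channel, namely $\multinomial[K] = \acc\klafter\iid$, which I would establish first. Unfolding $\iid(\omega) = \omega^{K} = \sum_{\vec{x}}\prod_{i}\omega(x_{i})\ket{\vec{x}}$ and pushing forward along $\acc$ collapses the sum over sequences into a sum over multisets, producing in front of $\prod_{x}\omega(x)^{\varphi(x)}$ exactly the factor $\coefm{\varphi} = |\acc^{-1}(\varphi)|$ counted in Section~\ref{AccArrSec}. This identity ties the multinomial to accumulation and is the engine for everything below.

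For part~\ref{MultinomialSumFlrnPropSum} I would use the conceptual route. The copier $\Delta$ yields $(\omega,\omega)$, and $\iid[K+L](\omega) = \omega^{K+L}$ corresponds, under the isomorphism $X^{K+L}\cong X^{K}\times X^{L}$, to $\omega^{K}\otimes\omega^{L} = \iid[K](\omega)\otimes\iid[L](\omega)$. The remaining point is that accumulation is a monoid homomorphism: splitting a length-$(K+L)$ sequence, accumulating the two halves and adding the resulting multisets equals accumulating the whole, \ie $\acc[K+L] = {+}\after(\acc[K]\times\acc[L])$ across the splitting isomorphism. Threading these two facts through $\multinomial = \acc\klafter\iid$ closes the square. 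If one instead expands both sides directly, the identity that must be checked is the multiset Vandermonde convolution $\coefm{\chi} = \sum_{\varphi+\psi=\chi}\coefm{\varphi}\cdot\coefm{\psi}$, where the sum ranges over $\|\varphi\|=K$, $\|\psi\|=L$ with $\varphi+\psi=\chi$; dividing through by $\prod_{x}\chi(x)!$ turns this into $\binom{K+L}{K} = \sum_{\varphi+\psi=\chi}\prod_{x}\binom{\chi(x)}{\varphi(x)}$, which is the coefficient of $t^{K}$ in $\prod_{x}(1+t)^{\chi(x)} = (1+t)^{K+L}$. I expect this coefficient extraction to be the main obstacle on the direct route, which is exactly why the homomorphism argument is preferable.

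For part~\ref{MultinomialSumFlrnPropFlrn} I would again exploit $\multinomial[K] = \acc\klafter\iid$ and study $\flrn\klafter\acc\klafter\iid$. The key observation is that $\flrn\after\acc$ is an averaging map: $\flrn(\acc(x_{1},\dots,x_{K})) = \frac{1}{K}\sum_{i}1\ket{x_{i}}$, \ie the convex average of the point masses at the $x_{i}$. Extending this over $\iid(\omega)=\omega^{K}$ and using linearity of Kleisli extension, the $i$-th summand is $\sum_{\vec{x}}\big(\prod_{j}\omega(x_{j})\big)\ket{x_{i}}$; summing out every coordinate except the $i$-th, each of which contributes $1$ because $\omega$ is a distribution, leaves $\omega$, so the average of $K$ copies of $\omega$ is $\omega$ itself. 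Expanding directly one would instead have to show $\sum_{\varphi}\multinomial[K](\omega)(\varphi)\cdot\frac{\varphi(y)}{K} = \omega(y)$ for each $y$, equivalently that the expected $y$-count $\sum_{\varphi}\multinomial[K](\omega)(\varphi)\,\varphi(y)$ equals $K\omega(y)$. The averaging argument sidesteps this count computation, so I expect no real difficulty here once the factorisation lemma is available.
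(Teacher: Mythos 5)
Your proposal is correct, and in fact the paper offers no proof of this proposition at all --- the $\QEDbox$ is attached directly to the statement, so the result is treated as standard. That said, your route is very much in the spirit of what the paper does next: the factorisation $\multinomial[K] = \acc[K]\klafter\iid[K]$ that you take as your engine is exactly the second diagram of Theorem~\ref{MultinomialIIDAccThm}, which the paper proves immediately \emph{after} this proposition (there derived by first showing $\arr\klafter\multinomial[K] = \iid[K]$ and then applying $\acc\klafter\arr = \idmap$; your direct count of $\coefm{\varphi} = |\acc^{-1}(\varphi)|$ fibres is an equally valid and arguably more elementary derivation, and it is not circular since that theorem's proof does not rely on this proposition). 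Given the factorisation, both of your arguments go through: for part~\eqref{MultinomialSumFlrnPropSum} the two facts you need --- that $\iid[K+L](\omega)$ corresponds to $\iid[K](\omega)\otimes\iid[L](\omega)$ under $X^{K+L}\cong X^{K}\times X^{L}$, and that $\acc$ is a monoid homomorphism with respect to concatenation and multiset sum --- are both immediate, and your fallback identity $\coefm{\chi} = \sum_{\varphi+\psi=\chi}\coefm{\varphi}\cdot\coefm{\psi}$ is indeed the multiset Vandermonde convolution (consistent with the Vandermonde formula the paper invokes for hypergeometric normalisation). For part~\eqref{MultinomialSumFlrnPropFlrn}, the observation that $\flrn\after\acc$ averages the point masses at the coordinates, after which marginalising out the remaining coordinates of $\omega^{K}$ leaves $K$ copies of $\omega$ averaged to $\omega$, is clean and correct. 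In short: a complete argument where the paper gives none, built on a lemma the paper itself establishes one theorem later.
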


\begin{theorem}
\label{MultinomialIIDAccThm}
Multinomial channels are related to accumulation, arrangement, and
independent \& identical distributions:
\[ \xymatrix@R-0.8pc{
\Dst(X)\ar[rr]|-{\circ}^-{\multinomial[K]}\ar@/_2ex/[drr]|-{\circ}_{\iid[K]} 
   & & \natMlt[K](X)\ar[d]|-{\circ}^-{\arr[K]}
\\
& & X^{K}
} \]

\[ \xymatrix@R-1pc{
\Dst(X)\ar[rr]|-{\circ}^-{\multinomial[K]}\ar@/_2ex/[dr]|-{\circ}_{\iid[K]} 
   & & \natMlt[K](X)
\\
& X^{K}\ar@/_2ex/[ur]|-{\circ}_{\acc[K]}
} \]
\end{theorem}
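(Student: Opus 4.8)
The statement packages two equations between channels, the upper triangle $\arr[K]\klafter\multinomial[K] = \iid[K] \colon \Dst(X)\chanto X^{K}$ and the lower triangle $\acc[K]\klafter\iid[K] = \multinomial[K] \colon \Dst(X)\chanto\natMlt[K](X)$. The plan is to establish the lower triangle first by a direct computation, since it is the more elementary of the two and the upper triangle can then be read off from it together with Lemma~\ref{AccArrLem}.

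For the lower triangle I would simply unfold the Kleisli composition. Since $\acc[K]$ is promoted from a function, it is a Dirac channel, $\acc[K](\vec{x}) = 1\bigket{\acc(\vec{x})}$, so that for $\psi\in\natMlt[K](X)$ one has
\[ \big(\acc[K]\klafter\iid[K]\big)(\omega)(\psi) \;=\; \sum_{\vec{x}\in X^{K}} \iid[K](\omega)(\vec{x})\cdot \indic{\acc(\vec{x})=\psi}. \]
The surviving terms are those $\vec{x}$ in the fibre $\acc^{-1}(\psi)$, whose cardinality is exactly the multiset coefficient $\coefm{\psi}$ (as recorded in Section~\ref{AccArrSec}). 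For each such $\vec{x}$ one has $\iid[K](\omega)(\vec{x}) = \prod_{i}\omega(x_{i}) = \prod_{x}\omega(x)^{\psi(x)}$, since grouping the factors of the sequence by their value $x$ yields precisely $\psi(x)$ copies. Hence the sum equals $\coefm{\psi}\cdot\prod_{x}\omega(x)^{\psi(x)}$, which is the definition of $\multinomial[K](\omega)(\psi)$.

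For the upper triangle I would substitute the identity just proved to get $\arr[K]\klafter\multinomial[K] = \arr[K]\klafter\acc[K]\klafter\iid[K]$ and then appeal to Lemma~\ref{AccArrLem}(\ref{AccArrLemTensor}). The conceptual point to get right is that $\arr[K]\klafter\acc[K]$ is the uniform-rearrangement channel, which is emphatically \emph{not} the identity, so one cannot naively cancel. The resolution is that $\iid[K](\omega)=\omega^{K}$ is stable under permutation of its coordinates: applying a channel that uniformly rearranges coordinates to such a symmetric distribution returns it unchanged. Concretely, $\arr[K]\klafter\acc[K]$ commutes with $\bigotimes$ by Lemma~\ref{AccArrLem}(\ref{AccArrLemTensor}), and since $\iid[K] = \bigotimes\klafter\Delta$ feeds in the constant tuple $(\omega,\ldots,\omega)$ through the $K$-fold diagonal $\Delta$ — a tuple whose only rearrangement is itself — the rearrangement step collapses to a Dirac, leaving $\bigotimes\klafter\Delta = \iid[K]$. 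Alternatively, the upper triangle admits an equally short \emph{direct} check: evaluating at $\vec{x}$, only the summand $\varphi=\acc(\vec{x})$ survives, because $\arr[K](\varphi)$ is supported on $\acc^{-1}(\varphi)$, and the factor $\coefm{\acc(\vec{x})}$ coming from the multinomial cancels against the weight $1/\coefm{\varphi}$ in arrangement, leaving $\prod_{i}\omega(x_{i}) = \iid[K](\omega)(\vec{x})$.

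I expect the computations themselves to be routine; the only obstacle is presentational, namely making clear why the upper triangle is consistent with $\arr\klafter\acc$ failing to be the identity, the symmetry of $\iid[K]$ being the decisive ingredient.
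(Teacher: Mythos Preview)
Your proposal is correct, but the paper reverses the order: it proves the \emph{upper} triangle first by the direct computation you sketch at the end as an ``alternative'' (only $\varphi=\acc(\vec{x})$ survives, the coefficient $\coefm{\acc(\vec{x})}$ cancels against the arrangement weight, leaving $\prod_{i}\omega(x_{i})$), and then obtains the lower triangle in one line via Lemma~\ref{AccArrLem}\eqref{AccArrLemId}: $\acc\klafter\iid = \acc\klafter(\arr\klafter\multinomial[K]) = (\acc\klafter\arr)\klafter\multinomial[K] = \multinomial[K]$.

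The advantage of the paper's order is that the derivation step uses the \emph{clean} identity $\acc\klafter\arr=\idmap$, so no symmetry argument is needed at all. Your order forces you through $\arr\klafter\acc$, which is not the identity, and you then have to invoke the permutation-invariance of $\iid$ (via Lemma~\ref{AccArrLem}\eqref{AccArrLemTensor} and the diagonal) to recover. That argument is sound, but it is extra work compared to the paper's route. Your direct computation for the lower triangle is of course also fine on its own terms and is essentially the same calculation as the paper's, just pushed forward along $\acc$ rather than pulled back along $\arr$.
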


An immediate consequence of this last diagram is that binomial
channels form a natural transformation $\Dst \Rightarrow \Dst\Mlt[K]$.

\begin{proof}
For $\omega\in\Dst(X)$ and $\vec{x} = (x_{1}, \ldots, x_{K})\in
  X^{K}$,
\[ \begin{array}[b]{rcl}
\lefteqn{\big(\arr \klafter \multinomial[K]\big)(\omega)(\vec{x})}
\\
& = &
\displaystyle\sum_{\varphi\in\natMlt[K](X)}\,
   \arr(\varphi)(\vec{x})\cdot \multinomial[K](\omega)(\varphi)
\\[+1.2em]
& = &
\displaystyle\frac{1}{\coefm{\acc(\vec{x})}}\cdot 
   \multinomial[K](\omega)(\acc(\vec{x}))
\\[+0.7em]
& = &
\displaystyle\frac{1}{\coefm{\acc(\vec{x})}}\cdot\coefm{\acc(\vec{x})}\cdot
   \textstyle {\displaystyle\prod}_{y}\, \omega(y)^{\acc(\vec{x})(y)}
\\[+0.8em]
& = &
{\displaystyle\prod}_{i}\, \omega(x_{i})
\hspace*{\arraycolsep}=\hspace*{\arraycolsep}
\omega^{K}(\vec{x})
\hspace*{\arraycolsep}=\hspace*{\arraycolsep}
\iid(\omega)(\vec{x}).
\end{array} \]

\noindent The second diagram then commutes since $\acc \klafter \arr =
\idmap$. \QED
\end{proof}

We conclude our description of multinomial channels by repeating a
result from~\cite{JacobsS20}. It tells that multinomials form a cone
for an infinite chain of draw-and-delete channels $\drawdelete[K]
\colon \Mlt[K\!+\!1](X) \chanto \Mlt[K](X)$. In~\cite{JacobsS20} this
forms the basis for a re-description of de Finetti's theorem in terms
of a limit in a category of channels. We refer the reader to
\emph{loc.\ cit.} for further information; here we just need the cone
property in Lemma~\ref{DrawDeleteLem}~\eqref{DrawDeleteLemMult}
below. First we define withdrawing a single element from a
multiset/urn as a distribution.
\[ \begin{array}{rcl}
\drawdelete[K](\psi)
& \coloneqq &
\displaystyle\!\sum_{x\in\supp(\psi)}\frac{\psi(x)}{K+1}\bigket{\psi - 1\ket{x}}.
\end{array} \]

\noindent Notice that we sum over elements $x$ in the support of the
multiset/urn $\psi$ of size $K+1$, so that $\psi(x) > 0$. Hence we may
remove $x$ from $\psi$, as indicated by the subtraction $\psi -
1\ket{x}$. It leaves us with a multiset of size $K$. The probability
of drawing $x$ depends on the number of occurrences $\psi(x)$. We may
also write the associated probability $\frac{\psi(x)}{K+1}$ as
$\flrn(\psi)(x)$. For instance:
\[ \begin{array}{rcl}
\drawdelete\big(3\ket{a} + 2\ket{b}\big)
& = &
\frac{3}{5}\bigket{2\ket{a} + 2\ket{b}} + 
   \frac{2}{5}\bigket{2\ket{a} + 1\ket{b}}.
\end{array} \]

It is not hard to see that these $\drawdelete$ maps are natural in
$X$. The key property that we are interested in is that
draw-and-delete commutes with multinomial channels.

\begin{lemma}
\label{DrawDeleteLem}
Draw-and-delete of a single element via the channel $\drawdelete
\colon \Mlt[K\!+\!1](X) \chanto \Mlt[K](X)$ satisfies the following
properties.
\begin{enumerate}
\item \label{DrawDeleteLemMult} Commutation with multinomials:
\[ \xymatrix@R-0.8pc{
\natMlt[K\!+\!1](X)\ar[rr]|-{\circ}^-{\drawdelete} & & \natMlt[K](X)
\\
& \Dst(X)\ar@/^2ex/[ul]|-{\circ}^-{\multinomial[K+1]\quad}
   \ar@/_2ex/[ur]|-{\circ}_-{\;\multinomial[K]}
} \]

\item \label{DrawDeleteLemFlrn} Commutation with frequentist learning:
\[ \xymatrix@R-0.8pc@C+1pc{
\natMlt[K\!+\!1](X)\ar[rr]|-{\circ}^-{\drawdelete}
   \ar@/_2ex/[dr]|-{\circ}_-{\flrn} & &
   \natMlt[K](X)\ar@/^2ex/[dl]|-{\circ}^-{\flrn} & 
\\
& X &
} \]
\end{enumerate}
\end{lemma}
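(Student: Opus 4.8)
The plan is to prove both commutations by a direct computation in $\Kl(\Dst)$, unfolding the definitions of the channels involved and simplifying the resulting convex sums. Since each statement is an equality of channels, it suffices to evaluate both sides at a generic argument and to compare the assigned probabilities pointwise. No categorical machinery beyond the definitions of $\multinomial[K]$, $\drawdelete$, $\flrn$ and Kleisli composition $\klafter$ is needed.

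For item~\eqref{DrawDeleteLemMult} I would fix $\omega\in\Dst(X)$ and a target multiset $\varphi\in\Mlt[K](X)$, and evaluate $\big(\drawdelete\klafter\multinomial[K+1]\big)(\omega)(\varphi) = \sum_{\psi}\multinomial[K+1](\omega)(\psi)\cdot\drawdelete(\psi)(\varphi)$. The first step is to note that $\drawdelete(\psi)(\varphi)\neq 0$ exactly when $\psi = \varphi + 1\ket{x}$ for some $x\in X$, with coefficient $\tfrac{\varphi(x)+1}{K+1}$; this reindexes the sum over $\psi$ as a sum over $x\in X$. The key algebraic identity is the behaviour of the multiset coefficient under adding one element, $\coefm{\varphi + 1\ket{x}} = \tfrac{K+1}{\varphi(x)+1}\cdot\coefm{\varphi}$, which drops straight out of the factorial formula for $\coefm{-}$. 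Combined with $\prod_{y}\omega(y)^{(\varphi + 1\ket{x})(y)} = \omega(x)\cdot\prod_{y}\omega(y)^{\varphi(y)}$, the factors $\tfrac{K+1}{\varphi(x)+1}$ and $\tfrac{\varphi(x)+1}{K+1}$ cancel, leaving $\coefm{\varphi}\cdot\omega(x)\cdot\prod_y\omega(y)^{\varphi(y)}$ in the $x$-th summand. Summing over $x$ and using that $\omega$ is a distribution, so $\sum_{x}\omega(x)=1$, gives exactly $\multinomial[K](\omega)(\varphi)$.

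For item~\eqref{DrawDeleteLemFlrn} I would fix $\psi\in\Mlt[K+1](X)$ (so $K\geq 1$) and compute, for $y\in X$, $\big(\flrn\klafter\drawdelete\big)(\psi)(y) = \sum_{x}\tfrac{\psi(x)}{K+1}\cdot\flrn\big(\psi - 1\ket{x}\big)(y)$. Since $\psi - 1\ket{x}$ has size $K$, its frequentist learning at $y$ equals $\tfrac{1}{K}\big(\psi(y) - [x{=}y]\big)$, where $[x{=}y]$ is $1$ if $x=y$ and $0$ otherwise. Splitting the double sum into a main term $\sum_x\psi(x)\psi(y) = (K+1)\psi(y)$, using $\|\psi\|=K+1$, and a correction $\sum_x\psi(x)[x{=}y] = \psi(y)$, everything collapses to $\tfrac{1}{K(K+1)}\big((K+1)\psi(y) - \psi(y)\big) = \tfrac{\psi(y)}{K+1} = \flrn(\psi)(y)$. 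Conceptually this is an exchangeability statement: deleting one ball (proportionally to its colour count) and then sampling the colour of a second ball yields the same colour distribution as sampling directly from $\psi$; the computation merely makes the symmetry explicit.

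I expect the only delicate point to be the bookkeeping in item~\eqref{DrawDeleteLemMult}: one must verify that $x\mapsto\varphi+1\ket{x}$ is a bijection onto the relevant preimages (distinct $x$ give distinct $\psi$, and each such $\psi$ reaches $\varphi$ only by deleting the unique added $x$), and that the coefficient $\tfrac{\varphi(x)+1}{K+1}$ for $\drawdelete(\varphi+1\ket{x})(\varphi)$ is valid even when $x\notin\supp(\varphi)$, where $\varphi(x)+1 = 1$. Once this enumeration is pinned down, both parts reduce to the factorial identity for $\coefm{-}$ together with the normalisation conditions $\sum_x\omega(x)=1$ and $\|\psi\|=K+1$, and nothing further is required.
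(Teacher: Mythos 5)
Your proof is correct. For part~\eqref{DrawDeleteLemFlrn} your computation is essentially identical to the paper's: the paper also expands $\flrn\klafter\drawdelete$ pointwise, splits the sum over $x$ into the $x=y$ term and the rest (your Iverson-bracket correction term), and collapses using $\|\psi\|=K+1$. The difference is in part~\eqref{DrawDeleteLemMult}: the paper does not prove it at all, deferring to the earlier reference \cite{JacobsS20}, whereas you supply a complete direct argument. Your argument is sound: the reindexing of the sum over $\psi$ as a sum over $x\in X$ via $\psi=\varphi+1\ket{x}$ is a genuine bijection onto the multisets with $\drawdelete(\psi)(\varphi)\neq 0$ (and each such $\psi$ reaches $\varphi$ only by deleting that one added $x$); the identity $\coefm{\varphi+1\ket{x}}=\tfrac{K+1}{\varphi(x)+1}\cdot\coefm{\varphi}$ follows immediately from the factorial formula; and the cancellation against the draw probability $\tfrac{\varphi(x)+1}{K+1}$ followed by $\sum_{x}\omega(x)=1$ delivers exactly $\multinomial[K](\omega)(\varphi)$. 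The payoff of your route is self-containedness --- the reader need not chase the citation --- at the cost of a page of bookkeeping that the paper outsources; nothing in your argument is missing or incorrect.
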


\begin{proof}
The first point is proven in~\cite{JacobsS20}, so we concentrate on
the second one. For $\psi\in\Mlt[K+1](X)$ and $y\in X$,
\[ \begin{array}[b]{rcl}
\lefteqn{\big(\flrn \klafter \drawdelete\big)(\psi)(y)}
\\
& = &
\displaystyle\sum_{\varphi\in\natMlt[K](X)} \drawdelete(\psi)(\varphi) \cdot
   \flrn(\varphi)(y)
\\[+1.3em]
& = &
\displaystyle\sum_{x\in X}\, \frac{\psi(x)}{K+1} \cdot\flrn(\psi - 1\ket{x})(y)
\\[+1.2em]
& = &
\displaystyle \frac{\psi(y)}{K+1}\cdot\frac{\psi(y)-1}{K} +
   \sum_{x\neq y}\, \frac{\psi(x)}{K+1}\cdot\frac{\psi(y)}{K}
\\[+1em]
& = &
\displaystyle\frac{\psi(y)}{K(K+1)}\cdot \left(\psi(y)-1 +
   \sum_{x\neq y} \psi(x)\right)
\\[+1em]
& = &
\displaystyle\frac{\psi(y)}{K(K+1)}\cdot \left(\left(
   \sum_{x} \psi(x)\right)-1\right)
\\[+1em]
& = &
\displaystyle\frac{\psi(y)}{K(K+1)}\cdot ((K+1) - 1)
\\[+0.8em]
& = &
\displaystyle\frac{\psi(y)}{K+1}
\\[+0.7em]
& = &
\flrn(\psi)(y).
\end{array} \eqno{\QEDbox} \]
\end{proof}

\section{Hypergeometric distributions}\label{HyperGeomSec}

Where multinomial distributions describe draws with replacement,
hypergeometric distributions involve actual withdrawals, without
replacement. The urn itself is thus a multiset, say of size $N$, and
withdrawals are possible of multisets of (fixed) size $K$, for $K\leq
N$.  The type of the channel is thus:
\begin{equation}
\label{HypergeometricDiag} 
\xymatrix{
\natMlt[N](X)\ar[rr]^-{\hypergeometric[K]} & & \Dst\big(\natMlt[K](X)\big).
}
\end{equation}

\noindent In the definition below we use a pointwise ordering
$\varphi\leq\psi$, for multisets $\varphi,\psi$ on the same set
$X$. It means that $\varphi(x) \leq \psi(x)$ for each $x\in X$.
We write $\varphi \leq_{K} \psi$ if $\varphi\in\natMlt[K](X)$
and $\varphi\leq\psi$.
\[ \begin{array}{rcl}
\hypergeometric[K]\big(\psi)
& \coloneqq &
\displaystyle\sum_{\varphi\leq_{K}\psi} 
   \frac{\prod_{x}\binom{\psi(x)}{\varphi(x)}}
   {\binom{N}{K}}\, \bigket{\varphi},
\end{array} \]

\noindent where $N = \|\psi\|$. The probabilities involved add up to
one by (a generalisation of) Vandermonde's formula $\binom{M+L}{K} =
\sum_{m\leq M, \ell\leq L, m+\ell = K}
\binom{M}{m}\cdot\binom{L}{\ell}$.

The main result about these hypergeometric distributions in this
section is that they can be obtained by iterated draw-and-delete.  In
a slightly different way it is shown in~\cite{Jacobs21a} that both
multinomial and hypergeometric channels can be obtained via iterated
drawing, with or without replacement, where iteration is described in
terms of a Kleisli composition.

\begin{theorem}
\label{DrawDeleteHypergeometricThm}
For $K,L\in\NNO$, the hypergeometric channel $\natMlt[K\!+\!L](X)
\chanto \natMlt[K](X)$ equals $L$ consecutive draw-and-delete's:
\[ \quad\xymatrix@R-0.8pc@C-2pc{
\llap{$\natMlt$}[K\!+\!L](X)\ar[rrrr]|-{\circ}^-{\hypergeometric[K]}
   \ar@/_1ex/[dr]|-{\circ}_(0.3){\drawdelete} & & & & 
    \natMlt[K](X)
\\
& \hspace*{-2em}\natMlt[K\!+\!L\!-\!1]\rlap{$(X)$}
   \ar@{..>}@/_4ex/[rr]|-{\circ}_-{\underbrace{\scriptstyle\drawdelete \,\klafter\, \cdots \,\klafter\, \drawdelete}_{L-2\text{ times}}} & 
   \hspace*{3em} & 
   \natMlt[K\!+\!1](X)\ar@/_1.4ex/[ur]|-{\circ}_-{\drawdelete} &
} \]
\end{theorem}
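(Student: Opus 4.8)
The plan is to argue by induction on $L\in\NNO$, reducing the whole claim to a single-step recurrence between consecutive hypergeometric channels. For the base case $L=0$ the channel $\hypergeometric[K]\colon \Mlt[K](X) \chanto \Mlt[K](X)$ is the identity: when $\|\psi\| = K$, the only $\varphi \leq_{K} \psi$ is $\varphi = \psi$, and its coefficient $\prod_{x}\binom{\psi(x)}{\psi(x)}/\binom{K}{K}$ equals $1$; this matches the empty composition of draw-and-delete's. (One could equally start at $L=1$, where $\hypergeometric[K]$ on $\Mlt[K+1](X)$ reduces directly to $\drawdelete$ since $\binom{K+1}{K}=K+1$ and the only admissible draws are of the form $\psi - 1\ket{x}$.)

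For the inductive step I would peel off the leftmost (last-applied) draw-and-delete and regroup the remaining $L-1$ maps, so that the $L$-fold composite $\Mlt[K+L](X)\chanto \Mlt[K](X)$ becomes $\drawdelete \klafter \big(\underbrace{\drawdelete \klafter \cdots \klafter \drawdelete}_{L-1}\big)$, with the inner composite running from $\Mlt[K+L](X)$ down to $\Mlt[K+1](X)$. Applying the induction hypothesis with parameters $K+1$ and $L-1$ (note $(K+1)+(L-1)=K+L$), the inner part equals $\hypergeometric[K+1]\colon \Mlt[K+L](X)\chanto \Mlt[K+1](X)$. Everything then reduces to the single-step identity of channels $\Mlt[K+L](X)\chanto \Mlt[K](X)$,
\[
\hypergeometric[K] \;=\; \drawdelete \klafter \hypergeometric[K+1].
\]

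To prove this recurrence pointwise, fix $\psi\in\Mlt[K+L](X)$ and $\varphi\leq_{K}\psi$ (if $\varphi\not\leq\psi$ both sides vanish), and unfold the right-hand side as $\sum_{\chi}\hypergeometric[K+1](\psi)(\chi)\cdot\drawdelete(\chi)(\varphi)$. Only multisets $\chi = \varphi + 1\ket{x_0}$ with $\psi(x_0) > \varphi(x_0)$ contribute, and for these $\drawdelete(\chi)(\varphi) = (\varphi(x_0)+1)/(K+1)$. Using $\binom{\psi(x_0)}{\varphi(x_0)+1} = \binom{\psi(x_0)}{\varphi(x_0)}\cdot(\psi(x_0)-\varphi(x_0))/(\varphi(x_0)+1)$, the product $\prod_{x}\binom{\psi(x)}{\chi(x)}$ factors as $\big(\prod_{x}\binom{\psi(x)}{\varphi(x)}\big)\cdot(\psi(x_0)-\varphi(x_0))/(\varphi(x_0)+1)$. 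Substituting, the factor $\varphi(x_0)+1$ cancels, and the sum over $x_0$ collapses via $\sum_{x_0}(\psi(x_0)-\varphi(x_0)) = \|\psi\|-\|\varphi\| = L$ (terms with $\psi(x_0)=\varphi(x_0)$ vanish, and $\varphi\leq\psi$ forbids negative terms).

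What then remains is purely combinatorial: one needs $\frac{L}{(K+1)\binom{K+L}{K+1}} = \frac{1}{\binom{K+L}{K}}$, i.e.\ $L\binom{K+L}{K} = (K+1)\binom{K+L}{K+1}$, which is the instance $n=K+L$, $k=K$ of $\binom{n}{k+1}=\binom{n}{k}(n-k)/(k+1)$. This closes the recurrence and hence the induction. I expect the reindexing of the sum over $\chi$ into a sum over the deleted colour $x_0$ — and the correct tracking of the support condition $\psi(x_0)>\varphi(x_0)$ together with the constraint $\chi\leq\psi$ — to be the only delicate point; the binomial identities themselves are routine.
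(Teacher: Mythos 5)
Your proof is correct and follows exactly the route the paper indicates: the paper's proof is just ``By induction on $L$'', and your argument supplies the induction in full, reducing to the one-step recurrence $\hypergeometric[K] = \drawdelete \klafter \hypergeometric[K+1]$ and verifying it by the binomial identity $L\binom{K+L}{K} = (K+1)\binom{K+L}{K+1}$. The pointwise computation, the reindexing of $\chi$ as $\varphi + 1\ket{x_0}$ with $\psi(x_0) > \varphi(x_0)$, and the base cases all check out.
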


\begin{proof}
By induction on $L$. \QED
\end{proof}

From this result, and Lemma~\ref{DrawDeleteLem} about draw-and-delete,
we can deduce many additional facts about hypergeometric
distributions.

\begin{corollary}
\label{DrawDeleteHypergeometricCor}
\begin{enumerate}
\item \label{DrawDeleteHypergeometricCorNat} Hypergeometric
  channels~\eqref{HypergeometricDiag} are natural in $X$.

\item \label{DrawDeleteHypergeometricCorFlrn} Frequentist learning
  from hypergeometric draws is like learning from the urn:
\[ \xymatrix@R-0.8pc{
\natMlt[N](X)\ar[rr]|-{\circ}^-{\hypergeometric[K]}
   \ar@/_1.5ex/[dr]|-{\circ}_-{\flrn} & &
   \natMlt[K](X)\ar@/^1.5ex/[dl]|-{\circ}^-{\flrn}
\\
& X &
} \]

\item \label{DrawDeleteHypergeometricCorComp} Hypergeometric channels
  compose, as in:
\[ \hspace*{-2em}\xymatrix@R-0.8pc@C-0.5pc{
\natMlt[K\!+\!L\!+\!M](X)\ar[rr]|-{\circ}^-{\hypergeometric[K]}
   \ar@/_2ex/[dr]|-{\circ}_-{\hypergeometric[K\!+\!L]\qquad} & &
   \natMlt[K](X)
\\
& \natMlt[K\!+\!L](X)\ar@/_2ex/[ur]|-{\circ}_-{\quad\hypergeometric[K]}
} \]

\item \label{DrawDeleteHypergeometricCorMulnom} Hypergeometric and
  multinomial channels commute:
\[ \hspace*{-1em}\xymatrix@R-0.8pc{
\natMlt[K\!+\!L](X)\ar[rr]|-{\circ}^-{\hypergeometric[K]} & & \natMlt[K](X)
\\
& \Dst(X)\ar@/^2ex/[ul]|-{\circ}^-{\multinomial[K\!+\!L]\quad}
   \ar@/_2ex/[ur]|-{\circ}_-{\;\multinomial[K]}
} \eqno{\raisebox{-3em}{$\QEDbox$}} \]
\end{enumerate}
\end{corollary}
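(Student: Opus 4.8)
All four statements are consequences of Theorem~\ref{DrawDeleteHypergeometricThm}, which presents every hypergeometric channel as an iterated Kleisli composition of draw-and-delete channels, combined with the single-step properties of $\drawdelete$ recorded in Lemma~\ref{DrawDeleteLem}. The uniform strategy is to unfold each hypergeometric channel into a chain of $\drawdelete$'s and then either invoke a general categorical fact or transport a one-step property through the chain by induction on its length.

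Points~\eqref{DrawDeleteHypergeometricCorNat} and~\eqref{DrawDeleteHypergeometricCorComp} are the purely structural ones. For naturality, each $\drawdelete \colon \Mlt[K\!+\!1](X) \chanto \Mlt[K](X)$ is natural in $X$, as already noted, and a Kleisli composite of natural transformations is again natural; since Theorem~\ref{DrawDeleteHypergeometricThm} exhibits $\hypergeometric[K]$ as such a composite, \eqref{DrawDeleteHypergeometricCorNat} follows. For composition, both the composite $\hypergeometric[K] \klafter \hypergeometric[K\!+\!L]$ and the single channel $\hypergeometric[K] \colon \Mlt[K\!+\!L\!+\!M](X) \chanto \Mlt[K](X)$ unfold, via the same theorem, into a chain of $M+L$ draw-and-deletes from $\Mlt[K\!+\!L\!+\!M](X)$ to $\Mlt[K](X)$; the two sides then agree by associativity of Kleisli composition, with no computation required.

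Points~\eqref{DrawDeleteHypergeometricCorFlrn} and~\eqref{DrawDeleteHypergeometricCorMulnom} use the specific one-step commutations from Lemma~\ref{DrawDeleteLem}. For~\eqref{DrawDeleteHypergeometricCorFlrn}, I would write $\hypergeometric[K] \colon \Mlt[N](X) \chanto \Mlt[K](X)$ as $N-K$ draw-and-deletes and peel them off using $\flrn \klafter \drawdelete = \flrn$ from part~\eqref{DrawDeleteLemFlrn}, obtaining $\flrn \klafter \hypergeometric[K] = \flrn$. For~\eqref{DrawDeleteHypergeometricCorMulnom}, I would instead unfold $\hypergeometric[K] \colon \Mlt[K\!+\!L](X) \chanto \Mlt[K](X)$ into $L$ draw-and-deletes and absorb them into the multinomial one step at a time using $\drawdelete \klafter \multinomial[K\!+\!1] = \multinomial[K]$ from part~\eqref{DrawDeleteLemMult}, yielding $\hypergeometric[K] \klafter \multinomial[K\!+\!L] = \multinomial[K]$.

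The only real care needed, and hence the step I expect to be the main obstacle, is the size bookkeeping in these iterations: one must verify that the intermediate multiset sizes $\Mlt[K\!+\!j](X)$ align so that Theorem~\ref{DrawDeleteHypergeometricThm} and the single-step lemmas apply at each stage. This is routine but error-prone, and the cleanest way to avoid an off-by-one slip is to phrase each of~\eqref{DrawDeleteHypergeometricCorFlrn} and~\eqref{DrawDeleteHypergeometricCorMulnom} as an explicit induction on the number of draw-and-delete steps, the base case being trivial and the inductive step being a single application of the relevant part of Lemma~\ref{DrawDeleteLem}.
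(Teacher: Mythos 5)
Your proposal is correct and follows exactly the route the paper intends: the corollary is stated without an explicit proof precisely because, as the surrounding text indicates, all four points are meant to be read off from the decomposition of $\hypergeometric[K]$ into iterated draw-and-deletes (Theorem~\ref{DrawDeleteHypergeometricThm}) together with the one-step properties of $\drawdelete$ (its naturality and Lemma~\ref{DrawDeleteLem}), transported through the chain by induction. Your attention to the size bookkeeping is the right instinct, but there is nothing further to add.
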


\section{The parallel multinomial law: 4 definitions}\label{ParMulnomLawDefSec}

We have already seen the close connection between multisets and
distributions. This section focuses on a very special `distributivity'
connection that shows how a multiset of distributions can be
transformed into a distribution over multisets. This is a rather
complicated operation, but it is fundamental; it can be described via
a tensor product $\otimes$ of multinomials, and will therefor be
called the \emph{parallel multinomial law}, abbreviated as $\pml$.

This law $\pml$ turns out to be a distributive law, in a categorical
sense. It has popped up in~\cite{KeimelP17,DahlqvistPS18} and is used
in~\cite{DashS20}, for continuous probability, to describe `point
processes' as distributions over multisets. This law satisfies several
elementary properties that combine basic ingredients of probability
theory.

This parallel multinomial law $\pml$ that we are after has the following
type. For a number $K\in\NNO$ and a set $X$ it is a function:
\begin{equation}
\label{MltDstDistributiveDiag}
\xymatrix{
\natMlt[K]\Big(\Dst(X)\Big)\ar[r]^-{\pml} & 
   \Dst\Big(\natMlt[K](X)\Big).
}
\end{equation}

\noindent The dependence of $\pml$ on $K$ and $X$ is left
implicit. Later on we develop a version of $\pml$ without the size
parameter $K$.

This map $\pml$ turns a $K$-element multiset of distributions over $X$
into a distribution over $K$-element multisets over $X$. It is not
immediately clear how to do this. It turns out that there are several
ways to describe $\pml$. This section is devoted solely to defining
this law, in four different manners --- yielding each time the same
result. The subsequent section collects basic properties of $\pml$.

\smallskip

\subsubsection*{\textbf{First definition}}

Since the law that we are after is rather complicated, we start with
an example.

\begin{example}
\label{ParMulnomLawFirstEx}
Let $X = \{a,b\}$ be a set with two distributions
$\omega,\rho\in\Dst(X)$, given by:
\[ \begin{array}{rclcrcl}
\omega
& = &
\frac{1}{3}\ket{a} + \frac{2}{3}\ket{b}
& \qquad\mbox{and}\qquad &
\rho
& = &
\frac{3}{4}\ket{a} + \frac{1}{4}\ket{b}. 
\end{array} \]

\noindent We will define $\pml$ on the multiset of distributions
$2\ket{\omega} + 1\ket{\rho}$ of size $K=3$. The result should be a
distribution on multisets of size $K=3$ over $X$. There are four such
multisets, namely:
\[ 3\ket{a}
\qquad
2\ket{a}+1\ket{b}
\qquad
1\ket{a}+2\ket{b}
\qquad
3\ket{b}. \]

\noindent The goal is to assign a probability to each of them. The
map $\pml$ does this in the following way:
\[ \begin{array}{rcl}
\lefteqn{\pml\big(2\ket{\omega} + 1\ket{\rho}\big)}
\\
& = &
\omega(a)\cdot\omega(a)\cdot\rho(a)\bigket{3\ket{a}}
\\[+0.2em]
& & \; +\,
\Big(\omega(a)\cdot\omega(a)\cdot\rho(b) + \omega(a)\cdot\omega(b)\cdot\rho(a)
\\
& & \qquad +\;
\omega(b)\cdot\omega(a)\cdot\rho(a)\Big)\bigket{2\ket{a}+1\ket{b}}
\\[+0.2em]
& & \; +\,
\Big(\omega(a)\cdot\omega(b)\cdot\rho(b) + \omega(b)\cdot\omega(a)\cdot\rho(b)
\\
& & \qquad +\;
\omega(b)\cdot\omega(b)\cdot\rho(a)\Big)\bigket{1\ket{a}+2\ket{b}}
\\[+0.2em]
& & \; +\,
\omega(b)\cdot\omega(b)\cdot\rho(b)\bigket{3\ket{b}}
\\[+0.3em]
& = &
\frac{1}{3}\cdot\frac{1}{3}\cdot\frac{3}{4}\bigket{3\ket{a}}
\\[+0.2em]
& & \; +\,
\Big(\frac{1}{3}\cdot\frac{1}{3}\cdot\frac{1}{4} +
   \frac{1}{3}\cdot\frac{2}{3}\cdot\frac{3}{4} +
   \frac{2}{3}\cdot\frac{1}{3}\cdot\frac{3}{4}\Big)\bigket{2\ket{a}+1\ket{b}}
\\[+0.4em]
& & \; +\,
\Big(\frac{1}{3}\cdot\frac{2}{3}\cdot\frac{1}{4} +
   \frac{2}{3}\cdot\frac{1}{3}\cdot\frac{1}{4} +
   \frac{2}{3}\cdot\frac{2}{3}\cdot\frac{3}{4}\Big)\bigket{1\ket{a}+2\ket{b}}
\\[+0.4em]
& & \; +\,
\frac{2}{3}\cdot\frac{2}{3}\cdot\frac{1}{4}\bigket{3\ket{b}}
\\[+0.3em]
& = &
\frac{1}{12}\bigket{3\ket{a}} \,+\,
   \frac{13}{36}\bigket{2\ket{a}+1\ket{b}} 
\\[+0.2em]
& & \; +\,
   \frac{4}{9}\bigket{1\ket{a}+2\ket{b}} \,+\,
   \frac{1}{9}\bigket{3\ket{b}}.
\end{array} \]

\noindent Notice that the larger outer brackets $\bigket{-}$ involve a
distribution over multisets, given by the smaller inner brackets
$\ket{-}$.
\end{example}

\smallskip

We now formulate the function $\pml$
from~\eqref{MltDstDistributiveDiag} in general, for arbitrary $K$ and
$X$. It is defined on a multiset $\sum_{i} n_{i}\ket{\omega_{i}}$ with
multiplicities $n_{i}\in\NNO$ satisfying $\sum_{i}n_{i} = K$, and with
distributions $\omega_{i}\in\Dst(X)$.  The number
$\pml\big(\sum_{i}n_{i}\ket{\omega_i}\big)(\varphi)$ describes the
probability of the $K$-element multiset $\varphi$ over $X$, by using
for each element occurring in $\varphi$ the probability of that
element in the corresponding distribution in
$\sum_{i}n_{i}\ket{\omega_i}$.

In order to make this description precise we assume that the indices
$i$ are somehow ordered, say as $i_{1},\ldots,i_{m}$ and use this
ordering to form a product distribution $\bigotimes_{i}
\omega_{i}^{n_i} \in\Dst\big(X^{K}\big)$.
\[ \begin{array}{rcl}
\bigotimes_{i} \omega_{i}^{n_i}
& = &
\underbrace{\omega_{i_1}\otimes\cdots\otimes\omega_{i_1}}_{n_{i_1}\text{ times}}
  \otimes \;\cdots\; \otimes
\underbrace{\omega_{i_m}\otimes\cdots\otimes\omega_{i_m}}_{n_{i_m}\text{ times}}.
\end{array} \]

\noindent Now we formulate the first definition:
\begin{equation}
\label{ParMulnomLawFirstEqn}
\begin{array}{rcl}
\lefteqn{\textstyle\pml\big(\sum_{i}n_{i}\ket{\omega_i}\big)}
\\[+0.5em]
& \coloneqq &
\displaystyle\sum_{\vec{x}\in X^{K}}\, \textstyle
   \Big(\!\bigotimes \omega_{i}^{n_i}\Big)(\vec{x})\,
   \bigket{\acc(\vec{x})}
\\
& = &
\displaystyle\sum_{\varphi\in\natMlt[K](X)}\,
   \left(\sum_{\vec{x}\in\acc^{-1}(\varphi)}\, \textstyle
   \Big(\!\bigotimes \omega_{i}^{n_i}\Big)(\vec{x})\right)\bigket{\varphi}.
\end{array}
\end{equation}

\noindent This formulation has been used in
Example~\ref{ParMulnomLawFirstEx}.

\smallskip

\subsubsection*{\textbf{Second definition}}

There is an alternative formulation of the parallel multinomial law,
using multiple multinomial distributions, put in parallel via a tensor
product $\otimes$. This formulation is the basis for the phrase
`parallel multinomial'.

\begin{equation}
\label{ParMulnomLawSecondEqn}
\begin{array}{rcl}
\lefteqn{\textstyle\pml\big(\sum_{i}n_{i}\ket{\omega_i}\big)}
\\[+0.5em]
& \coloneqq &
\Dst\big(\mbox{\Large+}\big)\Big(\!\bigotimes_{i} 
   \multinomial[n_{i}](\omega_{i})\Big)
\\[+0.5em]
& = &
\displaystyle\!\!\sum_{i,\,\varphi_{i}\in\natMlt[n_i](X)} \textstyle
   \Big(\prod_{i} \multinomial[n_{i}](\omega_{i})(\varphi_{i})\Big)\,
   \bigket{\sum_{i}\varphi_{i}}
\end{array}
\end{equation}

\noindent The sum that we use here as type:
\[ \xymatrix@C-1.7pc{
\natMlt[n_{i_1}](X) \!\times \cdots \times\! \natMlt[n_{i_m}](X)
   \ar[rr]^-{\mbox{\Large+}} & & 
   \natMlt[\underbrace{n_{i_1} \!+\! \cdots \!+\! n_{i_m}}_{K}](X).
} \]

\subsubsection*{\textbf{Third definition}}

Our third definition of $\pml$ is more abstract. It uses the
coequaliser property of Lemma~\ref{AccCoeqLem}. It determines $\pml$
as the unique (dashed) map in:
\begin{equation}
\label{ParMulnomLawThirdEqn}
\vcenter{\xymatrix@R-1pc{
\Dst(X)^{K}\ar[dr]_-{\bigotimes}\ar@{->>}[rr]^-{\acc} & &
   \natMlt[K]\big(\Dst(X)\big)\ar@{..>}[dd]^-{\pml}
\\
& \Dst(X^{K})\ar[dr]_-{\Dst(\acc)}
\\
& & \Dst\big(\natMlt[K](X)\big)
}}
\end{equation}

\noindent There is an important side-condition in
Lemma~\ref{AccCoeqLem}, namely that the composite $\Dst(\acc) \after
\bigotimes \colon \Dst(X)^{K} \rightarrow \Dst\big(\natMlt[K](X)\big)$
is stable under permutations. This is easy to check.

This third formulation of the parallel multinomial law is not very
useful for actual calculations, like in
Example~\ref{ParMulnomLawFirstEx}.  But it is useful for proving
properties about $\pml$, via the uniqueness part of the third
definition.

\smallskip

\subsubsection*{\textbf{Fourth definition}}

For our fourth and last definition we have to piece together some
basic observations.
\begin{enumerate}
\item (From~\cite[p.82]{KeimelP17}) If $M$ is a commutative monoid,
  then so is the set $\Dst(M)$ of distributions on $M$, with sum:
\begin{equation}
\label{DstCMonEqn}
\begin{array}{rcl}
\omega + \rho
& \coloneqq &
\Dst(+)(\omega\otimes\rho)
\\[+0.2em]
& = &
\displaystyle\!\sum_{x,y\in M}\, (\omega\otimes\rho)(x,y)\bigket{x\!+\!y}.
\end{array}
\end{equation}

\item Such commutative monoid structure corresponds to an
  Eilenberg-Moore algebra $\alpha\colon \natMlt(\Dst(M)) \rightarrow
  \Dst(M)$ of the multiset monad $\Mlt$, given by:
\[ \begin{array}{rcl}
\alpha\big(\sum_{i}n_{i}\ket{\omega_i}\big)
& = &
\displaystyle\sum_{\vec{x} \in M^{K}}\, \textstyle
   \big(\!\bigotimes_{i}\omega_{i}^{n_i}\big)(\vec{x})
   \,\bigket{\sum\vec{x}\,}
\end{array} \]

\noindent where $K = \sum_{i}n_{i}$.

\item Applying the previous two points with commutative monoid $M =
  \natMlt(X)$ of multisets yields an Eilenberg-Moore algebra:
\begin{equation}
\label{DstMltAlgDiag}
\xymatrix{
\natMlt\Big(\Dst\big(\natMlt(X)\big)\Big)\ar[r]^-{\alpha} &
   \Dst\big(\natMlt(X)\big)
}
\end{equation}
\end{enumerate}

\noindent We now define:
\begin{equation}
\label{ParMulnomLawFourthEqn}
\xymatrix@C-1pc{
\pml
\coloneqq
\Big(\natMlt\Dst(X)\ar[rr]^-{\natMlt\Dst(\eta)} & &
   \natMlt\Dst\natMlt(X)\ar[r]^-{\alpha} &
   \Dst\natMlt(X)\Big).
}
\end{equation}

\begin{proposition}
\label{ParMulnomLawFirstSecondThirdFourthProp}
The definition of $\pml$ in~\eqref{ParMulnomLawFourthEqn} restricts to
$\natMlt[K](\Dst(X)) \rightarrow \Dst\big(\natMlt[K](X)\big)$, for
each $K\in\NNO$. This restriction is the same $\pml$ as defined
in~\eqref{ParMulnomLawFirstEqn}, \eqref{ParMulnomLawSecondEqn}
and~\eqref{ParMulnomLawThirdEqn}.
\end{proposition}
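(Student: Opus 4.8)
The plan is to establish the four definitions are equal by showing the first, second, and fourth all agree, and that the third is characterized by the same universal property satisfied by the first. I would begin by noting that the fourth definition, built from the Eilenberg-Moore algebra $\alpha$ in~\eqref{DstMltAlgDiag}, does not carry an explicit size parameter, so the first task is to verify the restriction claim: that $\pml$ as defined in~\eqref{ParMulnomLawFourthEqn} maps $\natMlt[K](\Dst(X))$ into $\Dst(\natMlt[K](X))$. This follows because $\natMlt\Dst(\eta)$ preserves size, sending a $K$-sized multiset of distributions on $X$ to a $K$-sized multiset of distributions on $\natMlt(X)$, and then $\alpha$, being built from $\bigotimes$ and summation in the monoid $\natMlt(X)$, produces a distribution supported on multisets of total size $K$, since summing $K$ singletons $1\ket{x}$ in $\natMlt(X)$ yields a multiset of size exactly $K$.

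Next I would show that the fourth definition, when unwound on a multiset $\sum_i n_i\ket{\omega_i}$, reproduces the first definition~\eqref{ParMulnomLawFirstEqn}. Applying $\natMlt\Dst(\eta)$ replaces each $\omega_i\in\Dst(X)$ by $\Dst(\eta)(\omega_i)\in\Dst(\natMlt(X))$, a distribution on singleton multisets. Feeding this into the formula for $\alpha$ given in the fourth definition, the big tensor $\bigotimes_i(\Dst(\eta)(\omega_i))^{n_i}$ ranges over tuples $\vec{x}\in (\natMlt(X))^K$ of singletons, i.e.\ effectively over $X^K$, with weight $\big(\bigotimes\omega_i^{n_i}\big)(\vec{x})$, and the monoid sum $\sum\vec{x}$ in $\natMlt(X)$ is exactly $\acc(\vec{x})$. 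Thus $\alpha$ collapses to $\sum_{\vec{x}\in X^K}\big(\bigotimes\omega_i^{n_i}\big)(\vec{x})\bigket{\acc(\vec{x})}$, which is precisely~\eqref{ParMulnomLawFirstEqn}. The equality of the first and second definitions is then a direct computation: grouping the sum over $\vec{x}\in X^K$ according to which block of coordinates comes from which $\omega_i$, and recognizing that summing $\big(\bigotimes\omega_i^{n_i}\big)$ over the preimage $\acc^{-1}$ within each block yields $\multinomial[n_i](\omega_i)(\varphi_i)$, while the outer accumulation becomes the monoid sum $\sum_i\varphi_i$ realized by the map $\mbox{\Large+}$.

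For the third definition I would invoke the coequaliser property of Lemma~\ref{AccCoeqLem} directly. The map $\Dst(\acc)\after\bigotimes\colon \Dst(X)^K\to\Dst(\natMlt[K](X))$ is stable under permutation (as the paper already notes this is easy to check, since $\bigotimes$ is symmetric up to the permutation action and $\acc$ is permutation-invariant), so it factors uniquely through $\acc\colon \Dst(X)^K\twoheadrightarrow\natMlt[K](\Dst(X))$. To conclude the factoring map is the first-definition $\pml$, I verify that the first definition satisfies the defining triangle, namely $\pml\after\acc = \Dst(\acc)\after\bigotimes$: evaluating the left side on a tuple $(\omega_{i_1},\ldots,\omega_{i_K})$ whose accumulation is $\sum_i n_i\ket{\omega_i}$ gives exactly $\sum_{\vec{x}}\big(\bigotimes\omega_i^{n_i}\big)(\vec{x})\bigket{\acc(\vec{x})}$, which is $\Dst(\acc)$ applied to the product distribution $\bigotimes$. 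By uniqueness the two maps coincide.

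The main obstacle I anticipate is bookkeeping in the second-definition computation: one must carefully verify that partitioning $X^K$ into blocks indexed by the distinct distributions and summing within each block genuinely factors the product distribution as a product of per-block multinomials, which relies on the independence structure of $\bigotimes$ and on the multiset coefficients matching up correctly when the accumulation is split as $\sum_i\varphi_i$. The subtlety is that a single target multiset $\varphi$ can be reached by many choices of the $\varphi_i$, so one must confirm the convolution over all such decompositions in~\eqref{ParMulnomLawSecondEqn} reassembles precisely the inner sum $\sum_{\vec{x}\in\acc^{-1}(\varphi)}$ of~\eqref{ParMulnomLawFirstEqn}; this is where the argument needs the most care, though it remains a routine (if intricate) combinatorial identity once the block decomposition is set up.
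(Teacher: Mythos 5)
Your proposal is correct, and its skeleton matches the paper's (very terse) proof: the third definition is handled by the coequaliser uniqueness of Lemma~\ref{AccCoeqLem} after checking that the explicit formula satisfies $\pml \after \acc = \Dst(\acc)\after\bigotimes$, and the fourth is handled by unravelling the algebra $\alpha$ on singleton-supported distributions. The one genuine divergence is the equivalence of~\eqref{ParMulnomLawFirstEqn} and~\eqref{ParMulnomLawSecondEqn}: the paper routes this through Theorem~\ref{MultinomialIIDAccThm}, i.e.\ the identity $\multinomial[n_i] = \acc \klafter \iid[n_i]$, so that $\Dst(+)\big(\bigotimes_i \multinomial[n_i](\omega_i)\big) = \Dst\big(+\after(\acc\times\cdots\times\acc)\big)\big(\bigotimes_i\omega_i^{n_i}\big) = \Dst(\acc)\big(\bigotimes_i\omega_i^{n_i}\big)$ follows from naturality and the fact that accumulation of a concatenation is the sum of the accumulations; you instead carry out the block decomposition of $X^{K}$ and the convolution over decompositions $\varphi = \sum_i\varphi_i$ by hand. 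Both are sound; the paper's route buys a two-line argument at the cost of invoking the earlier theorem, while yours is self-contained and correctly isolates the only delicate point (that summing over all decompositions $(\varphi_i)$ with $\sum_i\varphi_i=\varphi$ reassembles exactly the fibre $\acc^{-1}(\varphi)$). A further small difference in your favour: you explicitly verify the size-restriction claim for~\eqref{ParMulnomLawFourthEqn}, which the paper's proof passes over in silence, and you unwind the fourth definition to the first rather than to the second, which is arguably the more natural target given the form of $\alpha$.
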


\begin{proof}
For the equivalence of the first two
formulations~\eqref{ParMulnomLawFirstEqn}
and~\eqref{ParMulnomLawSecondEqn} we use
Theorem~\ref{MultinomialIIDAccThm}. This definition of $\pml$ commutes
with $\acc$ and is thus the same as the third one
in~\eqref{ParMulnomLawThirdEqn}.  Finally, when we unravel the fourth
formulation~\eqref{ParMulnomLawFourthEqn} we get the second
one~\eqref{ParMulnomLawSecondEqn}. \QED
\end{proof}

\section{The parallel multinomial law: properties}\label{ParMulnomLawPropSec}

The next result enriches what we already know: the rectangle on the
right below is added to the (known) one on the left.

\begin{proposition}
\label{ParMulnomLawAccArrProp}
The parallel multinomial law $\pml$ is the unique channel making
both rectangles below commute.
\[ \xymatrix@R-0.8pc@C+1pc{
\Dst(X)^{K}\ar[d]|-{\circ}_-{\bigotimes}\ar[r]|-{\circ}^-{\acc} & 
   \natMlt[K]\big(\Dst(X)\big)\ar@{..>}[d]|-{\circ}_-{\pml}
   \ar[r]|-{\circ}^-{\arr} & 
   \Dst(X)^{K}\ar[d]|-{\circ}^-{\bigotimes}
\\
X^{K}\ar[r]|-{\circ}^-{\acc} & 
   \natMlt[K](X)\ar[r]|-{\circ}^-{\arr} & X^{K}
} \]
\end{proposition}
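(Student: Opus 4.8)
The plan is to read the left-hand rectangle as a restatement of the third definition~\eqref{ParMulnomLawThirdEqn} and to reduce the right-hand rectangle to the tensor compatibility already recorded in Lemma~\ref{AccArrLem}\eqref{AccArrLemTensor}, using that accumulation is epic in $\Kl(\Dst)$. In channel form the left square says $\pml \klafter \acc = \acc \klafter \bigotimes \colon \Dst(X)^{K} \chanto \natMlt[K](X)$; unfolding the Kleisli composites (using $g \gg \eta(\Phi) = g(\Phi)$ for the promoted bottom $\acc$, which turns it into $\Dst(\acc)$) shows this is literally~\eqref{ParMulnomLawThirdEqn}, so the left rectangle commutes by Proposition~\ref{ParMulnomLawFirstSecondThirdFourthProp}.

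First I would note that the promoted channel $\acc \colon \Dst(X)^{K} \chanto \natMlt[K]\big(\Dst(X)\big)$ is epic. For a promoted function, Kleisli precomposition is ordinary precomposition, $g \klafter \acc = g \after \acc$, again because $g \gg \eta(\Phi) = g(\Phi)$. As the underlying function $\acc$ is surjective --- every size-$K$ multiset $\sum_{i}n_{i}\ket{\omega_i}$ of distributions is the accumulation of some length-$K$ list in $\Dst(X)^{K}$ --- precomposition with $\acc$ is injective on channels, so $\acc$ is epic. This already settles uniqueness: any channel satisfying the left square agrees with $\acc \klafter \bigotimes$ after precomposition with $\acc$, and hence equals $\pml$.

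It remains to verify the right rectangle, $\bigotimes \klafter \arr = \arr \klafter \pml$ as channels $\natMlt[K]\big(\Dst(X)\big) \chanto X^{K}$. I would precompose both legs with the epi $\acc$. On one side, the left square gives $\arr \klafter \pml \klafter \acc = \arr \klafter \acc \klafter \bigotimes$; on the other side I have $\bigotimes \klafter \arr \klafter \acc$. These two channels $\Dst(X)^{K} \chanto X^{K}$ are precisely the down-then-right and right-then-down legs of the outer rectangle of Lemma~\ref{AccArrLem}\eqref{AccArrLemTensor}, which asserts exactly that $\arr \klafter \acc$ commutes with $\bigotimes$. Hence they coincide, and since $\acc$ is epic the right rectangle commutes.

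The argument is entirely diagrammatic, riding on the coequaliser property of Lemma~\ref{AccCoeqLem} and the tensor compatibility of Lemma~\ref{AccArrLem}; no return to the explicit formula~\eqref{ParMulnomLawFirstEqn} is needed. I expect the only point demanding care to be the bookkeeping of the two instances of $\arr \klafter \acc$ appearing in Lemma~\ref{AccArrLem}\eqref{AccArrLemTensor} --- one taken over the set $\Dst(X)$, the other over $X$ --- so that the legs obtained after precomposition match that lemma on the nose rather than merely up to naturality.
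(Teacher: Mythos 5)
Your proposal is correct and follows essentially the same route as the paper: the left rectangle is identified with the third definition~\eqref{ParMulnomLawThirdEqn} (giving uniqueness via the epic $\acc$), and the right rectangle is obtained by precomposing with $\acc$ and invoking Lemma~\ref{AccArrLem}~\eqref{AccArrLemTensor} together with the left square. The extra detail you supply --- why the promoted $\acc$ is epic in $\Kl(\Dst)$ and why the Kleisli composites unfold to the set-level diagram --- is accurate and merely makes explicit what the paper leaves implicit.
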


\begin{proof}
The rectangle on the left is the third formulation of $\pml$
in~\eqref{ParMulnomLawThirdEqn} and thus provides uniqueness.
Commutation of the rectangle of the right follows from a 
uniqueness argument, using that the outer rectangle commutes
by Lemma~\ref{AccArrLem}~\eqref{AccArrLemTensor}:
\[ \begin{array}[b]{rcll}
\bigotimes \klafter\, \arr \klafter \acc
& = &
\arr \klafter \acc \klafter \bigotimes
   \qquad & \mbox{by Lemma~\ref{AccArrLem}~\eqref{AccArrLemTensor}}
\\
& = &
\arr \klafter \pml \klafter \acc
   & \mbox{by~\eqref{ParMulnomLawThirdEqn}.}
\end{array} \eqno{\QEDbox} \]
\end{proof}

This result shows that $\pml$ is squeezed between $\bigotimes$, both
on the left and on the right. This $\bigotimes$ is a distributive
law. We shall prove the same about $\pml$ below.

But first we look at interaction with frequentist learning.

\begin{theorem}
\label{ParMulnomLawFlrnThm}
The distributive law $\pml$ commutes with frequentist learning,
in the sense that for $\Psi\in\natMlt[K]\big(\Dst(X)\big)$,
\[ \begin{array}{rcl}
\flrn \gg \pml(\Psi)
& = &
\mu\big(\flrn(\Psi)\big).
\end{array} \]

\noindent Equivalently, in diagrammatic form:
\[ \xymatrix@R-0.8pc@C+1pc{
\natMlt[K]\big(\Dst(X)\big)\ar[r]|-{\circ}^-{\pml}\ar[d]|-{\circ}_{\flrn}
  & \natMlt[K](X)\ar[d]|-{\circ}^{\flrn}
\\
\Dst(X)\ar[r]|-{\circ}^-{\idmap} & X
} \]

\noindent The channel $\Dst(X)\chanto X$ at the bottom is the identity
function $\Dst(X) \rightarrow \Dst(X)$.
\end{theorem}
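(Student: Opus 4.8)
The plan is to compute $\flrn \gg \pml(\Psi)$ directly, starting from the second, ``parallel multinomial'' formulation of $\pml$ in~\eqref{ParMulnomLawSecondEqn}, and to reduce everything to the single-urn fact that frequentist learning inverts a multinomial draw, namely Proposition~\ref{MultinomialSumFlrnProp}~\eqref{MultinomialSumFlrnPropFlrn}. Write $\Psi = \sum_{i} n_{i}\ket{\omega_i}$ with $\sum_{i} n_{i} = K$; then the right-hand side is immediately $\mu(\flrn(\Psi)) = \sum_{i} \frac{n_i}{K}\omega_{i}$, a convex combination of the $\omega_i$.

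The first ingredient I would record is a convexity property of $\flrn$ on sums of multisets: if $\varphi_{i}\in\natMlt[n_i](X)$ and $\psi = \sum_{i}\varphi_{i}$ has size $K = \sum_{i} n_{i}$, then $\flrn(\psi) = \sum_{i} \frac{n_i}{K}\flrn(\varphi_{i})$. This is immediate from the defining normalisation, since $\flrn(\psi)(y) = \frac{\psi(y)}{K} = \sum_{i} \frac{n_i}{K}\cdot\frac{\varphi_i(y)}{n_i}$, and each $n_{i}\geq 1$ so that every $\flrn(\varphi_{i})$ is defined.

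Next I would expand $\big(\flrn \gg \pml(\Psi)\big)(y)$ using~\eqref{ParMulnomLawSecondEqn}, turning the state transformation into a sum over tuples $(\varphi_{1},\ldots,\varphi_{m})$ with $\varphi_{i}\in\natMlt[n_i](X)$, weighted by $\prod_{i} \multinomial[n_{i}](\omega_{i})(\varphi_{i})$ and evaluated at $\flrn(\sum_{i}\varphi_{i})(y)$. Substituting the convexity property and moving the sum over $i$ to the outside, the inner sum over each tuple factorises: for the indices $j\neq i$ the factors $\multinomial[n_{j}](\omega_{j})(\varphi_{j})$ sum to $1$, being probability distributions, so only the $i$-th multinomial together with $\flrn(\varphi_{i})(y)$ survives. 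This leaves $\sum_{i} \frac{n_i}{K}\big(\flrn\klafter\multinomial[n_{i}]\big)(\omega_{i})(y)$, which by Proposition~\ref{MultinomialSumFlrnProp}~\eqref{MultinomialSumFlrnPropFlrn} equals $\sum_{i} \frac{n_i}{K}\omega_{i}(y) = \mu(\flrn(\Psi))(y)$, as required.

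The only place requiring care is the marginalisation step: one must keep the $m$-fold tuple of summation variables straight and verify that the factors indexed by $j\neq i$ genuinely sum to $1$ before invoking the single-urn lemma; everything else is routine. Alternatively, one can compute straight from the first definition~\eqref{ParMulnomLawFirstEqn}, writing $\flrn(\acc(\vec{x}))(y) = \frac{1}{K}\sum_{j} \indic{x_{j}=y}$ and marginalising the product distribution $\bigotimes\omega_{i}^{n_i}$ coordinatewise. This also yields $\frac{1}{K}\sum_{i} n_{i}\,\omega_{i}(y)$, but is slightly more bookkeeping-heavy, since it must track which factor of the product sits at each position $j$ and match the count of each $\omega_{i}$ against its multiplicity $n_{i}$.
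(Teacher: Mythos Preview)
Your proof is correct and follows essentially the same route as the paper: both start from the second formulation~\eqref{ParMulnomLawSecondEqn}, use the convexity of $\flrn$ with respect to sums of multisets to reduce to a convex combination of single-urn terms $\flrn\klafter\multinomial[n_i]$, and then invoke Proposition~\ref{MultinomialSumFlrnProp}~\eqref{MultinomialSumFlrnPropFlrn}. The only cosmetic difference is that the paper packages your explicit marginalisation step as the lemma $\flrn \gg \Dst(+)(\Omega\otimes\Theta) = \tfrac{K}{K+L}(\flrn\gg\Omega) + \tfrac{L}{K+L}(\flrn\gg\Theta)$ for $\Omega\in\Dst(\natMlt[K](X))$, $\Theta\in\Dst(\natMlt[L](X))$, and then applies its iterated version.
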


\begin{proof}
Let $\Psi = \sum_{i}n_{i}\ket{\omega_{i}}  \in
\natMlt[K]\big(\Dst(X)\big)$.
\[ \hspace*{-4em}\begin{array}[b]{rcl}
\lefteqn{\textstyle\flrn \gg \pml(\Psi)}
\\
& \smash{\stackrel{\eqref{ParMulnomLawSecondEqn}}{=}} &
\flrn \gg \Dst\big(\mbox{\Large+}\big)
   \Big(\!\bigotimes_{i} \multinomial[n_{i}](\omega_{i})\Big)
\\[+0.2em]
& \smash{\stackrel{(*)}{=}} &
{\displaystyle\sum}_{i}\; \frac{n_i}{K}\cdot \big(\flrn \gg
   \multinomial[n_{i}](\omega_{i})\big)
\\[+0.5em]
& = &
{\displaystyle\sum}_{i}\; \frac{n_i}{K}\cdot \omega_{i}
   \qquad\mbox{by Proposition~\ref{MultinomialSumFlrnProp}~\eqref{MultinomialSumFlrnPropFlrn}}
\\[+0.5em]
& = &
\mu\big(\sum_{i} \frac{n_i}{K}\ket{\omega_{i}}\big)
\\[+0.2em]
& = &
\mu\big(\flrn(\Psi)\big).
\end{array} \]

\noindent The marked equation $\smash{\stackrel{(*)}{=}}$ follows from
(a generalisation of) the following fact, for
$\Omega\in\Dst(\natMlt[K](X))$, $\Theta\in\Dst(\natMlt[L](X)$.
\[ \begin{array}[b]{rcl}
\lefteqn{\flrn \gg \Dst(+)(\Omega\otimes\Theta)}
\\[+0.3em]
& = &
\frac{K}{K+L}\cdot \big(\flrn \gg \Omega\big) + 
   \frac{L}{K+L}\cdot \big(\flrn \gg \Theta\big).
\end{array} \eqno{\QEDbox} \]
\end{proof}

Our next result shows that the parallel multinomial law commutes
with hypergeometric channels.

\begin{theorem}
\label{ParMulnomLawHypergeomThm}
The parallel multinomial $\pml$ commutes with draw-and-delete:
\[ \xymatrix@R-0.8pc@C+1pc{
\natMlt[K+1]\big(\Dst(X)\big)\ar[d]|-{\circ}_{\pml}
   \ar[r]|-{\circ}^-{\drawdelete} &
   \natMlt[K]\big(\Dst(X)\big)\ar[d]|-{\circ}^-{\pml}
\\
\natMlt[K+1](X)\ar[r]|-{\circ}^-{\drawdelete} & 
   \natMlt[K](X)
} \]

\noindent and then also with hypergeometrics: for $N\geq K$ one has,
\[ \xymatrix@R-0.8pc@C+2pc{
\natMlt[N]\big(\Dst(X)\big)\ar[d]|-{\circ}_{\pml}
   \ar[r]|-{\circ}^-{\hypergeometric[K]} &
   \natMlt[K]\big(\Dst(X)\big)\ar[d]|-{\circ}^-{\pml}
\\
\natMlt[N](X)\ar[r]|-{\circ}^-{\hypergeometric[K]} & 
   \natMlt[K](X)
} \]
\end{theorem}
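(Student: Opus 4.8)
The plan is to prove the first (draw-and-delete) square and then obtain the second (hypergeometric) square as a formal consequence of it. For the draw-and-delete square I would work with the second, ``parallel multinomial'' description of $\pml$ in~\eqref{ParMulnomLawSecondEqn}, since it exposes precisely the structure that draw-and-delete interacts with, and I would reuse the commutation of draw-and-delete with multinomials from Lemma~\ref{DrawDeleteLem}~\eqref{DrawDeleteLemMult}.

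The computational heart is a decomposition of draw-and-delete over the multiset sum $(+)$. For $\varphi_j\in\natMlt[n_j](X)$ with $\sum_j n_j = K+1$, writing out the pointwise definition of $\drawdelete$ and splitting $\big(\sum_j\varphi_j\big)(x) = \sum_j\varphi_j(x)$ gives
\[ \drawdelete\Big(\sum_j \varphi_j\Big) = \sum_j \frac{n_j}{K+1}\cdot\Big(\varphi_1 + \cdots + \drawdelete(\varphi_j) + \cdots + \varphi_m\Big), \]
so that one deletes from the $j$-th summand with probability proportional to its size $n_j$, leaving the other summands intact (here $\drawdelete(\varphi_j)$ is a distribution and the remaining $\varphi_i$ are added as constant multisets).

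Now fix $\Psi = \sum_i n_i\ket{\omega_i}\in\natMlt[K+1]\big(\Dst(X)\big)$. Going down-then-right, $\pml(\Psi) = \Dst(+)\big(\bigotimes_i \multinomial[n_i](\omega_i)\big)$ by~\eqref{ParMulnomLawSecondEqn}, and applying the decomposition under the expectation over the independent components $\varphi_i\sim\multinomial[n_i](\omega_i)$ (using linearity of Kleisli extension) yields
\[ \drawdelete \gg \pml(\Psi) = \sum_j \frac{n_j}{K+1}\cdot \Dst(+)\Big(\big(\drawdelete \gg \multinomial[n_j](\omega_j)\big) \otimes \bigotimes_{i\neq j}\multinomial[n_i](\omega_i)\Big). \]
By Lemma~\ref{DrawDeleteLem}~\eqref{DrawDeleteLemMult} the $j$-th factor equals $\multinomial[n_j-1](\omega_j)$, so the $j$-th term is exactly $\pml\big(\Psi - 1\ket{\omega_j}\big)$. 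Going right-then-down, the definition of $\drawdelete$ on the multiset of distributions $\Psi$ gives $\drawdelete(\Psi) = \sum_j \frac{n_j}{K+1}\bigket{\Psi - 1\ket{\omega_j}}$, whence $\pml \gg \drawdelete(\Psi) = \sum_j \frac{n_j}{K+1}\,\pml\big(\Psi - 1\ket{\omega_j}\big)$ as well. The two expressions coincide, proving the first square.

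For the hypergeometric square I would argue by induction on $L$ using Theorem~\ref{DrawDeleteHypergeometricThm}, which presents $\hypergeometric[K]\colon \natMlt[K+L](X)\chanto\natMlt[K](X)$ as an $L$-fold Kleisli composite of draw-and-delete channels---over $\Dst(X)$ on the top row and over $X$ on the bottom row simultaneously. Since $\pml$ slides past each individual $\drawdelete$ by the square just proved, peeling off one draw-and-delete at a time slides $\pml$ past the whole composite, giving the hypergeometric square (the base case $L=0$ being the identity). The main obstacle is entirely in the first square: getting the additive decomposition of $\drawdelete$ over $(+)$ right, and checking that after taking the $\multinomial$-expectation the size of the $j$-th component is \emph{deterministically} $n_j$, so the two convex weights $n_j/(K+1)$ genuinely match; once this bookkeeping is in place, the identification with $\pml\big(\Psi - 1\ket{\omega_j}\big)$ is immediate.
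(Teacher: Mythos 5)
Your argument is correct, and your reduction of the hypergeometric square to the draw-and-delete square via Theorem~\ref{DrawDeleteHypergeometricThm} is exactly what the paper does. For the draw-and-delete square itself, however, you take a genuinely different route. The paper works with the third, coequaliser definition~\eqref{ParMulnomLawThirdEqn} of $\pml$: it introduces an auxiliary probabilistic projection channel $\probproj\colon X^{K+1}\chanto X^{K}$ (the uniform mixture of the $K+1$ coordinate deletions), observes that $\probproj$ commutes with $\bigotimes$ and satisfies $\drawdelete\klafter\acc=\acc\klafter\probproj$ and $\arr$-compatibility, and then concludes by precomposing with the epi $\acc$, so that the whole proof is a short chain of rewrites needing no explicit probabilities. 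You instead work with the second definition~\eqref{ParMulnomLawSecondEqn} and prove a convex decomposition of $\drawdelete$ over the multiset sum, namely $\drawdelete\big(\sum_j\varphi_j\big)=\sum_j\tfrac{n_j}{K+1}\big(\varphi_1+\cdots+\drawdelete(\varphi_j)+\cdots+\varphi_m\big)$, which is correct and whose passage under the expectation is legitimate precisely because each $\varphi_j\sim\multinomial[n_j](\omega_j)$ has size $n_j$ deterministically --- a point you rightly flag as the crux. This lets you invoke Lemma~\ref{DrawDeleteLem}~\eqref{DrawDeleteLemMult} componentwise and match terms with $\drawdelete$ applied to the multiset of distributions. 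Your version is more computational and self-contained (it reuses an already-stated lemma rather than introducing $\probproj$, and the decomposition of $\drawdelete$ over $+$ is of independent interest, in the spirit of Proposition~\ref{ParMulnomLawPlusProp}); the paper's version buys brevity and avoids all bookkeeping by pushing the work into the universal property of $\acc$ and two ``not hard to see'' commutations with $\probproj$. Both are sound.
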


\begin{proof}
Showing commutation of $\pml$ with $\drawdelete$ suffices, by
Theorem~\ref{DrawDeleteHypergeometricThm}. We use an auxiliary
probabilistic projection channel $\probproj \colon X^{K+1} \chanto
X^{K}$. It is a uniform distribution over projecting away at each
position.
\[ \begin{array}{rcl}
\lefteqn{\probproj(x_{1}, \ldots, x_{K+1})}
\\[+0.2em]
& \coloneqq &
\displaystyle\sum_{1\leq k\leq K+1} \frac{1}{K\!+\!1}
   \bigket{x_{1}, \ldots, x_{k-1}, x_{k+1}, \ldots, x_{K+1}}.
\end{array} \]

\noindent It is not hard to see that the following diagram
commutes.
\[ \xymatrix@R-0.8pc@C-0.8pc{
\Dst(X)^{K+1}\ar[r]|-{\circ}^-{\bigotimes}\ar[d]|-{\circ}_-{\probproj} & 
X^{K+1}\ar[d]|-{\circ}_{\probproj}\ar[r]|-{\circ}^-{\acc} & 
   \natMlt[K\!+\!1](X)\ar[d]|-{\circ}^-{\drawdelete}\ar[r]|-{\circ}^-{\arr} &
   X^{K+1}\ar[d]|-{\circ}^{\probproj}
\\
\Dst(X)^{K}\ar[r]|-{\circ}^-{\bigotimes} & 
X^{K}\ar[r]|-{\circ}^-{\acc} & \natMlt[K](X)\ar[r]|-{\circ}^-{\arr} & X^{K}
} \]

\noindent In the end we use that $\acc$ is coequaliser (so surjective/epic):
\[ \begin{array}[b]{rcll}
\drawdelete \klafter \pml \klafter \acc
& \smash{\stackrel{\eqref{ParMulnomLawThirdEqn}}{=}} &
\drawdelete \klafter \acc \klafter \bigotimes \qquad
\\
& = &
\acc \klafter \probproj \klafter \bigotimes
\\
& = &
\acc \klafter \bigotimes \klafter\, \probproj
\\
& \smash{\stackrel{\eqref{ParMulnomLawThirdEqn}}{=}} &
\pml \klafter \acc \klafter \probproj
\\
& = &
\pml \klafter \drawdelete \klafter \acc.
\end{array} \eqno{\QEDbox} \]
\end{proof}

\begin{proposition}
\label{ParMulnomLawPlusProp}
The parallel multinomial law commutes with sums of multisets:
\[ \xymatrix@R-0.8pc@C+0.0pc{
\natMlt[K]\big(\Dst(X)\big)\!\times\!\natMlt[L]\big(\Dst(X)\big)
   \ar[d]|-{\circ}_-{+}\ar[r]|-{\circ}^-{\pml\otimes\pml} &
   \natMlt[K](X)\!\times\!\natMlt[L](X)\ar[d]|-{\circ}^-{+}
\\
\natMlt[K\!+\!L]\big(\Dst(X)\big)\ar[r]|-{\circ}^-{\pml} & \natMlt[K\!+\!L](X)
} \]
\end{proposition}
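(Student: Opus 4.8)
The plan is to reduce the claimed identity of channels
\[ (+) \klafter (\pml \otimes \pml) \;=\; \pml \klafter (+) \]
to a statement about \emph{sequences} by precomposing with accumulation, exactly as in the proofs of Proposition~\ref{ParMulnomLawAccArrProp} and Theorem~\ref{ParMulnomLawHypergeomThm}. The three ingredients I would use are: (a) the third description~\eqref{ParMulnomLawThirdEqn} of the law, namely $\pml \klafter \acc = \acc \klafter \bigotimes$; (b) that accumulation $\acc\colon X^{K}\twoheadrightarrow \natMlt[K](X)$ is surjective, so its promotion is epic in $\Kl(\Dst)$ and hence so is a tensor $\acc\otimes\acc$ of two such maps; and (c) the elementary identity $(+)\klafter(\acc\otimes\acc) = \acc\klafter\concat$, where $\concat$ denotes concatenation (promoted to a channel), which holds because accumulating a concatenation is the sum of the two accumulations.

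Concretely, I would precompose the desired equation with the epic channel $\acc\otimes\acc\colon \Dst(X)^{K}\times\Dst(X)^{L}\chanto \natMlt[K](\Dst(X))\times \natMlt[L](\Dst(X))$. Using bifunctoriality of $\otimes$ together with~\eqref{ParMulnomLawThirdEqn} applied on each factor, the left-hand composite becomes $(+)\klafter(\acc\otimes\acc)\klafter(\bigotimes\otimes\bigotimes)$, which by ingredient (c) rewrites as $\acc\klafter\concat\klafter(\bigotimes\otimes\bigotimes)$. Applying (c) and then~\eqref{ParMulnomLawThirdEqn} at size $K\!+\!L$ to the right-hand composite gives $\pml\klafter\acc\klafter\concat = \acc\klafter\bigotimes\klafter\concat$. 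Since $\acc\otimes\acc$ is epic, it then suffices to establish
\[ \concat\klafter(\bigotimes\otimes\bigotimes) \;=\; \bigotimes\klafter\concat, \]
that is, that the iterated tensor commutes with concatenation of sequences of distributions.

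This last identity is the only genuine computation, and it is just coherence (associativity) of $\otimes$ on $\Dst$: the product of $\omega_{1}\otimes\cdots\otimes\omega_{K}$ with $\rho_{1}\otimes\cdots\otimes\rho_{L}$, transported along $X^{K}\times X^{L}\cong X^{K+L}$, is precisely $\omega_{1}\otimes\cdots\otimes\omega_{K}\otimes\rho_{1}\otimes\cdots\otimes\rho_{L}$, the big tensor of the concatenated sequence. I expect the main obstacle to be purely in the bookkeeping --- in particular, keeping straight the fixed ordering used to form $\bigotimes_{i}\omega_{i}^{n_{i}}$ --- but the coequaliser route sidesteps this entirely, since $\acc$ forgets order and its permutation-stability is exactly what makes~\eqref{ParMulnomLawThirdEqn} available. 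A more computational alternative would instead start from the second description~\eqref{ParMulnomLawSecondEqn} and split each combined multinomial $\multinomial[n+m](\omega)$ via Proposition~\ref{MultinomialSumFlrnProp}~\eqref{MultinomialSumFlrnPropSum}; there the real difficulty is the reindexing that arises when $\Psi$ and $\Phi$ share distributions in their support, which is precisely the annoyance avoided by the epic-accumulation argument above. \QED
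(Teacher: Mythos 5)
Your argument is correct, but it takes a genuinely different route from the paper. The paper's proof is a one-line appeal to ``a suitable generalisation of Proposition~\ref{MultinomialSumFlrnProp}~\eqref{MultinomialSumFlrnPropSum}'': that is, it works from the second description~\eqref{ParMulnomLawSecondEqn} of $\pml$ as a tensor of multinomials and splits the combined multinomials via the additivity of $\multinomial[K+L]$ --- exactly the ``computational alternative'' you sketch at the end, reindexing annoyance and all. Your coequaliser route, by contrast, precomposes with $\acc\otimes\acc$, uses~\eqref{ParMulnomLawThirdEqn} on each factor and at size $K\!+\!L$, and reduces the whole statement to $\concat\klafter(\bigotimes\otimes\bigotimes)=\bigotimes\klafter\concat$, which is mere coherence of $\otimes$. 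This is the same strategy the paper itself uses for Proposition~\ref{ParMulnomLawAccArrProp} and Theorem~\ref{ParMulnomLawHypergeomThm}, so it fits the surrounding development well and is arguably more self-contained than the paper's unexpanded one-liner; what it costs you is the need to justify that $\acc\otimes\acc$ is epic. Your ``hence'' there is slightly quick --- a tensor of epis need not be epi in a general monoidal category --- but the gap is easily closed: either note that $\acc\otimes\acc$ is the promotion $\klin{\acc\times\acc}$ of a surjection, or that $\acc$ is \emph{split} epi via $\arr$ (Lemma~\ref{AccArrLem}~\eqref{AccArrLemId}) and split epis are closed under $\otimes$. With that one sentence added, your proof is complete.
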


\begin{proof}
Via a suitable generalisation of
Proposition~\ref{MultinomialSumFlrnProp}~\eqref{MultinomialSumFlrnPropSum}. \QED
\end{proof}

\begin{theorem}
\label{ParMulnomLawLawThm}
The parallel multinomial law $\pml$ is a \emph{distributive law} in
several ways.
\begin{enumerate}
\item \label{ParMulnomLawLawThmKl} It is a distributive law of the
  multiset \emph{functor} $\natMlt[K]$ over the distribution monad
  $\Dst$. This means that $\pml$ commutes with the unit and
  multiplication operations of $\Dst$. Equivalently, $\natMlt[K]$ can
  be lifted to a functor $\Kl(\Dst) \rightarrow \Kl(\Dst)$.

\item \label{ParMulnomLawLawThmMlt} It is thereby also a law $\Mlt\Dst
  \Rightarrow \Dst\Mlt$ of \emph{functor} $\Mlt$ over the monad
  $\Dst$, as in the fourth definition~\eqref{ParMulnomLawFourthEqn},
  so that $\Mlt$ also lifts to $\Kl(\Dst)$.

\item \label{ParMulnomLawLawThmEM} In addition, $\pml\colon\Mlt\Dst
  \Rightarrow \Dst\Mlt$ is a distributive law of \emph{monads}, of
  $\Mlt$ over $\Dst$. Equivalently, $\Dst$ can be lifted to a monad
  $\Dst \colon \CMon \rightarrow \CMon$, on the category $\CMon =
  \EM(\Mlt)$ of commutative monoids (Eilenberg-Moore algebras of
  $\Mlt$).
\end{enumerate}
\end{theorem}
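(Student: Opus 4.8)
The plan is to prove the three claims in increasing order of abstraction, reusing the properties of $\pml$ already established. The key observation is that all three statements are really about one thing—whether $\pml$ respects $\eta$ and $\mu$ of $\Dst$—so the core work is in part~\eqref{ParMulnomLawLawThmKl}, and parts~\eqref{ParMulnomLawLawThmMlt} and~\eqref{ParMulnomLawLawThmEM} then follow by assembling the fixed-size laws and by checking compatibility with $\Mlt$'s own unit and multiplication.

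For part~\eqref{ParMulnomLawLawThmKl}, I would verify the two functor-distributive-law axioms for $\pml \colon \natMlt[K]\Dst \Rightarrow \Dst\natMlt[K]$. The unit axiom states $\pml \after \natMlt[K](\eta) = \eta$, i.e.\ feeding a multiset of point (Dirac) distributions out through $\pml$ returns a Dirac distribution on the accumulated multiset. This is immediate from the first definition~\eqref{ParMulnomLawFirstEqn}: if each $\omega_i = 1\ket{x_i}$ then $\bigotimes \omega_i^{n_i}$ is already a point distribution on a single sequence $\vec{x}$, so $\pml$ returns $1\bigket{\acc(\vec{x})}$. The multiplication axiom, $\pml \after \natMlt[K](\mu) = \mu \after \Dst(\pml) \after \pml$, is the substantive one. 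Rather than computing directly, I would exploit the coequaliser/third definition~\eqref{ParMulnomLawThirdEqn} together with naturality of $\bigotimes$: both sides, precomposed with $\acc\colon \Dst\Dst(X)^K \twoheadrightarrow \natMlt[K](\Dst\Dst(X))$, reduce to expressions built from $\bigotimes$ and $\acc$, and since $\bigotimes$ is itself a distributive law of $(-)^K$ over $\Dst$ (stated in Section~\ref{DstSec}), its coherence with $\mu$ discharges the equation. Surjectivity of $\acc$ then gives the result. The stated equivalence—that this is the same as $\natMlt[K]$ lifting to an endofunctor on $\Kl(\Dst)$—is the standard correspondence between functor-distributive-laws and liftings to Kleisli categories.

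For part~\eqref{ParMulnomLawLawThmMlt}, the point is that the fourth definition~\eqref{ParMulnomLawFourthEqn} already gives $\pml$ as a single natural transformation $\Mlt\Dst \Rightarrow \Dst\Mlt$ on all of $\Mlt$, not just $\natMlt[K]$. By Proposition~\ref{ParMulnomLawFirstSecondThirdFourthProp} this restricts, size by size, to the fixed-$K$ laws of part~\eqref{ParMulnomLawLawThmKl}; since $\pml$ preserves size, the size-graded pieces fit together, and the functor-law axioms for $\Mlt$ hold because they hold at each $K$. This yields a lifting of the \emph{functor} $\Mlt$ to $\Kl(\Dst)$.

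For part~\eqref{ParMulnomLawLawThmEM}, I must upgrade from a functor-law to a \emph{monad}-law, which additionally requires $\pml$ to be compatible with the unit $\eta^{\Mlt}$ and multiplication $\mu^{\Mlt}$ of the multiset monad. The unit condition $\pml \after \eta^{\Mlt}_{\Dst} = \Dst(\eta^{\Mlt})$ is a triviality: $\eta^{\Mlt}(\omega) = 1\ket{\omega}$, and $\pml$ on a singleton multiset of size one returns $\Dst(\eta^{\Mlt})(\omega)$ directly from~\eqref{ParMulnomLawFirstEqn}. The multiplication condition $\pml \after \mu^{\Mlt}_{\Dst} = \Dst(\mu^{\Mlt}) \after \pml_{\Mlt} \after \Mlt(\pml)$ is the real obstacle and where I expect to spend the effort; I would prove it via the Eilenberg-Moore formulation, using that giving such a monad-law is equivalent to lifting $\Dst$ to a monad on $\CMon = \EM(\Mlt)$. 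Concretely, observation~(1) in the fourth definition (from~\cite{KeimelP17}) equips $\Dst(M)$ with a commutative-monoid structure~\eqref{DstCMonEqn} for any commutative monoid $M$; I would check that this assignment $M \mapsto \Dst(M)$ is functorial on $\CMon$ and that $\eta^{\Dst}, \mu^{\Dst}$ are monoid homomorphisms, so that $\Dst$ becomes a monad on $\CMon$. Proposition~\ref{ParMulnomLawPlusProp}, which states $\pml$ commutes with $+$, is precisely the statement that $\pml$ is the comparison needed for $\mu^{\Dst}$ to be a homomorphism, and so feeds directly into this verification. The abstract Beck correspondence then certifies that the resulting law is a distributive law of monads, giving the claimed equivalence.
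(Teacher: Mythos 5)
Your proposal is correct and follows essentially the same route as the paper's (very terse) proof: part~\eqref{ParMulnomLawLawThmKl} via the fact that $\bigotimes$ is a distributive law of $(-)^{K}$ over $\Dst$ combined with surjectivity of $\acc$ through the third definition~\eqref{ParMulnomLawThirdEqn}, part~\eqref{ParMulnomLawLawThmMlt} by assembling the size-graded pieces, and part~\eqref{ParMulnomLawLawThmEM} via the Beck correspondence with liftings of $\Dst$ to $\CMon = \EM(\Mlt)$ using the monoid structure~\eqref{DstCMonEqn}. You supply considerably more detail than the paper does, but the decomposition and the key ingredients are the same.
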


\smallskip

See \textit{e.g.}~\cite{Jacobs16g} for a detailed description of the
correspondence between laws and lifting. The last point means that the
composite $\Dst\Mlt\colon\Sets\rightarrow\Sets$ is a monad. A
continuous version of this monad is used in~\cite{DashS20}, as
the monad of point processes.

\begin{proof}
An easy way to prove the first point is to use that $\bigotimes \colon
\Dst(X)^{K} \rightarrow \Dst(X^{K})$ is a distributive law and that
$\acc \colon X^{K} \rightarrow \Mlt[K](X)$ is surjective. The second
point then follows easily. For the third point it is easiest to check
that $\Dst(M)$ is a commutative monoid if $M$ is, using the
formula~\eqref{DstCMonEqn} and that this gives a monad on
$\CMon$. \QED
\end{proof}

\section{Zipping multisets}\label{ZipSec}

So far we have seen (parallel) multinomial and hypergeometric channels
and looked at some basic properties, involving their sequential
behaviour, in the Kleisli category $\Kl(\Dst)$. We now move to their
parallel properties, in terms of tensors $\otimes$. This requires a
new (probabilistic) operation to combine multisets (of the same size)
which we call multizip, and write as $\mzip$. This section
concentrates on this $\mzip$, and the next section will look into its
interaction with the main probabilistic channels of this paper.

In (functional) programming zipping two lists together is a common
operation. Mathematically it takes the form of function $\zip \colon
X^{K} \times Y^{K} \congrightarrow (X\times Y)^{K}$. This section
shows how to do a form of zipping for multisets. This is a new and
very useful operation, that, as we shall see later, commutes with
basic probabilistic mappings.

But first we need to collect two auxiliary properties of 
ordinary zip.

\begin{lemma}
\label{ZipLem}
Zipping commutes with tensors of distributions in the following ways.
\[ \xymatrix@R-0.8pc{
\Dst(X)\!\times\!\Dst(Y)\ar[d]_-{\otimes}\ar[r]|-{\circ}^-{\iid\otimes\iid} &
   X^{K}\!\times\! Y^{K}\ar[d]|-{\circ}^-{\zip}
\\
\Dst(X\!\times\! Y)\ar[r]|-{\circ}^-{\iid} & (X\!\times\! Y)^{K}
} \]
\[ \xymatrix@R-0.8pc@C-0.5pc{
\Dst(X)^{K}\times\Dst(Y)^{K}\ar[rr]|-{\circ}^-{\bigotimes\otimes\bigotimes}
   \ar[d]|-{\circ}_-{\zip} & &
   X^{K}\times Y^{K}\ar[d]|-{\circ}^-{\zip}
\\
\big(\Dst(X)\times\Dst(Y)\big)^{K}\ar[r]|-{\circ}^-{\otimes^{K}} &
   \Dst(X\times Y)^{K}\ar[r]|-{\circ}^-{\bigotimes} & (X\times Y)^{K}
} \]
\end{lemma}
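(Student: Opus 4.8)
The plan is to evaluate both composites pointwise, exploiting that the zip map $\zip\colon X^{K}\times Y^{K}\congrightarrow(X\times Y)^{K}$ is a bijection. As a channel, $\zip$ is the promotion of this bijection, so post-composing a channel with $\zip$ is pushforward along $\zip$: for any distribution $\Phi$ on $X^{K}\times Y^{K}$, the result $\zip\klafter\Phi$ is $\Dst(\zip)(\Phi)$, which — since $\zip$ is invertible — is the reindexed distribution $\vec{z}\mapsto\Phi(\zip^{-1}(\vec{z}))$. Thus in both squares the two composites are honest distributions on $(X\times Y)^{K}$, and it suffices to compare their values at an arbitrary $\vec{z}=\big((x_{1},y_{1}),\ldots,(x_{K},y_{K})\big)$, whose preimage is $\zip^{-1}(\vec{z})=\big((x_{1},\ldots,x_{K}),(y_{1},\ldots,y_{K})\big)$.

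For the first square, fix $(\omega,\rho)\in\Dst(X)\times\Dst(Y)$. The clockwise composite produces $\Dst(\zip)\big(\omega^{K}\otimes\rho^{K}\big)$, whose value at $\vec{z}$ equals $\big(\omega^{K}\otimes\rho^{K}\big)(\zip^{-1}(\vec{z}))=\prod_{i}\omega(x_{i})\cdot\prod_{j}\rho(y_{j})$. The other composite produces $(\omega\otimes\rho)^{K}=\iid(\omega\otimes\rho)$, whose value at $\vec{z}$ is $\prod_{i}(\omega\otimes\rho)(x_{i},y_{i})=\prod_{i}\omega(x_{i})\rho(y_{i})$. These agree by commutativity of multiplication, so the square commutes.

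The second square is handled identically, now with tuples $\vec{\omega}=(\omega_{1},\ldots,\omega_{K})$ and $\vec{\rho}=(\rho_{1},\ldots,\rho_{K})$. The clockwise composite yields $\Dst(\zip)\big((\bigotimes_{i}\omega_{i})\otimes(\bigotimes_{j}\rho_{j})\big)$, evaluating at $\vec{z}$ to $\prod_{i}\omega_{i}(x_{i})\cdot\prod_{j}\rho_{j}(y_{j})$; the other composite — first $\zip$, then $\otimes^{K}$ which forms the pairwise tensors $\omega_{i}\otimes\rho_{i}$, then $\bigotimes$ — yields $\bigotimes_{i}(\omega_{i}\otimes\rho_{i})$, evaluating at $\vec{z}$ to $\prod_{i}(\omega_{i}\otimes\rho_{i})(x_{i},y_{i})=\prod_{i}\omega_{i}(x_{i})\rho_{i}(y_{i})$. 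Again the two coincide. I note in passing that the first square is the restriction of the second along the constant tuples $\omega\mapsto(\omega,\ldots,\omega)$, since $\iid(\omega)=\bigotimes(\omega,\ldots,\omega)$, so it could alternatively be deduced from the second rather than verified separately.

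The argument is routine, and there is no real obstacle; the only point requiring care is the bookkeeping of the factor ordering. The equality of $\prod_{i}\omega_{i}(x_{i})\cdot\prod_{j}\rho_{j}(y_{j})$ with $\prod_{i}\omega_{i}(x_{i})\rho_{i}(y_{i})$ holds precisely because $\zip^{-1}$ splits the interleaved pairs of $\vec{z}$ back into the $x$-block and the $y$-block index-by-index, matching the pointwise pairing performed by $\otimes^{K}$ and the ordered product over the common index set performed by $\bigotimes$. Keeping these index alignments straight is the whole content of the proof.
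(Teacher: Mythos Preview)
Your proof is correct and is exactly the ``straightforward calculation'' that the paper invokes without spelling out; the paper's own proof consists of the single sentence ``By straightforward calculation.'' Your pointwise verification, together with the observation that the first square is a specialisation of the second via constant tuples, fully discharges the lemma.
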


\begin{proof}
By straightforward calculation. \QED
\end{proof}

When we take sizes into account, we see that the sum of two multisets
is a function $\Mlt[K](X)\times\Mlt[L](X) \rightarrow
\Mlt[K\!+\!L](X)$. The product (tensor) has type
$\Mlt[K](X)\times\Mlt[L](Y) \rightarrow \Mlt[K\!\cdot\!L](X\times Y)$.
The multizip function $\mzip$ that we are about to introduce takes
inputs of the same size and does not change this size. Its type is
thus given by the top row of the next diagram. The rest of the diagram
forms its definition:
\[ \xymatrix@R-1.5pc@C-1.5pc{
\natMlt[K](X)\times \natMlt[K](Y)\ar[dd]|-{\circ}_-{\arr\otimes\arr}
    \ar[rr]|-{\circ}^-{\mzip[K]} & &
   \natMlt[K](X\times Y)
\\
& \text{\llap{defi}nition of $\mzip$}
\\
X^{K}\times Y^{K}\ar[rr]|-{\circ}^-{\zip} & & 
   (X\times Y)^{K}\ar[uu]|-{\circ}_-{\acc}
} \]

\noindent Explicitly, for multisets $\varphi\in\natMlt[K](X)$ and
$\psi\in \natMlt[K](Y)$,
\[ \begin{array}{rcl}
\mzip(\varphi,\psi)
& \coloneqq &
\displaystyle\!\!\sum_{\vec{x}\in\acc^{-1}(\varphi)} \sum_{\vec{y}\in\acc^{-1}(\psi)}\!
   \frac{1}{\coefm{\varphi}\cdot\coefm{\psi}}
   \bigket{\acc\big(\zip(\vec{x}, \vec{y})\big)}.
\end{array} \]

\noindent Thus, in zipping two multisets $\varphi$ and $\psi$ we first
look at all their arrangements $\vec{x}$ and $\vec{y}$, zip these
together in the ordinary way, and accumulate the resulting list of
pairs into a multiset again. We elaborate how this works concretely.

\begin{example}
\label{MzipEx}
Let's use two set $X = \{a, b\}$ and $Y = \{0,1\}$ with two multisets
of size three:
\[ \begin{array}{rclcrcl}
\varphi
& = &
1\ket{a} + 2\ket{b}
& \qquad\mbox{and}\qquad &
\psi
& = &
2\ket{0} + 1\ket{1}.
\end{array} \]

\noindent Then:
\[ \begin{array}{rccclcrcccl}
\coefm{\varphi}
& = &
\binom{3}{1,2}
& = &
3
& \qquad &
\coefm{\psi}
& = &
\binom{3}{2,1}
& = &
3
\end{array} \]

\noindent The sequences in $X^3$ and $Y^3$ that accumulate to
$\varphi$ and $\psi$ are:
\[ \left\{\begin{array}{l}
a,b,b \\[-0.3em]
b,a,b \\[-0.3em]
b,b,a 
\end{array}\right.
\qquad\mbox{and}\qquad
\left\{\begin{array}{l}
0,0,1 \\[-0.3em]
0,1,0 \\[-0.3em]
1,0,0 
\end{array}\right. \]

\noindent Zipping them together gives the following nine sequences in
$(X\times Y)^{3}$.
\[ \begin{array}{ccccc}
(a,0), (b,0), (b,1)
& \; &
(b,0), (a,0), (b,1)
& \; &
(b,0), (b,0), (a,1)
\\[-0.3em]
(a,0), (b,1), (b,0)
& &
(b,0), (a,1), (b,0)
& &
(b,0), (b,1), (a,0)
\\[-0.3em]
(a,1), (b,0), (b,0)
& &
(b,1), (a,0), (b,0)
& &
(b,1), (b,0), (a,0)
\end{array} \]

\noindent By applying the accumulation function $\acc$ to each of
these only two multisets remain:
\[ \begin{array}{ccc}
\text{three of } 1\ket{a,1} \!+\! 2\ket{b,0}
& \quad &
\text{six of } 1\ket{a,0} \!+\! 1\ket{b,0} \!+\! 1\ket{b,1}.
\end{array} \]

\noindent Finally, multiplication with
$\frac{1}{\coefm{\varphi}\cdot\coefm{\psi}} = \frac{1}{9}$ gives
the outcome:
\[ \begin{array}{rcl}
\lefteqn{\mzip[3]\Big(1\ket{a} \!+\! 2\ket{b}, \, 2\ket{0} \!+\! 1\ket{1}\Big)}
\\[+0.3em]
& = &
\frac{1}{3}\bigket{1\ket{a,1} \!+\! 2\ket{b,0}} \;+\;
   \frac{2}{3}\bigket{1\ket{a,0} \!+\! 1\ket{b,0} \!+\! 1\ket{b,1}}.
\end{array} \]

\noindent This shows that calculating $\mzip$ is laborious.  But it is
quite mechanical and easy to implement.  The picture below suggests to
look at $\mzip$ as a funnel with two input pipes in which multiple
elements from both sides can be combined into a probabilistic mixture.
\medskip
\begin{center}
\begin{picture}(150,75)(0,0)
\put(15,70){$1\ket{a} \!+\! 2\ket{b}$}
\put(90,70){$2\ket{0} \!+\! 1\ket{1}$}
\put(50,15){\line(0,1){25}}
\put(50,40){\line(-1,1){20}}
\put(75,40){\line(-1,1){20}}
\put(75,40){\line(1,1){20}}
\put(100,40){\line(1,1){20}}
\put(100,15){\line(0,1){25}}
\put(75,15){\line(0,1){3}}
\put(75,20){\line(0,1){3}}
\put(75,25){\line(0,1){3}}
\put(75,30){\line(0,1){3}}
\put(75,35){\line(0,1){3}}
\put(-25,0){$\frac{1}{3}\bigket{1\ket{a,1} \!+\! 2\ket{b,0}} \;+\;
   \frac{2}{3}\bigket{1\ket{a,0} \!+\! 1\ket{b,0} \!+\! 1\ket{b,1}}$}
\end{picture}
\end{center}
\medskip
\end{example}

\begin{lemma}
\label{MzipLem}
Consider multizip $\mzip$ as channel $\natMlt[K](X)\times\natMlt[K](Y)
\chanto \natMlt[K](X\times Y)$.
\begin{enumerate}
\item \label{MzipLemNat} Zipping of multisets is a natural
  transformation $\natMlt[K](-)\times\natMlt[K](-) \Rightarrow
  \Dst\natMlt[K](-\times -)$.

\item \label{MzipLemUnit} $\mzip(\varphi, K\ket{y}) =
  1\bigket{\varphi\otimes 1\ket{y}}$.

\item \label{MzipLemAss} $\mzip$ is associative, in $\Kl(\Dst)$.

\item \label{MzipLemProj} $\mzip$ commutes with projections, as in
  $\natMlt[K](\pi_{i}) \klafter \mzip = \pi_{i}$, but not with
  diagonals.

\item \label{MzipLemAcc} Arrangement $\arr$ relates $\zip$ and
  $\mzip$ as in:
\[ \xymatrix@R-0.8pc{
\natMlt[K](X)\times\natMlt[K](Y)\ar[rr]|-{\circ}^-{\arr\otimes\arr}
   \ar[d]|-{\circ}_-{\mzip} & &  X^{X}\times Y^{K}\ar[d]|-{\circ}^-{\zip}
\\
\natMlt[K](X\times Y)\ar[rr]|-{\circ}^-{\arr} & & (X\times Y)^{K}
} \]

\noindent But $\mzip$ and $\zip$ do not commute with $\acc$ instead of
$\arr$.

\item \label{MzipLemDrawDelete} Draw-and-delete commutes with
  $\mzip$, as in:
\[ \hspace*{-2em}\xymatrix@R-0.8pc@C-1pc{
\natMlt[K\!+\!1](X)\!\times\!\natMlt[K\!+\!1](Y)\ar[d]|-{\circ}_-{\mzip}
   \ar[r]|-{\circ}^-{\raisebox{4pt}{$\scriptstyle\drawdelete\otimes\drawdelete$}} & 
   \natMlt[K](X)\!\times\!\natMlt[K](Y)\ar[d]|-{\circ}^-{\mzip}
\\
\natMlt[K\!+\!1](X\!\times\! Y)\ar[r]|-{\circ}^-{\drawdelete} & 
   \natMlt[K](X\!\times\! Y)
} \]
\end{enumerate}
\end{lemma}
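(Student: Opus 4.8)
The plan is to treat item~\eqref{MzipLemAcc} as the linchpin, derive the two harder items~\eqref{MzipLemAss} and~\eqref{MzipLemDrawDelete} from it, and dispatch items~\eqref{MzipLemNat}, \eqref{MzipLemUnit}, \eqref{MzipLemProj} directly. Everywhere I use the defining equation $\mzip = \acc \klafter \zip \klafter (\arr\otimes\arr)$ together with $\acc\klafter\arr = \idmap$ from Lemma~\ref{AccArrLem}~\eqref{AccArrLemId}. The latter says $\arr$ is a section of $\acc$ in $\Kl(\Dst)$, so postcomposition with $\arr$ is injective on channels: if $\arr\klafter f = \arr\klafter g$ then $f = \acc\klafter\arr\klafter f = \acc\klafter\arr\klafter g = g$. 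This cancellation trick is what makes the formal items work. For~\eqref{MzipLemNat}, naturality of $\mzip$ is inherited from its three building blocks, since $\acc,\arr$ are natural by Lemma~\ref{AccArrLem}~\eqref{AccArrLemNat} and ordinary $\zip$ is a natural isomorphism. For~\eqref{MzipLemUnit} I unfold the definition at $\psi = K\ket{y}$: there $\coefm{K\ket{y}} = 1$ with a single arrangement, every summand equals $\bigket{\acc(\zip(\vec{x},(y,\dots,y)))} = \bigket{\varphi\otimes 1\ket{y}}$ independently of $\vec{x}$, and summing the $\coefm{\varphi}$ equal terms with weight $1/\coefm{\varphi}$ gives the point distribution $1\bigket{\varphi\otimes 1\ket{y}}$. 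For~\eqref{MzipLemProj}, naturality of $\acc$ gives $\natMlt[K](\pi_{1}) \after \acc = \acc \after (\pi_{1})^{K}$, and $(\pi_{1})^{K}\after\zip$ is the product projection onto $X^{K}$, so every multiset in the support of $\mzip(\varphi,\psi)$ is sent by $\natMlt[K](\pi_{1})$ to $\acc(\vec{x}) = \varphi$; hence $\natMlt[K](\pi_{1})\klafter\mzip = \pi_{1}$, and symmetrically for $\pi_{2}$. Failure for diagonals is witnessed by one example: copying $\varphi = 1\ket{a}+1\ket{b}$ and applying $\mzip$ gives $\tfrac12\bigket{1\ket{a,a}+1\ket{b,b}} + \tfrac12\bigket{1\ket{a,b}+1\ket{b,a}}$, whereas $\natMlt[K](\Delta)(\varphi) = 1\ket{a,a}+1\ket{b,b}$ is a point distribution.

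Item~\eqref{MzipLemAcc} is the key, proved by a permutation-invariance argument. Because $\zip$ is a bijection and $\arr$ is uniform, $\zip\klafter(\arr\otimes\arr)(\varphi,\psi)$ is the \emph{uniform} distribution on the set $S = \{\vec{w}\in(X\times Y)^{K} : \acc(\pi_{1}\vec{w}) = \varphi,\ \acc(\pi_{2}\vec{w}) = \psi\}$, which is closed under permutations and hence permutation-invariant. A short calculation shows that the uniform permutation channel $\arr\klafter\acc$ of Lemma~\ref{AccArrLem}~\eqref{AccArrLemTensor} fixes every permutation-invariant distribution. Applying this to the definition gives $\arr\klafter\mzip = \arr\klafter\acc\klafter\zip\klafter(\arr\otimes\arr) = \zip\klafter(\arr\otimes\arr)$, which is exactly~\eqref{MzipLemAcc}.

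With~\eqref{MzipLemAcc} in hand, associativity~\eqref{MzipLemAss} follows formally. Postcomposing the two associativity candidates with $\arr$ and repeatedly substituting~\eqref{MzipLemAcc} (using bifunctoriality of $\otimes$ in $\Kl(\Dst)$) rewrites them as $\zip\klafter(\zip\otimes\idmap)\klafter((\arr\otimes\arr)\otimes\arr)$ and $\zip\klafter(\idmap\otimes\zip)\klafter(\arr\otimes(\arr\otimes\arr))$; these coincide up to the associator of $\times$ by strict associativity of ordinary $\zip$. Cancelling the leading $\arr$ via the section trick yields associativity of $\mzip$.

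Item~\eqref{MzipLemDrawDelete} is the main obstacle and the only part requiring a genuine combinatorial identity. By the section trick it suffices to prove $\arr\klafter\drawdelete\klafter\mzip = \arr\klafter\mzip\klafter(\drawdelete\otimes\drawdelete)$. Using~\eqref{MzipLemAcc} and the relations $\arr\klafter\drawdelete = \probproj\klafter\arr$ and $\drawdelete\klafter\acc = \acc\klafter\probproj$, read off the auxiliary diagram in the proof of Theorem~\ref{ParMulnomLawHypergeomThm}, both sides reduce to expressions in the probabilistic projection $\probproj$, and the goal becomes the single identity
\[ \probproj\klafter\zip\klafter(\arr\otimes\arr) \;=\; \zip\klafter(\probproj\otimes\probproj)\klafter(\arr\otimes\arr). \]
These two channels are a priori different, since the left deletes one \emph{synchronised} coordinate of the zipped sequence whereas the right deletes one coordinate on each side \emph{independently}; this mismatch is the crux. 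The claim is that, after the symmetrising precomposition with $\arr\otimes\arr$, both become the \emph{same} uniform distribution on $\{\vec{t}\in(X\times Y)^{K} : \acc(\pi_{1}\vec{t})\leq\varphi,\ \acc(\pi_{2}\vec{t})\leq\psi\}$. For the right-hand side this follows from the fact that $\probproj\klafter\arr$ is uniform over all length-$K$ subsequences of arrangements; for the left-hand side it follows from an insertion count showing that each admissible $\vec{t}$ arises from exactly $K+1$ pairs (zipped sequence, deleted position), so the induced distribution is again uniform with the identical support. Matching supports and uniformity give the identity, and hence~\eqref{MzipLemDrawDelete}.
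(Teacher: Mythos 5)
Your proposal is correct, and its overall architecture matches the paper's: establish item~\eqref{MzipLemAcc} first, then discharge the formal items by postcomposing with $\arr$ and cancelling via $\acc\klafter\arr=\idmap$. (The paper dismisses items~\eqref{MzipLemNat}--\eqref{MzipLemAcc} as ``relatively straightforward''; your permutation-invariance argument for~\eqref{MzipLemAcc} --- that $\zip\klafter(\arr\otimes\arr)$ produces a uniform distribution on a permutation-closed set, and such distributions are fixed by $\arr\klafter\acc$ --- is a clean way to make that precise.) Where you genuinely diverge is the crux of item~\eqref{MzipLemDrawDelete}. Both you and the paper reduce, via item~\eqref{MzipLemAcc} and the relations $\arr\klafter\drawdelete=\probproj\klafter\arr$ and $\drawdelete\klafter\acc=\acc\klafter\probproj$, to the identity $\probproj\klafter\zip\klafter(\arr\otimes\arr)=\zip\klafter(\probproj\otimes\probproj)\klafter(\arr\otimes\arr)$. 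The paper's trick is to swap the probabilistic projection $\probproj$ for the deterministic last-coordinate projection $\pi$, using $\probproj\klafter\arr=\pi\klafter\arr$; since $\pi$, unlike $\probproj$, strictly commutes with $\zip$, the identity then follows by bifunctoriality of $\otimes$. You instead prove the identity head-on, showing both sides are the uniform distribution on $\set{\vec{t}\in(X\times Y)^{K}}{\acc(\pi_{1}\vec{t})\leq\varphi,\ \acc(\pi_{2}\vec{t})\leq\psi}$: on the right because $\probproj\klafter\arr$ is uniform on one-step subsequences, on the left via the insertion count of exactly $K+1$ preimage pairs per admissible $\vec{t}$. I checked both counts and they are right (in particular $|S|=\coefm{\varphi}\cdot\coefm{\psi}=|T|$, so the two uniform distributions coincide). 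The paper's substitution is slicker and reusable elsewhere; your counting argument is more self-contained and makes explicit the real tension --- synchronised versus independent deletion --- that the commutation resolves.
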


\begin{proof}
Most of these points are relatively straightforward, except the last
point. Recall the probabilistic projection channel $\probproj \colon
X^{K+1} \chanto X^{K}$ from the proof of
Theorem~\ref{ParMulnomLawHypergeomThm}. It commutes with $\acc$ and
$\arr$, but not with (ordinary) $\zip$. The actual projection function
$\pi\colon X^{K+1} \rightarrow X^{K}$, which removes the last element,
does commute with $\zip$. One can change appropriately from
$\probproj$ to $\pi$ since $\probproj \klafter \arr = \pi \klafter
\arr$. \QED
\end{proof}

The following result deserves a separate status. It tells that what we
learn from a zip of multisets is the same as what we learn from a
parallel product (of these multisets). On a related note we illustrate
in Section~\ref{SampleSec} that multizip can be used for compositional
parallel sampling.

\begin{theorem}
\label{MzipFlrnThm}
Multizip and frequentist learning interact well, namely as:
\[ \begin{array}{rcl}
\flrn \gg \mzip(\varphi,\psi)
& = &
\flrn(\varphi\otimes\psi).
\end{array} \]

\noindent Equivalently, in diagrammatic form:
\[ \xymatrix@R-0.8pc{
\natMlt[K](X)\times\natMlt[K](Y)\ar[d]_{\otimes}\ar[rr]|-{\circ}^-{\mzip} & & 
   \natMlt[K](X\times Y)\ar[d]|-{\circ}^-{\flrn}
\\
\natMlt[K^{2}](X\times Y)\ar[rr]|-{\circ}^-{\flrn} & & X\times Y
} \]
\end{theorem}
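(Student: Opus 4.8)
The plan is to establish the equivalent channel identity $\flrn\klafter\mzip = \flrn\after\otimes$, where on the right $\otimes$ is the multiset tensor $\natMlt[K](X)\times\natMlt[K](Y)\to\natMlt[K^{2}](X\times Y)$. First I would dispose of the right-hand side directly. Since $\|\varphi\otimes\psi\| = \|\varphi\|\cdot\|\psi\| = K^{2}$, we get $\flrn(\varphi\otimes\psi)(x,y) = \frac{(\varphi\otimes\psi)(x,y)}{K^{2}} = \frac{\varphi(x)\psi(y)}{K^{2}} = \flrn(\varphi)(x)\cdot\flrn(\psi)(y)$, so $\flrn(\varphi\otimes\psi) = \flrn(\varphi)\otimes\flrn(\psi)$ by this elementary multiplicativity of sizes. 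Hence the genuine content of the theorem is that frequentist learning of the multizip \emph{factorises} into the product of the two separate empirical distributions.

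Next I would introduce the ``uniform position'' channel $\val\colon X^{K}\chanto X$ given by $\val(x_{1},\ldots,x_{K}) = \sum_{k}\frac{1}{K}\ket{x_{k}}$, and record the identity $\flrn = \val\klafter\arr$. This holds because $\flrn\klafter\acc = \val$ (the empirical distribution of a sequence only counts positions) and $\acc\klafter\arr = \idmap$ by Lemma~\ref{AccArrLem}~\eqref{AccArrLemId}. With $\flrn = \val\klafter\arr$ in hand, Lemma~\ref{MzipLem}~\eqref{MzipLemAcc} rewrites the left-hand side purely in terms of ordinary zip:
\[ \flrn\klafter\mzip = \val\klafter\arr\klafter\mzip = \val\klafter\zip\klafter(\arr\otimes\arr). \]

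It then remains to evaluate $\val\klafter\zip\klafter(\arr\otimes\arr)$ at a pair $(\varphi,\psi)$. Unfolding, $(\val\klafter\zip)(\vec{x},\vec{y}) = \sum_{k}\frac{1}{K}\ket{(x_{k},y_{k})}$, so the whole composite averages, with weight $\frac{1}{K}$ over positions $k$, the single-coordinate marginal of the product distribution $\arr(\varphi)\otimes\arr(\psi)$. Because this is a product, that coordinate marginal factorises as $(\pi_{k}\gg\arr(\varphi))\otimes(\pi_{k}\gg\arr(\psi))$; and because $\arr(\varphi)$ and $\arr(\psi)$ are uniform over the permutation-closed sets $\acc^{-1}(\varphi)$ and $\acc^{-1}(\psi)$, these coordinate marginals are moreover independent of $k$. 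Thus all $K$ summands coincide, and the average equals the common value $(\pi_{1}\gg\arr(\varphi))\otimes(\pi_{1}\gg\arr(\psi))$.

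The one non-formal ingredient --- and the step I expect to be the main obstacle --- is the identity $\pi_{k}\gg\arr(\varphi) = \flrn(\varphi)$: the marginal of a uniformly random arrangement at a fixed position is exactly the frequentist distribution. This is the combinatorial heart of the argument; it reduces to the count that precisely $\coefm{\varphi - 1\ket{x}}$ of the $\coefm{\varphi}$ arrangements of $\varphi$ carry $x$ in position $k$, together with the elementary ratio $\coefm{\varphi - 1\ket{x}}/\coefm{\varphi} = \varphi(x)/K = \flrn(\varphi)(x)$. Granting this, the previous paragraph delivers $\flrn(\varphi)\otimes\flrn(\psi)$, which by the first paragraph equals $\flrn(\varphi\otimes\psi)$, completing the proof.
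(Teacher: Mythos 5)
Your proof is correct, and it is in fact tighter than the one the paper gives. Both arguments ultimately rest on the same combinatorial ratio $\coefm{\varphi-1\ket{x}}/\coefm{\varphi}=\varphi(x)/K$, but the routes differ. The paper expands $\flrn\gg\mzip(\varphi,\psi)(a,b)$ directly from the definition of $\mzip$ and then argues, explicitly ``informally,'' that for arbitrary $\vec{x}\in\acc^{-1}(\varphi)$ and $\vec{y}\in\acc^{-1}(\psi)$ the fraction of occurrences of $(a,b)$ in $\zip(\vec{x},\vec{y})$ is $\flrn(\varphi)(a)\cdot\flrn(\psi)(b)$ --- a claim that is false for an individual pair of arrangements (e.g.\ $\vec{x}=(a,b)$, $\vec{y}=(0,1)$ gives fraction $\tfrac12$ for $(a,0)$, not $\tfrac14$) and only holds after averaging over the fibres. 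Your proof supplies exactly the missing averaging step in a clean, modular way: factoring $\flrn=\val\klafter\arr$, pushing $\arr$ past $\mzip$ via Lemma~\ref{MzipLem}~\eqref{MzipLemAcc}, using permutation-invariance of $\arr(\varphi)$ to see all coordinate marginals agree, and isolating the identity $\pi_{k}\gg\arr(\varphi)=\flrn(\varphi)$ as the single combinatorial fact needed. What the paper's version buys is brevity; what yours buys is rigour plus two reusable lemmas ($\flrn=\val\klafter\arr$ and the coordinate-marginal identity) that would also streamline other $\flrn$-commutation results in the paper, such as Lemma~\ref{DrawDeleteLem}~\eqref{DrawDeleteLemFlrn}. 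The only cosmetic caveat is that $\flrn\klafter\mzip$ and $\flrn\after\otimes$ land in $\Dst(X\times Y)$ via slightly different normalisations ($K$ versus $K^{2}$), which your first paragraph handles correctly by showing $\flrn(\varphi\otimes\psi)=\flrn(\varphi)\otimes\flrn(\psi)$.
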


\begin{proof}
Let multisets $\varphi\in\natMlt[K](X)$ and $\psi\in\natMlt[K](Y)$
be given and let $a\in X$ and $b\in Y$ be arbitrary elements. We
need to show that the probability:
\[ \begin{array}{rcl}
\lefteqn{\Big(\flrn \gg \mzip(\varphi,\psi)\Big)(a,b)}
\\
& = &
\displaystyle\sum_{\vec{x}\in\acc^{-1}(\varphi)} \, \sum_{\vec{y}\in\acc^{-1}(\psi)} \,
   \frac{\acc\big(\zip(\vec{x},\vec{y})\big)(a,b)}
   {K\cdot \coefm{\varphi}\cdot\coefm{\psi}}
\end{array} \]

\noindent is the same as the probability:
\[ \begin{array}{rcl}
\flrn(\varphi\otimes\psi)(a,b)
& = &
\displaystyle\frac{\varphi(a)\cdot\psi(a)}{K\cdot K}.
\end{array} \]

\noindent We reason informally, as follows. For arbitrary
$\vec{x}\in\acc^{-1}(\varphi)$ and $\vec{y}\in\acc^{-1}(\psi)$ we need
to find the fraction of occurrences $(a,b)$ in
$\zip(\vec{x},\vec{y})$. The fraction of occurrences of $a$ in
$\vec{x}$ is $\flrn(\varphi)(a) = \frac{\varphi(a)}{K}$, and the
fraction of occurrences of $b$ in $\vec{y}$ is $\flrn(\psi)(b) =
\frac{\psi(b)}{K}$. Hence the fraction of occurrences of $(a,b)$ in
$\zip(\vec{x},\vec{y})$ is $\flrn(\varphi)(a) \cdot \flrn(\psi)(b) =
\flrn(\varphi\otimes\psi)(a,b)$. \QED
\end{proof}

Once we have seen the definition of $\mzip$, via `deconstruction' of
multisets into lists, a $\zip$ operation on lists, and
`reconstruction' to a multiset result, we can try to apply this
approach more widely. For instance, instead of using a $\zip$ on lists
we can simply concatenate $(\concat)$ the lists --- assuming they
contain elements from the same set. This yields, like in the
definition of $\mzip$, a composite channel:
\[ \vcenter{\xymatrix@R-0.8pc@C+0.2pc{
\natMlt[K](X)\times \natMlt[L](X)\ar[d]_-{\arr\otimes\arr}
   & & \Dst\big(\natMlt[K\!+\!L](X)\big)
\\
\Dst\big(X^{K}\times X^{L}\big)\ar[rr]_-{\cong}^-{\Dst(\concat)} & &
   \Dst\big(X^{K+L}\big)\ar[u]_-{\Dst(\acc)} 
}} \]

\noindent It is easy to see that this yields addition of multisets, as
a deterministic channel.

We don't get the tensor $\otimes$ of multisets in this way, because
there is no tensor of lists, because lists form a non-commutative
theory.

\section{Multizip and probabilistic operations}\label{MzipProbSec}

Having seen the essentials of the multizip operation $\mzip$ we
proceed to demonstrate how it interacts with the main probabilistic
operations of this paper, namely (parallel) multinomials and
hypergeometric channels. It turns out that multizip is a
mathematically well-behaved operation.

\begin{theorem}
\label{MzipMulnomHypergeomThm}
Multizip commutes with multinomial and hypergeometric channels, as in:
\[ \xymatrix@R-0.8pc@C+1pc{
\Dst(X)\!\times\!\Dst(Y)\ar[d]_-{\otimes}
  \ar[rr]|-{\circ}^-{\multinomial[K] \otimes \multinomial[K]} & &
  \natMlt[K](X)\!\times\!\natMlt[K](Y)\ar[d]|-{\circ}^-{\mzip}
\\
\Dst(X\!\times\! Y)\ar[rr]|-{\circ}^-{\multinomial[K]} & & \natMlt[K](X\!\times\! Y)
}\]
\[ \hspace*{-0.5em}\xymatrix@R-0.8pc@C+3pc{
\natMlt[N](X)\!\times\!\natMlt[N](Y)\ar[d]|-{\circ}_-{\mzip}
   \ar[r]|-{\circ}^-{\raisebox{5pt}{$\scriptstyle
      \hypergeometric[K]\otimes\hypergeometric[K]$}} & 
   \natMlt[K](X)\!\times\!\natMlt[K](Y)\ar[d]|-{\circ}^-{\mzip}
\\
\natMlt[N](X\!\times\! Y)\ar[r]|-{\circ}^-{\hypergeometric[K]} & 
   \natMlt[K](X\!\times\! Y)
} \]
\end{theorem}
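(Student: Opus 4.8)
The plan is to handle the two squares separately, as they call for different techniques. The multinomial square I would establish by a direct diagram chase in $\Kl(\Dst)$, rewriting both legs into a common form using the characterisation of multinomials from Theorem~\ref{MultinomialIIDAccThm}. The hypergeometric square I would not attack head-on, but instead reduce to the single-step commutation with draw-and-delete that is already recorded in Lemma~\ref{MzipLem}\eqref{MzipLemDrawDelete}, and then iterate.

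For the multinomial square, the key ingredients are the defining factorisation of multizip, $\mzip = \acc \klafter \zip \klafter (\arr\otimes\arr)$, the two identities $\arr \klafter \multinomial[K] = \iid$ and $\acc \klafter \iid = \multinomial[K]$ of Theorem~\ref{MultinomialIIDAccThm}, and the first diagram of Lemma~\ref{ZipLem}, which rewrites $\zip \klafter (\iid\otimes\iid)$ as $\iid \klafter \otimes$. Pulling the outer tensor through the Kleisli composition by functoriality of $\otimes$ on channels, one computes
\[
\mzip \klafter (\multinomial[K]\otimes\multinomial[K])
\;=\;
\acc \klafter \zip \klafter (\iid\otimes\iid)
\;=\;
\acc \klafter \iid \klafter \otimes
\;=\;
\multinomial[K] \klafter \otimes ,
\]
where the first equality uses the factorisation of $\mzip$ together with $\arr \klafter \multinomial[K] = \iid$ on each factor, the second uses Lemma~\ref{ZipLem}, and the last uses $\acc \klafter \iid = \multinomial[K]$. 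The right-hand side is exactly the down-then-right leg, so the square commutes.

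For the hypergeometric square, I would observe that its single-step analogue is precisely Lemma~\ref{MzipLem}\eqref{MzipLemDrawDelete}: draw-and-delete commutes with $\mzip$ at sizes $K{+}1$ and $K$. Since Theorem~\ref{DrawDeleteHypergeometricThm} expresses the hypergeometric channel $\natMlt[N](X)\chanto\natMlt[K](X)$ (with $N=K{+}L$) as an $L$-fold Kleisli composite of $\drawdelete$ channels, I would paste $L$ copies of that single-step square vertically, descending from size $N$ to size $K$ on both input factors simultaneously. The outer composite then reads $\hypergeometric[K]\klafter\mzip$ down one side and $\mzip\klafter(\hypergeometric[K]\otimes\hypergeometric[K])$ across the other, since the $L$-fold $\drawdelete$ on each factor is again $\hypergeometric[K]$ by Theorem~\ref{DrawDeleteHypergeometricThm}. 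This yields the claimed square.

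The main difficulty is really one of bookkeeping rather than genuine obstruction. In the multinomial case the delicate point is interchanging the Kleisli composition $\klafter$ with the monoidal product $\otimes$ of channels---the interchange law for the symmetric monoidal structure on $\Kl(\Dst)$---and having Lemma~\ref{ZipLem} available in its $\iid$-form. In the hypergeometric case the genuine content, namely the single draw-and-delete commutation, has already been discharged in Lemma~\ref{MzipLem}\eqref{MzipLemDrawDelete}; what remains is only the inductive pasting licensed by Theorem~\ref{DrawDeleteHypergeometricThm}, so no new obstacle arises there.
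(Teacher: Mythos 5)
Your proposal is correct and matches the paper's proof essentially step for step: the multinomial square is established by expanding $\mzip$ as $\acc\klafter\zip\klafter(\arr\otimes\arr)$ and invoking Theorem~\ref{MultinomialIIDAccThm} for the two triangles together with the first rectangle of Lemma~\ref{ZipLem} for the middle, and the hypergeometric square is reduced to Lemma~\ref{MzipLem}~\eqref{MzipLemDrawDelete} via the iterated draw-and-delete decomposition of Theorem~\ref{DrawDeleteHypergeometricThm}. No gaps.
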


\begin{proof}
Commutation with the hypergeometric channels follows from
Lemma~\ref{MzipLem}~\eqref{MzipLemDrawDelete}, using
Theorem~\ref{DrawDeleteHypergeometricThm}.  To prove commutation of
the multinomial rectangle, we expand the definition of $\mzip$, on the
right below.
\[ \hspace*{-0.5em}\xymatrix@R-0.8pc@C+0.7pc{
\Dst(X)\!\times\!\Dst(Y)\ar[ddd]_-{\otimes}\ar@/_2ex/[drr]_-{\iid\otimes\iid}
  \ar[rr]|-{\circ}^-{\multinomial[K] \otimes \multinomial[K]} & &
  \natMlt[K](X)\!\times\!\natMlt[K](Y)
  \ar[d]|-{\circ}^-{\arr\otimes\arr}\ar`r[d]`[dddl]|-{\circ}_-{\mzip}[ddd]
\\
& & X^{K} \!\times\! Y^{K}\ar[d]|-{\circ}^-{\zip}
\\
& & (X\!\times\! Y)^{K}\ar[d]|-{\circ}^-{\acc}
\\
\Dst(X\!\times\! Y)\ar[rr]|-{\circ}^-{\multinomial[K]}\ar@/^2ex/[urr]^-{\iid} 
   & & \natMlt[K](X\!\times\! Y)
}\]

\noindent The (triangle) subdiagrams at the top and bottom commute by
Theorem~\ref{MultinomialIIDAccThm}. The middle subdiagram is the first
rectangle in Lemma~\ref{ZipLem}. \QED
\end{proof}

\begin{remark}
\label{MulnomTensorRem}
In the first diagram in Theorem~\ref{MzipMulnomHypergeomThm} we see
that multinomial channels commute with $\otimes$ and $\mzip$. One may
wonder about the analogous diagram for tensors only, see below. This
diagram does \emph{not} commute.
\[ \xymatrix@R-1.8pc@C+1pc{
\Dst(X)\times\Dst(Y)\ar[dd]_-{\otimes}
   \ar[rr]|-{\circ}^-{\multinomial[K]\otimes\multinomial[L]} & &
   \Mlt[K](X)\times\Mlt[L](Y)\ar[dd]|-{\circ}^-{\otimes}
\\
& \neq &
\\
\Dst(X\times Y)\ar[rr]|-{\circ}_-{\multinomial[K\cdot L]} & &
  \Mlt[K\!\cdot\!L](X\times Y)
} \]

\noindent Take for instance $X = Y = \{a,b\}$, $K=1$, $L=2$ with
uniform distribution $\omega = \frac{1}{2}\ket{a} +
\frac{1}{2}\ket{b}$. The two legs of the above diagram differ when
applied to $(\omega,\omega)$.
\end{remark}

\smallskip

The \emph{pi\`ece de r\'esistance} of this paper is the commutation of
multizip $\mzip$ with the parallel multinomial law $\pml$, as
expressed by the next two results.

\begin{lemma}
\label{ParMulnomLawZipLem}
The following diagram commutes.
\[ \xymatrix@R-0.8pc@C-1.3pc{
\natMlt[K]\big(\Dst(X)\big)\!\times\!\natMlt[K]\big(\Dst(Y)\big)
   \ar[rr]|-{\circ}^-{\pml\otimes\pml}\ar[d]|-{\circ}_-{\mzip} & &
   \natMlt[K](X)\!\times\!\natMlt[K](Y)\ar[dd]|-{\circ}^-{\mzip}
\\
\natMlt[K]\big(\Dst(X)\!\times\!\Dst(Y)\big)
   \ar[d]|-{\circ}_-{\natMlt[K](\otimes)} & &
\\
\natMlt[K]\big(\Dst(X\!\times\! Y)\big)\ar[rr]|-{\circ}^-{\pml} & &
   \natMlt[K](X\!\times\! Y)
} \]
\end{lemma}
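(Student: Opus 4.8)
The plan is to reduce the statement to a computation with ordinary lists and ordinary $\zip$, by postcomposing both legs of the square with the arrangement channel $\arr\colon\natMlt[K](X\times Y)\chanto(X\times Y)^{K}$. The key structural observation is that $\arr$ is a split mono in $\Kl(\Dst)$: since $\acc\klafter\arr=\idmap$ by Lemma~\ref{AccArrLem}\eqref{AccArrLemId}, a channel with codomain $\natMlt[K](X\times Y)$ is determined by its composite with $\arr$. Hence it suffices to show that the two composites $\arr\klafter\mzip\klafter(\pml\otimes\pml)$ and $\arr\klafter\pml\klafter\natMlt[K](\otimes)\klafter\mzip$ agree, and I will check that both telescope to the common expression
\[
\bigotimes\klafter\otimes^{K}\klafter\zip\klafter(\arr\otimes\arr),
\]
where $\arr\otimes\arr\colon\natMlt[K](\Dst(X))\times\natMlt[K](\Dst(Y))\chanto\Dst(X)^{K}\times\Dst(Y)^{K}$.

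For the first composite $\arr\klafter\mzip\klafter(\pml\otimes\pml)$ I would first rewrite $\arr\klafter\mzip=\zip\klafter(\arr\otimes\arr)$ using Lemma~\ref{MzipLem}\eqref{MzipLemAcc}, then move the two copies of $\pml$ inside the tensor by bifunctoriality of $\otimes$ in $\Kl(\Dst)$ and apply the right-hand rectangle of Proposition~\ref{ParMulnomLawAccArrProp}, which gives $\arr\klafter\pml=\bigotimes\klafter\arr$ at $X$ and at $Y$. This turns the composite into $\zip\klafter(\bigotimes\otimes\bigotimes)\klafter(\arr\otimes\arr)$, and the second rectangle of Lemma~\ref{ZipLem} rewrites $\zip\klafter(\bigotimes\otimes\bigotimes)$ as $\bigotimes\klafter\otimes^{K}\klafter\zip$, yielding the common form above.

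For the second composite $\arr\klafter\pml\klafter\natMlt[K](\otimes)\klafter\mzip$ I would apply the same right rectangle of Proposition~\ref{ParMulnomLawAccArrProp}, now at $X\times Y$, to get $\arr\klafter\pml=\bigotimes\klafter\arr$, then use naturality of $\arr$ (Lemma~\ref{AccArrLem}\eqref{AccArrLemNat}) for the tensor \emph{function} $\otimes\colon\Dst(X)\times\Dst(Y)\to\Dst(X\times Y)$ to slide $\arr$ past $\natMlt[K](\otimes)$, giving $\arr\klafter\natMlt[K](\otimes)=\otimes^{K}\klafter\arr$. A final application of Lemma~\ref{MzipLem}\eqref{MzipLemAcc}, this time at $\Dst(X)$ and $\Dst(Y)$, rewrites $\arr\klafter\mzip=\zip\klafter(\arr\otimes\arr)$, so the second composite also becomes $\bigotimes\klafter\otimes^{K}\klafter\zip\klafter(\arr\otimes\arr)$. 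The two composites therefore coincide, and monicity of $\arr$ finishes the proof.

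The step I expect to require the most care is the bookkeeping rather than any new calculation: the symbols $\arr$, $\bigotimes$ and $\otimes$ recur at several different objects ($X$, $Y$, $X\times Y$ and their distribution versions), so one must keep track of which instance is meant at each rewrite. In particular the naturality square of $\arr$ must be invoked for the function $\otimes$, not for a channel, and the two uses of Lemma~\ref{MzipLem}\eqref{MzipLemAcc} occur at different base objects. Once the indexing is fixed, every step is a direct citation of an earlier result.
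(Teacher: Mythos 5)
Your proof is correct and follows essentially the same route as the paper's: both arguments reduce the square to the list level, where the second rectangle of Lemma~\ref{ZipLem} and the $\arr$--$\pml$--$\bigotimes$ rectangle of Proposition~\ref{ParMulnomLawAccArrProp} do all the work. The only difference is organizational --- you cancel $\arr$ as a split mono (via $\acc \klafter \arr = \idmap$), whereas the paper pastes in the definitional decomposition of $\mzip$ and closes the chase with naturality of $\acc$ and the third definition of $\pml$ --- and both versions are sound.
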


\begin{proof}
The result follows from a big diagram chase in which
the $\mzip$ operations on the left and on the right are expanded.
\[ \hspace*{-0.5em}\xymatrix@R-0.8pc@C-3.2pc{
\natMlt[K]\Dst(X)\!\times\!\natMlt[K]\Dst(Y)
   \ar[rr]|-{\circ}^-{\,\pml\otimes\pml}\ar[d]_-{\arr\otimes\arr}
   \ar`l[d]`[dddr]|-{\circ}^-{\mzip}[ddd] & &
   \natMlt[K](X)\!\times\!\natMlt[K](Y)\ar[d]|-{\circ}^-{\arr\otimes\arr}
   \ar`r[d]`[ddddl]|-{\circ}_-{\mzip}[dddd]
\\
\Dst(X)^{K}\!\times\!\Dst(Y)^{K}\ar[rr]|-{\circ}^-{\bigotimes\otimes\bigotimes}
   \ar[d]|-{\circ}_-{\zip} & &
   X^{K}\!\times\! Y^{K}\ar[d]|-{\circ}^-{\zip}
\\
\!\!\!\!\big(\Dst(X)\!\times\!\Dst(Y)\big)^{\!K}\!\ar[r]|-{\circ}^(0.62){\otimes^{K}}
   \ar[d]|-{\circ}_-{\acc} & 
   \Dst(X\!\times\! Y)^{K}\!\!\ar[r]|-{\circ}^-{\bigotimes}\ar@/^4ex/[ddl]^(0.3){\acc} & 
   (X\!\times\! Y)^{K}\ar[dd]|-{\circ}^-{\acc}
\\
\natMlt[K]\big(\Dst(X)\!\times\!\Dst(Y)\big)
   \ar[d]|-{\circ}_-{\natMlt[K](\otimes)} & &
\\
\natMlt[K]\Dst(X\!\times\! Y)\ar[rr]|-{\circ}^-{\pml} & &
   \natMlt[K](X\!\times\! Y)
} \]

\noindent The upper rectangle commutes by
Proposition~\ref{ParMulnomLawAccArrProp} and the middle one by
Lemma~\ref{ZipLem}. The lower-left subdiagram commutes by naturality
of $\acc$ and the lower-right one via the third definition of $\pml$
in~\eqref{ParMulnomLawThirdEqn}. \QED
\end{proof}

\begin{theorem}
\label{LiftedMultisetMonoidalThm}
The lifted functor $\natMlt[K] \colon \Kl(\Dst) \rightarrow \Kl(\Dst)$
from Theorem~\ref{ParMulnomLawLawThm}~\eqref{ParMulnomLawLawThmKl}
commutes with multizipping: for channels $f\colon X \chanto U$ and
$g\colon Y \chanto V$ one has:
\[ \xymatrix@R-0.8pc@C+1pc{
\natMlt[K](X)\times\natMlt[K](Y)
   \ar[d]|-{\circ}_-{\natMlt[K](f)\otimes\natMlt[K](g)}
   \ar[r]|-{\circ}^-{\mzip} &
   \natMlt[K](X\times Y)\ar[d]|-{\circ}^-{\natMlt[K](f\otimes g)}
\\
\natMlt[K](U)\times\natMlt[K](V)\ar[r]|-{\circ}^-{\mzip} & \natMlt[K](U\times V)
} \]

\noindent This channel-naturality of $\mzip$, in combination with the
unit and associativity of Lemma~\ref{MzipLem}~\eqref{MzipLemUnit}
and~\eqref{MzipLemAss}, means that the lifted functor $\natMlt[K]
\colon \Kl(\Dst) \rightarrow \Kl(\Dst)$ is monoidal via $\mzip$.
\end{theorem}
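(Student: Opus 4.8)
The plan is to unfold the lifted functor $\natMlt[K]$ on channels in terms of the distributive law $\pml$, and then reduce everything to the already-established Lemma~\ref{ParMulnomLawZipLem} together with the naturality of $\mzip$ from Lemma~\ref{MzipLem}~\eqref{MzipLemNat}. Recall that, under the lifting associated with a distributive law, the lifted channel $\natMlt[K](f)\colon \natMlt[K](X)\chanto\natMlt[K](U)$ is the composite $\pml \after \natMlt[K](f)$, where the inner $\natMlt[K](f)$ is the ordinary (set-level) functorial image of $f\colon X\to\Dst(U)$, promoted to a channel. The goal is the identity $\natMlt[K](f\otimes g)\klafter\mzip = \mzip\klafter\big(\natMlt[K](f)\otimes\natMlt[K](g)\big)$, which I would establish by rewriting its right-hand leg into its left-hand leg.

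First I would split the tensor of the two lifted channels. Since $\otimes$ is a bifunctor on the symmetric monoidal category $\Kl(\Dst)$, the interchange law gives $\natMlt[K](f)\otimes\natMlt[K](g) = (\pml\otimes\pml)\klafter\big(\natMlt[K](f)\times\natMlt[K](g)\big)$, where the second factor is the product of the two set-level functorial maps, a deterministic channel. Next I would invoke Lemma~\ref{ParMulnomLawZipLem}, instantiated at the target sets $U,V$, to rewrite $\mzip\klafter(\pml\otimes\pml)$ as $\pml\klafter\natMlt[K](\otimes)\klafter\mzip$. This is the one genuinely nontrivial commutation in the whole argument, and it has already been discharged.

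Then I would pull the functorial maps of $f$ and $g$ through $\mzip$ using naturality (Lemma~\ref{MzipLem}~\eqref{MzipLemNat}): applied to the functions $f\colon X\to\Dst(U)$ and $g\colon Y\to\Dst(V)$, naturality yields $\mzip\klafter\big(\natMlt[K](f)\times\natMlt[K](g)\big) = \natMlt[K](f\times g)\klafter\mzip$. Finally, functoriality of the set-functor $\natMlt[K]$ recombines $\natMlt[K](\otimes)\klafter\natMlt[K](f\times g) = \natMlt[K]\big(\otimes\after(f\times g)\big) = \natMlt[K](f\otimes g)$, using that $\otimes\after(f\times g) = f\otimes g$ as plain functions into $\Dst(U\times Y)$. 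Reading the chain back, the right-hand leg collapses to $\pml\klafter\natMlt[K](f\otimes g)\klafter\mzip = \natMlt[K](f\otimes g)\klafter\mzip$, which is exactly the left-hand leg.

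The substantive content is thus carried entirely by Lemma~\ref{ParMulnomLawZipLem}; the remaining obstacle is purely bookkeeping, namely keeping the distributive-law lift, the bifunctoriality of $\otimes$ in $\Kl(\Dst)$, and the two flavours of $\natMlt[K]$ (the set-level functor versus its Kleisli lift) cleanly separated, so that each rewrite lands on the correct object and each promotion of a function to a channel is tracked. Once the square is verified, the monoidality of the lifted functor $\natMlt[K]\colon\Kl(\Dst)\to\Kl(\Dst)$ via $\mzip$ follows by combining this channel-naturality with the unit law and associativity already recorded in Lemma~\ref{MzipLem}~\eqref{MzipLemUnit} and~\eqref{MzipLemAss}.
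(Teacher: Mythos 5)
Your proposal is correct and follows essentially the same route as the paper's own proof: unfold the lifted channels via $\pml$, apply Lemma~\ref{ParMulnomLawZipLem} to commute $\mzip$ past $\pml\otimes\pml$, use the set-level naturality of $\mzip$ from Lemma~\ref{MzipLem}~\eqref{MzipLemNat}, and recombine by functoriality of $\natMlt[K]$. The bookkeeping of the two flavours of $\natMlt[K]$ is handled exactly as in the paper.
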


\smallskip

This result is rather subtle, since $f,g$ are used as channels. So
when we write $\natMlt[K](f)$ we mean the application of the lifted
functor $\natMlt[K] \colon \Kl(\Dst) \rightarrow \Kl(\Dst)$ to $f$,
which is:
\[ \xymatrix@C-1.5pc{
\natMlt[K](X)\ar[rrr]^-{\natMlt[K](f)} & & &
   \natMlt[K]\big(\Dst(Y)\big)\ar[rr]^-{\pml} & &
   \Dst\big(\natMlt[K](Y)\big).
} \]

\begin{proof}
We use:
\[ \hspace*{-0em}\begin{array}[b]{rcl}
\lefteqn{\mzip \klafter \Big(\natMlt[K](f) \otimes \natMlt[K](g)\Big)}
\\
& = &
\mzip \klafter (\pml\otimes\pml) \after 
   \big(\natMlt[K](f) \times \natMlt[K](g)\big)
\\
& = &
\pml \klafter \natMlt[K](\otimes) \klafter \mzip \after
   \big(\natMlt[K](f) \times \natMlt[K](g)\big)
\\
& & \qquad \mbox{by Lemma~\ref{ParMulnomLawZipLem}}
\\
& = &
\pml \klafter \natMlt[K](\otimes) \klafter
   \natMlt[K](f \times g) \klafter \mzip
\\
& & \qquad \mbox{by Lemma~\ref{MzipLem}~\eqref{MzipLemNat}}
\\
& = &
\pml \klafter \natMlt[K](f \otimes g) \klafter \mzip
\\
& = &
\natMlt[K]\big(f\otimes g\big) \klafter \mzip.
\end{array} \eqno{\QEDbox} \]
\end{proof}

\begin{remark}
\label{PmlTensorRem}
One may wonder: why bother about these $K$-sized multisets $\Mlt[K]$
and not work with $\Mlt$ instead? After all, $\Mlt$ also lifts to a
monad on $\Kl(\Dst)$, as we have seen in
Theorem~\ref{ParMulnomLawLawThm}~\eqref{ParMulnomLawLawThmMlt}. We can
use as tensor for $\Dst\Mlt$ the combination of the tensor for $\Mlt$
and for $\Dst$.

However, an important fact is that the parallel multinomial law $\pml$
does \emph{not} commute with these two tensors (of multisets and
distributions), as in the following diagram.
\[ \hspace*{-0.3em}\xymatrix@R-0.8pc@C-1.5pc{
\natMlt[K]\big(\Dst(X)\big)\!\times\!\natMlt[L]\big(\Dst(Y)\big)
   \ar[rr]|-{\circ}^-{\pml\otimes\pml}\ar[d]|-{\circ}_-{\otimes} & &
   \natMlt[K](X)\!\times\!\natMlt[K](Y)\ar[dd]|-{\circ}^-{\otimes}
\\
\natMlt[K\!\cdot\!L]\big(\Dst(X)\!\times\!\Dst(Y)\big)
   \ar[d]|-{\circ}_-{\natMlt[K\cdot L](\otimes)} & \neq &
\\
\natMlt[K\!\cdot\!L]\big(\Dst(X\!\times\! Y)\big)\ar[rr]|-{\circ}^-{\pml} & &
   \natMlt[K\!\cdot\!L](X\!\times\! Y)
} \]

\noindent Take for instance $X = \{a,b\}$, $Y = \{0,1\}$ with $K=2$,
$L=1$ and with multisets $\varphi = 2\bigket{\frac{3}{4}\ket{a} +
  \frac{1}{4}\ket{b}}\in\Mlt[K](\Dst(X))$ and $\psi =
1\bigket{\frac{2}{3}\ket{0} + \frac{1}{3}\ket{1}}\in\Mlt[L](\Dst(Y))$.
With some perseverance one sees that the two legs of the above
diagrams give different outcomes on $(\varphi,\psi)$.

This non-commutation generalises what we have seen in
Remark~\ref{MulnomTensorRem} for multinomial channels.
\end{remark}

\begin{theorem}
\label{LiftedMultisetPlusThm}
The lifted functors $\natMlt[K] \colon \Kl(\Dst) \rightarrow
\Kl(\Dst)$ commute with sums of multisets: for a channel $f\colon X
\chanto Y$,
\[ \xymatrix@R-0.8pc@C+1pc{
\natMlt[K](X)\times\natMlt[L](Y)
   \ar[d]|-{\circ}_-{\natMlt[K](f)\otimes\natMlt[L](f)}
   \ar[r]|-{\circ}^-{+} &
   \natMlt[K\!+\!L](X)\ar[d]|-{\circ}^-{\natMlt[K+L](f)}
\\
\natMlt[K](Y)\times\natMlt[L](Y)\ar[r]|-{\circ}^-{+} & \natMlt[K\!+\!L](Y)
} \]
\end{theorem}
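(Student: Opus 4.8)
The plan is to mirror the proof of Theorem~\ref{LiftedMultisetMonoidalThm}, with the sum $+$ of multisets now playing the role that $\mzip$ played there, and with Proposition~\ref{ParMulnomLawPlusProp} (commutation of $\pml$ with sums) replacing Lemma~\ref{ParMulnomLawZipLem}. Write $f\colon X\rightarrow\Dst(Y)$ for the underlying function of the channel. Recall from the lifting in Theorem~\ref{ParMulnomLawLawThm}~\eqref{ParMulnomLawLawThmKl}, made explicit after Theorem~\ref{LiftedMultisetMonoidalThm}, that the lifted channel $\natMlt[K](f)$ is the composite $\pml\after\natMlt[K](f)$, where the inner $\natMlt[K](f)$ is the ordinary $\Sets$-functor $\natMlt[K]$ applied to $f$. (The displayed square is to be read with both factors of its top-left object over the same set $X$, so that $\natMlt[K](f)\otimes\natMlt[L](f)$ and the sum $+$ typecheck.)

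First I would unfold both vertical channels through $\pml$ and $\Sets$-level functoriality, turning the square into a comparison of composites built from $\pml$ and promoted deterministic functions. Using the definition of the tensor of channels in $\Kl(\Dst)$, the down-then-right composite $(+)\klafter\big(\natMlt[K](f)\otimes\natMlt[L](f)\big)$ factors as $(+)\klafter(\pml\otimes\pml)\after\big(\natMlt[K](f)\times\natMlt[L](f)\big)$, in which the rightmost factor is now a plain $\Sets$-map, namely the product of the two functor-images of $f$, promoted to a channel.

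The core is then a two-step rewrite:
\[ (+)\klafter(\pml\otimes\pml)
\;=\; \pml\klafter(+)
\qquad\text{and}\qquad
(+)\after\big(\natMlt[K](f)\times\natMlt[L](f)\big)
\;=\; \natMlt[K\!+\!L](f)\after(+). \]
The first equality is exactly Proposition~\ref{ParMulnomLawPlusProp} (here the two occurrences of $+$ live over $\Dst(Y)$ and over $Y$ respectively). The second is the naturality of the sum operation in $\Sets$: the family $+\colon\natMlt[K](-)\times\natMlt[L](-)\Rightarrow\natMlt[K\!+\!L](-)$ is natural because $\natMlt(g)$ is additive --- a monoid homomorphism on multisets --- for every function $g$, hence for $g=f$. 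Substituting the first equality, then the second, and finally reassembling $\pml\after\natMlt[K\!+\!L](f)$ back into the lifted channel $\natMlt[K\!+\!L](f)$ yields the right-then-down composite $\natMlt[K\!+\!L](f)\klafter(+)$, as required.

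I expect no genuine obstacle once Proposition~\ref{ParMulnomLawPlusProp} is in hand; what remains is bookkeeping. The care-points are to keep the $\Sets$-level application of $f$ strictly separate from its $\Kl(\Dst)$-level lifting throughout, and to track $\klafter$ against promoted composition $\after$ for the deterministic maps $+$ and $\natMlt[K](f)\times\natMlt[L](f)$, using the identities $g\klafter\klin{h}=g\after h$ and $\klin{h}\klafter g=\Dst(h)\after g$. There is no combinatorics left to do, since the statement is essentially the interaction of $\pml$ with sums pre- and post-composed with $\Sets$-functoriality.
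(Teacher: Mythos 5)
Your proof is correct and follows exactly the route the paper takes: its entire proof is the one-line observation that the result follows from Proposition~\ref{ParMulnomLawPlusProp} together with the standard fact that the $\Sets$-functor $\Mlt$ commutes with sums of multisets, which are precisely your two rewrite steps. Your version just spells out the bookkeeping (including the correct reading of the top-left object as $\natMlt[K](X)\times\natMlt[L](X)$) that the paper leaves implicit.
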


\begin{proof}
By Proposition~\ref{ParMulnomLawPlusProp}, in combination with the
standard fact that the multiset functor $\Mlt\colon\Sets\rightarrow
\Sets$ commutes with sums of multisets. \QED
\end{proof}

Now that we know, via
Theorem~\ref{ParMulnomLawLawThm}~\eqref{ParMulnomLawLawThmKl}, that
$K$-sized multisets form a (lifted) functor $\natMlt[K] \colon
\Kl(\Dst) \rightarrow \Kl(\Dst)$, we can ask whether the natural
transformations that we have used before are also natural with respect
to this lifted functor. This involves naturality with respect to
channels instead of with respect to ordinary functions. The proofs of
the following result are left to the reader.

\begin{lemma}
\label{ArrDDChanNatLem}
Arrangement and accumulation, and draw-delete are natural
transformation in the situations:
\[ \xymatrix@C-0.0pc{
\Kl(\Dst)\ar@/^4ex/[rr]^-{\natMlt[K]}_-{\big\Downarrow\rlap{$\scriptstyle\arr$}}
   \ar[rr]^(0.32){(-)^{K}}
   \ar@/_4ex/[rr]_-{\natMlt[K]}^-{\big\Downarrow\rlap{$\scriptstyle\acc$}}
   & & \Kl(\Dst)\hspace*{-1em}
&
\hspace*{-1em}\Kl(\Dst)\ar@/^2.2ex/[rr]^-{\natMlt[K+1]}\ar@/_2.2ex/[rr]_-{\natMlt[K]} &
  {\big\Downarrow}\rlap{$\scriptstyle\drawdelete$} & \Kl(\Dst)
} \]

\noindent The power functor $(-)^{K}$ on $\Sets$ lifts to $\Kl(\Dst)$
via the big tensor $\bigotimes\colon \Dst(X)^{K} \rightarrow
\Dst(X^{K})$. \QED
\end{lemma}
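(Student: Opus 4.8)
The plan is to observe that all three assertions fit a single template: channel-naturality of a $\Dst$-valued transformation splits into its ordinary $\Sets$-naturality together with its commutation with the distributive law that defines the relevant lift. Recall that the lifts send a channel $f\colon X \chanto Y$ (with underlying function $f\colon X\to\Dst(Y)$) to $\pml \klafter \natMlt[K](f)$ and to $\bigotimes \klafter f^{K}$ respectively, where inside these expressions $\natMlt[K](f)$ and $f^{K}$ abbreviate the ordinary $\Sets$-functor actions on $f$, promoted to channels; here $(-)^{K}$ is lifted along the distributive law $\bigotimes$ of Section~\ref{DstSec} and $\natMlt[K]$ along $\pml$ by Theorem~\ref{ParMulnomLawLawThm}~\eqref{ParMulnomLawLawThmKl}. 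Since promotion $\klin{-}\colon\Sets\to\Kl(\Dst)$ preserves composition, every $\Sets$-naturality square of $\acc$, $\arr$, $\drawdelete$ survives in $\Kl(\Dst)$ once its arrows are promoted, and I will use these instances at the underlying function $f\colon X\to\Dst(Y)$.

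For $\acc\colon (-)^{K}\Rightarrow\natMlt[K]$ the square to check is $(\pml \klafter \natMlt[K](f)) \klafter \acc = \acc \klafter (\bigotimes \klafter f^{K})$. I would start from the left: the promoted $\Sets$-naturality of $\acc$ (Lemma~\ref{AccArrLem}~\eqref{AccArrLemNat}) gives $\natMlt[K](f) \klafter \acc = \acc \klafter f^{K}$, reducing the left side to $\pml \klafter \acc \klafter f^{K}$; then the third definition of $\pml$ in~\eqref{ParMulnomLawThirdEqn}, equivalently the left rectangle of Proposition~\ref{ParMulnomLawAccArrProp}, taken at $Y$ gives $\pml \klafter \acc = \acc \klafter \bigotimes$, turning the expression into $\acc \klafter \bigotimes \klafter f^{K}$, which is the right side.

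For $\arr\colon \natMlt[K]\Rightarrow(-)^{K}$ the square is $(\bigotimes \klafter f^{K}) \klafter \arr = \arr \klafter (\pml \klafter \natMlt[K](f))$, and I would work on the right: the \emph{right} rectangle of Proposition~\ref{ParMulnomLawAccArrProp} supplies $\arr \klafter \pml = \bigotimes \klafter \arr$, and then $\Sets$-naturality of $\arr$ (Lemma~\ref{AccArrLem}~\eqref{AccArrLemNat}) rewrites $\arr \klafter \natMlt[K](f) = f^{K} \klafter \arr$, yielding $\bigotimes \klafter f^{K} \klafter \arr$. For $\drawdelete\colon\natMlt[K\!+\!1]\Rightarrow\natMlt[K]$ the pattern is identical: on the left side $(\pml \klafter \natMlt[K](f)) \klafter \drawdelete$ I would first move $\natMlt[K](f)$ past $\drawdelete$ using naturality of $\drawdelete$ in $\Sets$ (noted before Lemma~\ref{DrawDeleteLem}), which replaces $\natMlt[K](f) \klafter \drawdelete$ by $\drawdelete \klafter \natMlt[K\!+\!1](f)$, and then invoke the first square of Theorem~\ref{ParMulnomLawHypergeomThm}, namely $\pml \klafter \drawdelete = \drawdelete \klafter \pml$ at $Y$, to reach $\drawdelete \klafter (\pml \klafter \natMlt[K\!+\!1](f))$.

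I expect no step to involve a genuine computation: the three commutation facts with $\pml$ and $\bigotimes$ have all been established earlier. The only real obstacle is bookkeeping — keeping the promotions between functions and channels consistent, and tracking which instance (at $X$ or $Y$, and at size $K$ or $K\!+\!1$) of each transformation and of each distributive law is being used at every rewrite. The single idea worth isolating, and the one I would state explicitly, is the clean split of Kleisli-naturality into $\Sets$-naturality plus commutation with the lift's distributive law; once this is seen, each of the three cases is a three-line Kleisli rewrite.
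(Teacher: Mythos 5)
Your proof is correct, and since the paper explicitly leaves this lemma to the reader, it supplies exactly the argument the author intends: each Kleisli-naturality square factors into the promoted $\Sets$-naturality square taken at the underlying function $f\colon X\to\Dst(Y)$, followed by the already-established commutations $\pml\klafter\acc=\acc\klafter\bigotimes$ (the third definition~\eqref{ParMulnomLawThirdEqn}), $\arr\klafter\pml=\bigotimes\klafter\arr$ (Proposition~\ref{ParMulnomLawAccArrProp}), and $\pml\klafter\drawdelete=\drawdelete\klafter\pml$ (Theorem~\ref{ParMulnomLawHypergeomThm}). The bookkeeping you flag is handled correctly, so nothing is missing.
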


The distribution functor $\Dst$ on $\Sets$ lifts in a standard way to
a (monoidal) functor on its Kleisli category $\Kl(\Dst)$, also written
as $\Dst$. This is used in the following channel-naturality result for
multinomial and hypergeometric distributions.

\begin{theorem}
The multinomial and hypergeometric channels are natural with respect
to lifted functors:
\[ \xymatrix@C-0.0pc{
\Kl(\Dst)\ar@/^2.2ex/[rr]^-{\Dst}_-{\big\Downarrow\rlap{$\scriptstyle\multinomial[K]$}}
   \ar@/_2.2ex/[rr]_-{\natMlt[K]}
   & & \Kl(\Dst)\hspace*{-0.5em}
&
\hspace*{-0.5em}\Kl(\Dst)\ar@/^2.2ex/[rr]^-{\natMlt[L]}\ar@/_2.2ex/[rr]_-{\natMlt[K]} &
  {\big\Downarrow}\rlap{$\scriptstyle\hypergeometric[K]$} & \Kl(\Dst)
} \]

\noindent where $L\geq K$. These natural transformations are monoidal
since they commute with $\mzip$, see
Theorem~\ref{MzipMulnomHypergeomThm}. \QED
\end{theorem}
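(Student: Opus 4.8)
The plan is to reduce both naturality squares to vertical (Kleisli) composites of natural transformations that are already established, so that almost no fresh computation is required. Recall first the lifted functors involved: the power functor $(-)^{K}$ is lifted to $\Kl(\Dst)$ via $\bigotimes$, the functor $\natMlt[K]$ via the law $\pml$, and the distribution functor $\Dst$ acts on a channel $f\colon X\chanto Y$ by state transformation, namely as the (deterministic) channel $\Dst(X)\chanto\Dst(Y)$ given by $\omega\mapsto 1\ket{f\gg\omega}$. Since a vertical composite of natural transformations is again natural, it suffices to factor $\multinomial[K]$ and $\hypergeometric[K]$ into pieces whose channel-naturality is known.

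For the multinomial transformation I would use the factorisation $\multinomial[K]=\acc\klafter\iid$ from Theorem~\ref{MultinomialIIDAccThm}. Accumulation $\acc$ is already natural with respect to the lifted $(-)^{K}$ and $\natMlt[K]$ by Lemma~\ref{ArrDDChanNatLem}, so the only genuinely new step is to check that $\iid\colon\Dst(X)\chanto X^{K}$ is natural with respect to the lifted $\Dst$ and $(-)^{K}$. This is a one-line calculation: for a channel $f\colon X\chanto Y$ both legs of the square send $\omega$ to $(f\gg\omega)^{K}$. Going first along $\iid$ and then along the lifted $(-)^{K}$ applied to $f$ gives $\sum_{\vec{x}}\prod_{i}\omega(x_{i})\cdot\bigotimes_{i}f(x_{i})$, which by multilinearity of $\otimes$ (the distributive-law property of $\bigotimes$) equals $(f\gg\omega)^{K}$; going first along the lifted $\Dst$ transports $\omega$ to $f\gg\omega$, and then $\iid$ yields the same $(f\gg\omega)^{K}$. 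Composing the naturalities of $\iid$ and $\acc$ vertically then gives naturality of $\multinomial[K]=\acc\klafter\iid$ as a transformation between the lifted $\Dst$ and $\natMlt[K]$.

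For the hypergeometric transformation the argument is even more direct. By Theorem~\ref{DrawDeleteHypergeometricThm} the channel $\hypergeometric[K]\colon\natMlt[L](X)\chanto\natMlt[K](X)$ is the $(L-K)$-fold Kleisli composite of draw-and-delete channels $\drawdelete$. Each $\drawdelete$ is a natural transformation $\natMlt[j+1]\Rightarrow\natMlt[j]$ with respect to the lifted functors by Lemma~\ref{ArrDDChanNatLem}, and a Kleisli composite of channel-natural transformations is again channel-natural; hence $\hypergeometric[K]$ is natural with respect to the lifted $\natMlt[L]$ and $\natMlt[K]$. The monoidality claim, that both transformations commute with $\mzip$, is precisely the content of Theorem~\ref{MzipMulnomHypergeomThm}, so nothing further is needed there.

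The main obstacle I anticipate is the bookkeeping around the lifted $\Dst$ functor: one must use its correct (state-transformation) description on channels to verify naturality of $\iid$, since naively substituting the ordinary $\Dst$ functor does not compose correctly inside $\Kl(\Dst)$. Once that description is fixed, the only real computation is the $\iid$ square, which reduces to the bilinearity of $\otimes$, and everything else is the formal closure of natural transformations under vertical and Kleisli composition.
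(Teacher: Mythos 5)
Your proof is correct and follows exactly the route the paper intends (the paper itself omits the proof, but sets it up via Lemma~\ref{ArrDDChanNatLem} and the factorisations $\multinomial[K]=\acc\klafter\iid$ and $\hypergeometric[K]=\drawdelete\klafter\cdots\klafter\drawdelete$ from Theorems~\ref{MultinomialIIDAccThm} and~\ref{DrawDeleteHypergeometricThm}): you reduce both squares to vertical composites of already-established channel-natural transformations, and your identification of the lifted $\Dst$ on a channel $f$ as the deterministic channel $\omega\mapsto 1\ket{f\gg\omega}$ is the correct standard lifting needed for the $\iid$ square. Nothing is missing.
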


This result shows that the basic probabilistic operations of drawing
from an urn, with or without replacement, come with rich categorical
structure.

\section{Application in sampling semantics}\label{SampleSec}

The above theory about the distributive law of multisets over
distributions can be used to demonstrate the correctness of (basic
aspects of) sampling semantics in probabilistic programming languages
(see \textit{e.g.}~\cite{GordonHNRG14,ScibiorGG15,BartheKS20} for an
overview). We concentrate on parallel and sequential composition and
on conditioning.

We formalise sampling via the multinomial channel $\multinomial[K]
\colon \Dst(X) \chanto \Mlt[K](X)$. It assigns probabilities to
samples of size $K$, from a distribution $\omega\in\Dst(X)$. The fact
that $\flrn \gg \multinomial[K](\omega) = \omega$, as we have seen in
Proposition~\ref{MultinomialSumFlrnProp}~\eqref{MultinomialSumFlrnPropFlrn},
shows that learning from these samples yields the original
distribution $\omega$ --- a key correctness property for sampling.

The first diagram in Theorem~\ref{MzipMulnomHypergeomThm} shows
that sampling a product distribution:
\[ \multinomial[K](\omega\otimes\rho) \]

\noindent is the same as the mzip of parallel (separate) sampling:
\[ \mzip \gg \Big(\multinomial[K](\omega) \otimes \multinomial[K](\rho)\Big). \]

\noindent This gives a basic compositionality result in sampling
semantics. At the same time it illustrates the usefulness of $\mzip$.
Applying frequentist learning to these expressions yields the original
product distribution $\omega\otimes\rho$, see also
Theorem~\ref{MzipFlrnThm}.

We continue with sequential sampling.
\begin{quote}
Suppose we have a distribution $\omega\in\Dst(X)$ and a channel $c
\colon X \chanto Y$. We wish to sample the distribution $c \gg \omega
\in \Dst(Y)$ obtained by state transformation. A common way to do so
is to first sample elements $x$ from $\omega$ and then elements $y$
from $c(x)$. These $y$'s are the samples that capture $c \gg \omega$.
\end{quote}

We can first sample from $\omega$ via
$\multinomial[K](\omega)\in\Mlt[K](X)$. Subsequently we can apply the
channel $c$ to each of the elements in the sample, via $\Mlt[K](c)
\colon \Mlt[K](X) \rightarrow \Mlt[K](\Dst(Y))$. Our next step is to
sample from all of these outcomes $c(x)$. But this is precisely what
the parallel multinomial law $\pml$ does. This gives the samples
(multisets) of $Y$ that we are after, via the composite of channels:
\[ \hspace*{-0.4em}\xymatrix@C-1.1pc{
\Dst(X)\ar[rr]|-{\circ}^-{\multinomial[K]} & &
   \Mlt[K](X)\ar[rr]|-{\circ}^-{\Mlt[K](c)} & &
   \Mlt[K](\Dst(Y))\ar[r]|-{\circ}^-{\pml} & \Mlt[K](Y)
} \]

\noindent The last two arrows describe the lifted functor $\Mlt[K]$
applied to the channel $c\colon X \chanto Y$.  We claim that the next
result is a correctness result for the above sampling method for $c
\gg \omega$.

\begin{proposition}
\label{SampleProp}
For a distribution $\omega\in\Dst(X)$ and channel $c\colon X\chanto Y$,
\[ \begin{array}{rcl}
\flrn \gg \Big(\pml \klafter \Mlt[K](c) \klafter 
   \multinomial[K]\Big)(\omega)
& = &
c \gg \omega.
\end{array} \]
\end{proposition}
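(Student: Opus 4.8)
The plan is to establish the statement at the level of channels and then specialise to $\omega$. Concretely, I would first prove the channel identity
\[ \flrn \klafter \pml \klafter \natMlt[K](c) \;=\; c \klafter \flrn \colon \natMlt[K](X)\chanto Y, \]
and afterwards pre-compose with $\multinomial[K]$. The first key step reduces the left-hand composite $\flrn\klafter\pml$ via Theorem~\ref{ParMulnomLawFlrnThm}, which gives $(\flrn\klafter\pml)(\Psi)=\mu\big(\flrn(\Psi)\big)$ for every $\Psi\in\natMlt[K]\big(\Dst(Y)\big)$. Applying this to $\Psi=\natMlt[K](c)(\varphi)$, for a multiset $\varphi\in\natMlt[K](X)$, rewrites the left-hand side as $\mu\big(\flrn(\natMlt[K](c)(\varphi))\big)$.

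The next step is to pull the function $c$ out of $\flrn$. Since $\flrn\colon\neMlt\Rightarrow\Dst$ is a natural transformation (Section~\ref{DstSec}) and $c\colon X\to\Dst(Y)$ is an ordinary function, naturality gives $\flrn\after\natMlt[K](c)=\Dst(c)\after\flrn$. Combined with the defining identity $\mu\after\Dst(c)=(c\gg-)$ of Kleisli extension, this yields
\[ \mu\big(\flrn(\natMlt[K](c)(\varphi))\big)=\mu\big(\Dst(c)(\flrn(\varphi))\big)=c\gg\flrn(\varphi)=\big(c\klafter\flrn\big)(\varphi), \]
which is exactly the claimed channel identity. Note that this argument sidesteps any case analysis about whether distinct balls $x$ occurring in $\varphi$ happen to have equal distributions $c(x)$ (which would be merged inside $\natMlt[K](\Dst(Y))$): everything is handled abstractly by the naturality square.

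Finally I would pre-compose with $\multinomial[K]$ and evaluate at $\omega$. By the identity just proved, $\flrn\klafter\pml\klafter\natMlt[K](c)\klafter\multinomial[K]=c\klafter\flrn\klafter\multinomial[K]$; and Proposition~\ref{MultinomialSumFlrnProp}~\eqref{MultinomialSumFlrnPropFlrn} gives $\big(\flrn\klafter\multinomial[K]\big)(\omega)=\omega$, so the whole composite applied to $\omega$ returns $c\gg\omega$, as required. The one point that needs care — and which I regard as the main obstacle — is the bookkeeping of which arrows are genuine channels and which are promoted deterministic functions: here $\natMlt[K](c)$ means the \emph{ordinary} functor $\natMlt[K]$ applied to the function $c\colon X\to\Dst(Y)$, so that $\pml\klafter\natMlt[K](c)$ is the \emph{lifted} functor of Theorem~\ref{ParMulnomLawLawThm}~\eqref{ParMulnomLawLawThmKl} applied to the channel $c$. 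Keeping this distinction straight is precisely what makes both the reduction via Theorem~\ref{ParMulnomLawFlrnThm} and the naturality step type-check and apply correctly.
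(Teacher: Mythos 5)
Your proof is correct and follows essentially the same route as the paper's: reduce $\flrn\klafter\pml$ via Theorem~\ref{ParMulnomLawFlrnThm}, commute $\flrn$ past $\Mlt[K](c)$ by naturality, and finish with Proposition~\ref{MultinomialSumFlrnProp}~\eqref{MultinomialSumFlrnPropFlrn} and $\mu\after\Dst(c)=(c\gg-)$. The only (immaterial) difference is that you first isolate the channel identity $\flrn\klafter\pml\klafter\natMlt[K](c)=c\klafter\flrn$ before pre-composing with $\multinomial[K]$, whereas the paper runs the same chain of equalities directly on $\omega$.
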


\begin{proof}
Because:
\[ \hspace*{-3em}\begin{array}[b]{rcl}
\lefteqn{\big(\flrn \klafter \pml \klafter \Mlt[K](c) \klafter 
   \multinomial[K]\big)(\omega)}
\\
& = &
\big(\idmap \klafter \flrn \klafter \Mlt[K](c) \klafter 
   \multinomial[K]\big)(\omega)
   \rlap{ by Theorem~\ref{ParMulnomLawFlrnThm}}
\\
& = &
\big(\idmap \klafter \Dst(c) \klafter \flrn \klafter 
   \multinomial[K]\big)(\omega)
   \quad \mbox{by naturality}
\\
& = &
\big(\idmap \klafter \Dst(c)\big)(\omega)
   \quad \mbox{by Proposition~\ref{MultinomialSumFlrnProp}~\eqref{MultinomialSumFlrnPropFlrn}}
\\
& = &
\big(\mu \after \Dst(c)\big)(\omega)
\\
& = &
c \gg \omega.
\end{array} \eqno{\QEDbox} \]
\end{proof}

Updating (or conditioning) of a distribution is a basic operation in
probabilistic programming. We briefly sketch how it fits in the
current (sampling) setting. Let $\omega\in\Dst(X)$ be distribution and
$p\colon X \rightarrow [0,1]$ be a (fuzzy) predicate. We write
$\omega\models p \coloneqq \sum_{x} \omega(x)\cdot p(x)$ for the
validity (expected value) of $p$ in $\omega$. If this validity is
non-zero, we can define the updated distribution $\omega|_{p}
\in\Dst(X)$ as the normalised product:
\[ \begin{array}{rcl}
\omega|_{p}(x)
& \coloneqq &
\displaystyle\frac{\omega(x)\cdot p(x)}{\omega\models p}.
\end{array} \]

\noindent See \textit{e.g.}~\cite{Jacobs19b,Jacobs19c,Jacobs21a} for
more information.

The question arises: how to sample from an updated state
$\omega|_{p}$? This involves the interaction of updating with
(parallel) multinomials.  It is addressed below. There we use the free
extension $\overline{p} \colon \natMlt(X) \rightarrow [0,1]$ of a
predicate $p\colon X \rightarrow [0,1]$ to multisets, via conjunction
$\andthen$, that is, via pointwise multiplication:
\[ \begin{array}{rcl}
\overline{p}(\varphi)
& \coloneqq &
\displaystyle\prod_{x\in X} \, 
   \underbrace{p(x) \cdot \ldots \cdot p(x)}_{\varphi(x)\text{ times}}.
\end{array} \]

\noindent Point~\eqref{MultinomialUpdateThmUpd} below shows that
sampling of an updated distribution $\omega|_{p}$ can be done
compositionally, by first sampling $\omega$ and then updating the
samples with the free extension $\overline{p}$. If $p$ is a sharp
(non-fuzzy) predicate, this means throwing out the sampled multisets
where $p$ does not hold for all elements, followed by
re-normalisation.

\begin{theorem}
\label{MultinomialUpdateThm}
For a distribution $\omega\in\Dst(X)$, a predicate $p\colon X
\rightarrow [0,1]$, and a number $K\in\NNO$, one has:
\begin{enumerate}
\item \label{MultinomialUpdateThmVal} $\multinomial[K](\omega) \models
  \overline{p} \,=\, \big(\omega\models p\big)^{K}$;

\item \label{MultinomialUpdateThmUpd}
  $\multinomial[K](\omega)\big|_{\overline{p}} =
  \multinomial[K](\omega|_{p})$;

\item \label{MultinomialUpdateThmPmlVal} Similarly, for the parallel
  multinomial law $\pml$,
\[ \begin{array}{rcl}
\pml\big(\sum_{i}n_{i}\ket{\omega_i}\big)\models \overline{p}
& \,=\, &
{\displaystyle\prod}_{i} \, \big(\omega_{i}\models p\big)^{n_i}.
\end{array} \]

\item \label{MultinomialUpdateThmPmlUpd} And:
\[ \begin{array}{rcl}
\pml\big(\sum_{i}n_{i}\ket{\omega_i}\big)\big|_{\overline{p}}
& \,=\, &
\pml\big(\sum_{i}n_{i}\ket{\,\omega_{i}|_{p}}\big).
\end{array} \]
\end{enumerate}
\end{theorem}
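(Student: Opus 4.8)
The plan is to prove the four items in the order stated, since each feeds on its predecessors, and to isolate the single structural fact that drives everything beyond item~\eqref{MultinomialUpdateThmVal}: the free extension $\overline{p}$ is \emph{multiplicative} over the multiset sum, $\overline{p}(\varphi+\psi) = \overline{p}(\varphi)\cdot\overline{p}(\psi)$, with value $1$ on the empty multiset — that is, $\overline{p}$ is a monoid homomorphism $(\Mlt(X),+) \to ([0,1],\cdot)$. This is immediate from the defining product $\overline{p}(\varphi) = \prod_{x}p(x)^{\varphi(x)}$. Item~\eqref{MultinomialUpdateThmVal} I would then obtain by expanding the validity directly: $\multinomial[K](\omega)\models\overline{p} = \sum_{\varphi\in\Mlt[K](X)} \coefm{\varphi}\,\prod_{x}\big(\omega(x)\,p(x)\big)^{\varphi(x)}$, which is exactly the multinomial-theorem expansion of $\big(\sum_{x}\omega(x)\,p(x)\big)^{K} = (\omega\models p)^{K}$.

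For item~\eqref{MultinomialUpdateThmUpd} I would compute the updated distribution pointwise at a multiset $\varphi\in\Mlt[K](X)$. The numerator $\multinomial[K](\omega)(\varphi)\cdot\overline{p}(\varphi)$ equals $\coefm{\varphi}\prod_{x}(\omega(x)p(x))^{\varphi(x)}$, the denominator is $(\omega\models p)^{K}$ by item~\eqref{MultinomialUpdateThmVal}, and since $\sum_{x}\varphi(x)=K$ the normaliser distributes as $\prod_{x}(\omega\models p)^{\varphi(x)}$. Dividing turns each base $\omega(x)p(x)$ into $\omega|_{p}(x)=\omega(x)p(x)/(\omega\models p)$, so the result is $\coefm{\varphi}\prod_{x}\omega|_{p}(x)^{\varphi(x)} = \multinomial[K](\omega|_{p})(\varphi)$.

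Items~\eqref{MultinomialUpdateThmPmlVal} and~\eqref{MultinomialUpdateThmPmlUpd} I would deduce from the second definition~\eqref{ParMulnomLawSecondEqn}, which presents $\pml(\sum_i n_i\ket{\omega_i})$ as the pushforward along the sum $+$ of the product $\bigotimes_i \multinomial[n_i](\omega_i)$. For~\eqref{MultinomialUpdateThmPmlVal}, multiplicativity of $\overline{p}$ means that pulling $\overline{p}$ back along $+$ gives the product predicate $(\varphi_i)_i \mapsto \prod_i\overline{p}(\varphi_i)$; hence the validity factorises as $\prod_i\big(\multinomial[n_i](\omega_i)\models\overline{p}\big)$, and item~\eqref{MultinomialUpdateThmVal} rewrites each factor as $(\omega_i\models p)^{n_i}$. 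For~\eqref{MultinomialUpdateThmPmlUpd} I would compute the $\overline{p}$-update pointwise at a multiset $\chi\in\Mlt[K](X)$: by~\eqref{ParMulnomLawSecondEqn} the numerator is $\overline{p}(\chi)\sum_{\sum_i\varphi_i=\chi}\prod_i\multinomial[n_i](\omega_i)(\varphi_i)$, and multiplicativity lets me push $\overline{p}(\chi)=\prod_i\overline{p}(\varphi_i)$ inside each summand. Dividing by the normaliser $\prod_i(\omega_i\models p)^{n_i}$ from~\eqref{MultinomialUpdateThmPmlVal}, each factor becomes $\multinomial[n_i](\omega_i)(\varphi_i)\overline{p}(\varphi_i)/(\omega_i\models p)^{n_i}$, which is $\multinomial[n_i](\omega_i)|_{\overline{p}}(\varphi_i)=\multinomial[n_i](\omega_i|_{p})(\varphi_i)$ by items~\eqref{MultinomialUpdateThmVal} and~\eqref{MultinomialUpdateThmUpd}. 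Recognising the resulting sum again through~\eqref{ParMulnomLawSecondEqn} yields $\pml(\sum_i n_i\ket{\omega_i|_{p}})(\chi)$.

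The main obstacle is purely the bookkeeping in item~\eqref{MultinomialUpdateThmPmlUpd}: I must ensure that the single global normaliser of the $\overline{p}$-update factors as a product over the index $i$ — which is precisely the content of item~\eqref{MultinomialUpdateThmPmlVal} — so that it can be matched against the per-block normalisers $(\omega_i\models p)^{n_i}$, turning each block into an updated multinomial. Conceptually this is the statement that conditioning a parallel product of multinomials on a predicate that factors across blocks equals conditioning each block separately; everything else reduces to the multinomial theorem of item~\eqref{MultinomialUpdateThmVal} together with the homomorphism property of $\overline{p}$.
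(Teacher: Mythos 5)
Your proposal is correct and follows essentially the same route as the paper: the multinomial theorem for item~\eqref{MultinomialUpdateThmVal}, a pointwise computation distributing the normaliser $(\omega\models p)^{K}$ over $\prod_{x}(\cdot)^{\varphi(x)}$ for item~\eqref{MultinomialUpdateThmUpd}, and the second formulation~\eqref{ParMulnomLawSecondEqn} together with the monoid-homomorphism property of $\overline{p}$ for items~\eqref{MultinomialUpdateThmPmlVal} and~\eqref{MultinomialUpdateThmPmlUpd}. The only (harmless) difference is that in the last item you invoke item~\eqref{MultinomialUpdateThmUpd} to identify each block, where the paper re-expands the product explicitly.
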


An example that combines multinomials and updating occurs
in~\cite[\S6.4]{Ross18}, but without a general formulation,
as given above, in point~\eqref{MultinomialUpdateThmUpd}.


\begin{proof}
The first point follows from the Multinomial Theorem. Then,
for $\varphi\in\natMlt[K](X)$,
\[ \begin{array}{rcl}
\multinomial[K](\omega)\big|_{\overline{p}}(\varphi)
& = &
\displaystyle
   \frac{\multinomial[K](\omega)(\varphi) \cdot \overline{p}(\varphi)}
        {\multinomial[K](\omega) \models \overline{p}}
\\[+1em]
& \smash{\stackrel{\eqref{MultinomialUpdateThmVal}}{=}} &
\displaystyle \coefm{\varphi} \cdot
   \frac{\prod_{x} \omega(x)^{\varphi(x)}\cdot \prod_{x} p(x)^{\varphi(x)}}
        {(\omega\models p)^{K}}
\\[+0.8em]
& = &
\displaystyle \coefm{\varphi} \cdot
   \frac{\prod_{x} \big(\omega(x)\cdot p(x)\big)^{\varphi(x)}}
        {(\omega\models p)^{K}}
\\[+0.8em]
& = &
\displaystyle \coefm{\varphi} \textstyle \cdot
   {\displaystyle\prod}_{x}\, \displaystyle
   \left(\frac{\omega(x)\cdot p(x)}{\omega\models p}\right)^{\varphi(x)}
\\[+1em]
& = &
\displaystyle \coefm{\varphi} \textstyle \cdot
   {\displaystyle\prod}_{x}\, \omega|_{p}(x)^{\varphi(x)}
\\[+0.8em]
& = &
\multinomial[K](\omega|_{p})(\varphi).
\end{array} \]

\noindent For point~\eqref{MultinomialUpdateThmPmlVal} we use the second
formulation~\eqref{ParMulnomLawSecondEqn} of $\pml$ in:
\[ \begin{array}{rcl}
\lefteqn{\pml\big(\sum_{i}n_{i}\ket{\omega_i}\big)\models\overline{p}}
\\
& = &
\displaystyle\sum_{i,\varphi_{i}\in\natMlt[n_i](X)} \textstyle
   \Big(\prod_{i}\, \multinomial[n_{i}](\omega_{i})(\varphi_{i})\Big)\cdot
   \overline{p}\big(\sum_{i}\varphi_{i}\big)
\\[+1.3em]
& = &
\displaystyle\sum_{i,\varphi_{i}\in\natMlt[n_i](X)} \textstyle
   \Big(\prod_{i}\, \multinomial[n_{i}](\omega_{i})(\varphi_{i})\Big)\cdot
   \Big(\prod_{i} \overline{p}(\varphi_{i})\Big)
\\[+0.8em]
& & \qquad \mbox{since $\overline{p}$ is by definition a map of monoids}
\\[+0.5em]
& = &
\displaystyle\sum_{i,\varphi_{i}\in\natMlt[n_i](X)} \textstyle
   \prod_{i}\, \multinomial[n_{i}](\omega_{i})(\varphi_{i}) 
   \cdot \overline{p}(\varphi_{i})
\\[+1.3em]
& = &
{\displaystyle\prod}_{i} \, 
   \displaystyle\sum_{\varphi_{i}\in\natMlt[n_i](X)} \, 
   \multinomial[n_{i}](\omega_{i})(\varphi_{i}) \cdot \overline{p}(\varphi_{i})
\\[+1.3em]
& = &
{\displaystyle\prod}_{i} \, 
   \multinomial[n_{i}](\omega_{i}) \models \overline{p}
\\[+0.8em]
& \smash{\stackrel{\eqref{MultinomialUpdateThmVal}}{=}} &
{\displaystyle\prod}_{i} \, \big(\omega_{i} \models p\big)^{n_i}.
\end{array} \]

\noindent This formula for validity is used to prove
the final equation about updating. Let $K = \sum_{i}n_{i}$ in:
\[ \hspace*{-2.5em}\begin{array}[b]{rcl}
\lefteqn{\textstyle\pml\big(\sum_{i}n_{i}\ket{\omega_i}\big)\big|_{\overline{p}}}
\\[+0.5em]
& = &
\displaystyle\sum_{\varphi\in\natMlt[K](X)}
   \frac{\pml\big(\sum_{i}n_{i}\ket{\omega_i}\big)(\varphi) \cdot 
         \overline{p}(\varphi)}
        {\pml\big(\sum_{i}n_{i}\ket{\omega_i}\big) \models \overline{p}}
   \,\bigket{\varphi}
\\[+1.3em]
& = &
\displaystyle\sum_{i,\varphi_{i}\in\natMlt[n_i](X)} 
   \frac{\prod_{i}\, \multinomial[n_{i}](\omega_{i})(\varphi_{i})\cdot
   \overline{p}(\varphi_{i})}
   {\prod_{i}\, (\omega_{i}\models p)^{n_i}} \textstyle
   \,\bigket{\sum_{i}\varphi_{i}}
\\[+1em]
& = &
\displaystyle\sum_{i,\varphi_{i}\in\natMlt[n_i](X)} \textstyle
   {\displaystyle\prod}_{i}\; \coefm{\varphi_i}\cdot \displaystyle
   \frac{\prod_{x}\, (\omega(x)\cdot p(x))^{\varphi_{i}(x)}}
   {(\omega_{i}\models p)^{n_i}} \textstyle
   \,\rlap{$\bigket{\sum_{i}\varphi_{i}}$}
\\[+1em]
& = &
\displaystyle\sum_{i,\varphi_{i}\in\natMlt[n_i](X)} \textstyle
   {\displaystyle\prod}_{i} \coefm{\varphi_i}\cdot 
   {\displaystyle\prod}_{x} \displaystyle\!
   \left(\frac{\omega(x)\cdot p(x)}{\omega_{i}\models p}\right)^{\!\varphi_{i}(x)}
   \textstyle\!\rlap{$\bigket{\sum_{i}\varphi_{i}}$}
\\[+1.3em]
& = &
\displaystyle\sum_{i,\varphi_{i}\in\natMlt[n_i](X)} \textstyle
   {\displaystyle\prod}_{i}\; \coefm{\varphi_i}\cdot 
   {\displaystyle\prod}_{x} \, \omega_{i}|_{p}(x)^{\varphi_{i}(x)}
   \,\bigket{\sum_{i}\varphi_{i}}
\\[+1.3em]
& = &
\displaystyle\sum_{i,\varphi_{i}\in\natMlt[n_i](X)} \textstyle
   {\displaystyle\prod}_{i}\; \multinomial[n_{i}](\omega_{i}|_{p})(\varphi_{i})
   \,\bigket{\sum_{i}\varphi_{i}}
\\[+1.3em]
& = &
\pml\big(\sum_{i}n_{i}\ket{\,\omega_{i}|_{p}}\big).
\end{array} \eqno{\QEDbox} \]

\auxproof{
\[ \begin{array}{rcl}
\multinomial[K](\omega) \models \overline{p}
& = &
\displaystyle\sum_{\varphi\in\natMlt[K](X)}
   \multinomial[K](\omega)(\varphi) \cdot \overline{p}(\varphi)
\\[+1.3em]
& = &
\displaystyle\sum_{\varphi\in\natMlt[K](X)} \coefm{\varphi} \textstyle \cdot
   \left({\displaystyle\prod}_{x}\, \omega(x)^{\varphi(x)}\right) \cdot
   \left({\displaystyle\prod}_{x}\, p(x)^{\varphi(x)}\right)
\\[+1.3em]
& = &
\displaystyle\sum_{\varphi\in\natMlt[K](X)} \coefm{\varphi} \textstyle \cdot
   {\displaystyle\prod}_{x}\, \big(\omega(x)\cdot p(x)\big)^{\varphi(x)}
\\[+1.3em]
& \smash{\stackrel{\eqref{MultinomTheoremEqn}}{=}} &
\big(\sum_{x} \omega(x)\cdot p(x)\big)^{K}
\\
& = &
\big(\omega\models p\big)^{K}.
\end{array} \]
}
\end{proof}

\section{Final remarks}\label{ConclSec}

This paper has given an exposition on the distributivity of multisets
over probability distributions, via what has been named the parallel
multinomial law. It demonstrates (once more) that basic results in
probability theory find a natural formulation in the categorical
framework of channels / Kleisli-maps. Highlights of this paper are the
commutation of the parallel multinomial law with hypergeometric
distributions (Theorem~\ref{ParMulnomLawHypergeomThm}), and the
multizip operation that interacts smoothly with frequentist learning,
(parallel) multinomials and hypergeometrics, see
Theorems~\ref{MzipFlrnThm} and~\ref{MzipMulnomHypergeomThm} and
Lemma~\ref{ParMulnomLawZipLem}. The results make the $K$-sized
multiset functor on $\Kl(\Dst)$ monoidal and also make various natural
transformations monoidal. This may enrich axiomatic approaches to
probability like in~\cite{Fritz20}. The channel-based approach could
also be of more practical interest for popular probabilistic
computation tools like SPSS or R.

The probabilistic channels in this paper have been implemented in
Python (via the EfProb library~\cite{ChoJ17}). This greatly helped in
grasping what's going on, and for testing results (with random
inputs). The commuting diagrams (of channels) in this paper may look
simple, but the computations involved are quite complicated and
quickly consume many resources (and then run out of memory or take a
long time).

This work offers ample room for extension, for instance with P\'olya
channels (like in~\cite{JacobsS20}) or to continuous probability (like
in~\cite{DashS20}).





\end{document}